\documentclass{article}
\usepackage{graphicx} 
\usepackage{subcaption}
\usepackage[
    backend=biber,
    style=authoryear,
    sorting=nyt,
    natbib=true,
    uniquename=false
    ]{biblatex}
\usepackage{amsmath,amsfonts,amssymb,amsthm,mathrsfs,mathtools}
\usepackage{bm} 
\usepackage{todonotes}
\usepackage{accents} 
\usepackage{appendix}

\usepackage[ruled,vlined]{algorithm2e}
\usepackage{algpseudocode}

\theoremstyle{plain}
\newtheorem{thm}{Theorem}[section]
\newtheorem{lem}{Lemma}[section]

\usepackage{bm}

\usepackage{fullpage}
\usepackage{setspace}

\theoremstyle{definition}
\newtheorem{assu}{Assumption}[section]
\newtheorem{defn}{Definition}[section]

\newtheorem{exmp}{Example}[section]

\theoremstyle{definition}
\newtheorem{remark}{Remark}[section]

\AtEndEnvironment{assu}{\pushQED{\qed}\popQED}
\AtEndEnvironment{exmp}{\pushQED{\qed}\popQED}
\renewcommand{\qedsymbol}{$\square$}

\AtBeginEnvironment{thm}{\par\addvspace{\bigskipamount}}
\AtBeginEnvironment{lem}{\par\addvspace{\bigskipamount}}
\AtBeginEnvironment{prop}{\par\addvspace{\bigskipamount}}
\AtBeginEnvironment{cor}{\par\addvspace{\bigskipamount}}
\AtBeginEnvironment{assu}{\par\addvspace{\bigskipamount}}
\AtBeginEnvironment{defn}{\par\addvspace{\bigskipamount}}
\AtBeginEnvironment{conj}{\par\addvspace{\bigskipamount}}
\AtBeginEnvironment{exmp}{\par\addvspace{\bigskipamount}}
\AtBeginEnvironment{remark}{\par\addvspace{\bigskipamount}}

\usepackage[colorlinks=true,citecolor=blue,linkcolor=blue,urlcolor=blue]{hyperref} 
\usepackage{cleveref}

\crefname{defn}{Definition}{Definitions}
\Crefname{defn}{Definition}{Definitions}
\crefname{lem}{Lemma}{Lemmas}
\Crefname{lem}{Lemma}{Lemmas}
\crefname{thm}{Theorem}{Theorems}
\Crefname{thm}{Theorem}{Theorems}
\crefname{cor}{Corollary}{Corollaries}
\Crefname{cor}{Corollary}{Corollaries}
\crefname{assu}{Assumption}{Assumptions}
\Crefname{assu}{Assumption}{Assumptions}
\crefname{exmp}{Example}{Examples}
\Crefname{exmp}{Example}{Examples}
\crefname{remark}{Remark}{Remarks}
\Crefname{remark}{Remark}{Remarks}
\crefname{appendix}{Appendix}{Appendices}
\Crefname{appendix}{Appendix}{Appendices}

\usepackage{appendix}

\title{Locally Equivalent Weights for\\ Multilevel Regression and Poststratification\thanks{Corresponding author: \texttt{rgiordano@berkeley.edu} .} }
\author{Ryan Giordano\\UC Berkeley \and Alice Cima\\UC Berkeley \and Jared Murray\\UT Austin \and Erin Hartman\\UC Berkeley \and Avi Feller\\UC Berkeley}

\usepackage[]{graphicx}
\usepackage[]{color}

\makeatletter
\def\maxwidth{ %
  \ifdim\Gin@nat@width>\linewidth
    \linewidth
  \else
    \Gin@nat@width
  \fi
}
\makeatother

\definecolor{fgcolor}{rgb}{0.345, 0.345, 0.345}

\usepackage{framed}
\makeatletter
 {\par\unskip\endMakeFramed%
 \at@end@of@kframe}
\makeatother

\definecolor{shadecolor}{rgb}{.97, .97, .97}
\definecolor{messagecolor}{rgb}{0, 0, 0}
\definecolor{warningcolor}{rgb}{1, 0, 1}
\definecolor{errorcolor}{rgb}{1, 0, 0}
\newenvironment{knitrout}{}{} 

\usepackage{alltt}

\def\argmax#1{\mathrm{argmax}_{#1}\,}

\def\expect#1#2{\mathbb{E}_{#1}\left[ #2\right]\,}
\newcommand{\expecty}[2][\x]{\expect{\p(#2 \vert #1)}{#2}}
\def\cov#1#2{\mathrm{Cov}_{#1}\left( #2\right)\,}
\def\var#1#2{\mathrm{Var}_{#1}\left( #2\right)\,}
\def\cumulant#1#2{\mathcal{K}_{#1}\left( #2\right)\,}

\newcommand{\fracat}[3]{\left. \frac{#1}{#2} \right|_{#3}}
\newcommand{\norm}[1]{\left\Vert#1\right\Vert}
\newcommand{\abs}[1]{\left|#1\right|}

\def\ord#1{\mathcal{O}\left( #1\right)}
\def\ordp#1{\mathcal{O}_p\left( #1\right)}
\def\gauss#1{\mathcal{N}\left( #1\right)}
\def\trans{\intercal} 
\def\id{\bm{I}} 
\def\iid{\overset{iid}{\sim}}
\def\plim{\overset{prob}{\rightarrow}}
\def\dlim{\overset{dist}{\rightarrow}}
\def\eqcheck{\overset{\textrm{check}}{\approx}}
\def\rdom#1{\mathbb{R}^{#1}}
\def\ind#1{\mathbb{I}\left( #1\right)}
\def\ordp#1{\mathcal{O}_p(#1)}
\def\trace#1{\mathrm{Trace}\left( #1 \right)\,}
\def\minev#1{\lambda_{\mathrm{min}}\left( #1 \right)\,}

\newcommand{\tiltil}[1]{\accentset{\approx}{#1}} 

\def\zerov{\bm{0}}

\def\sur{S}
\def\tar{T}

\def\nsur{N_{\sur}}
\def\ntar{N_{\tar}}

\def\sumsur{\sum_{i \in [\nsur]}}
\def\sumtar{\sum_{j \in [\ntar]}}
\def\meansur{\frac{1}{\nsur} \sum_{i \in [\nsur]}}
\def\meantar{\frac{1}{\ntar} \sum_{j \in [\ntar]}}

\def\psur{{\mathcal{P}_{S}}}
\def\ptar{{\mathcal{P}_{T}}}

\def\p{\mathcal{P}}
\newcommand{\post}[1][\Y]{\p(\betav \vert #1)}
\newcommand{\postd}[1][\delta \r]{\p(\betav \vert \Y; #1)}
\def\deltatil{\tilde{\delta}} 
\newcommand{\postdtil}{\postd[\deltatil \r]}

\def\bayes{\mathrm{Bayes}}
\def\mrp{\mathrm{MrP}}
\def\mrplew{\mathrm{MrPlew}}
\def\ols{\mathrm{OLS}}
\def\cal{\mathrm{CW}}

\newcommand{\muhat}[1][]{\hat{\mu}^{#1}}
\newcommand{\w}[1][]{w^{#1}}
\newcommand{\W}[1][]{\bm{W}^{#1}}

\def\f{f}
\def\m{m} 
\def\mhat{\hat{\m}}

\def\betav{\bm{\beta}}
\def\betahat{\hat{\betav}} 
\def\betavhat{\hat{\betav}} 
\def\betastar{\betav^*} 
\def\betadom{\rdom{D_{\betav}}}

\def\g{g} 
\def\info{\mathcal{I}}
\def\infohat{\hat{\info}}
\def\mrpvar{V} 
\def\mrpvarhat{\hat{V}} 

\def\A{A} 
\newcommand{\betagrad}[1][]{\nabla_{\betav}^{#1}}
\def\Agrad#1{\nabla_\eta^{#1}A} 
\def\betaball{\mathcal{B}_{\Delta}} 
\def\deltamax{\delta_{+}} 
\def\deltadom{[0, \deltamax]} 

\def\z{\bm{z}}

\def\x{\bm{x}}
\def\X{\bm{X}}
\def\Xtar{\X_{\tar}}

\def\piv{\bm{\pi}}

\def\r{r}  
\def\R{\bm{R}}  
\def\rset{\mathcal{R}} 
\def\rmax{\mathcal{R}_{\mathrm{max}}} 

\def\rdim{D_r}  

\def\imbalance{\mathrm{Imbalance}}

\def\A{\mathcal{A}} 

\def\y{y}
\def\Y{\bm{Y}}
\def\yhat{\hat{y}}
\def\ytil{\tilde{y}}
\def\ytiltil{\breve{y}}
\def\Ytil{\tilde{\Y}}
\def\Ytiltil{\breve{\Y}}

\def\resid{\mathcal{E}}

\def\xcovhat{\hat{\mathbf{M}}_{xx}}
\def\xscovhat{\widetilde{\mathbf{M}}_{xx}}
\def\xycovhat{\hat{\mathbf{M}}_{xy}}
\def\resv{\bm{\varepsilon}}

\def\betacov{\bm{\Sigma}}

\def\mlogit{\m^{\textrm{logit}}}
\def\vhat{\hat{v}}
\def\V{\bm{V}}
\def\Vtar{\V_{\tar}}
\def\v{v}

\def\alphav{\bm{\alpha}}

\newcommand{\DefineMacros}{
\newcommand{\AlexanderNSur}{4,364}
\newcommand{\AlexanderNTar}{64,784,329}
\newcommand{\AlexanderYbar}{0.462}
\newcommand{\AlexanderMrpMu}{0.288}
\newcommand{\AlexanderRakingMu}{0.263}
\newcommand{\AlexanderNumBoots}{100}
\newcommand{\AlexanderMCMCTimeMins}{95}
\newcommand{\AlexanderMrplewTimeSecs}{664}
\newcommand{\AlexanderMrPSd}{1.06}
\newcommand{\AlexanderRakingSd}{1.39}
\newcommand{\AlexanderMrPPostSd}{1.09}
\newcommand{\StoriesNSur}{4,803}
\newcommand{\StoriesNTar}{406,886}
\newcommand{\StoriesYbar}{0.539}
\newcommand{\StoriesMrpMu}{0.522}
\newcommand{\StoriesRakingMu}{0.522}
\newcommand{\StoriesNumBoots}{100}
\newcommand{\StoriesMCMCTimeMins}{15}
\newcommand{\StoriesMrplewTimeSecs}{5}
\newcommand{\StoriesMrPSd}{0.523}
\newcommand{\StoriesRakingSd}{0.523}
\newcommand{\StoriesMrPPostSd}{0.527}
\newcommand{\LaxNSur}{6,341}
\newcommand{\LaxNTar}{994,486}
\newcommand{\LaxYbar}{0.333}
\newcommand{\LaxMrpMu}{0.457}
\newcommand{\LaxRakingMu}{0.345}
\newcommand{\LaxNumBoots}{100}
\newcommand{\LaxMCMCTimeMins}{39}
\newcommand{\LaxMrplewTimeSecs}{7}
\newcommand{\LaxMrPSd}{1.27}
\newcommand{\LaxRakingSd}{0.583}
\newcommand{\LaxMrPPostSd}{3.07}
\newcommand{\AlexanderColpert}{\texttt{\detokenize{decade_married_rk2009+:educ_group>BA}}}
\newcommand{\AlexanderPctflipped}{2}
\newcommand{\AlexanderShrink}{0}
\newcommand{\LaxColpert}{\texttt{\detokenize{edu.cat2}}}
\newcommand{\LaxPctflipped}{10}
\newcommand{\LaxShrink}{0}

}

\newcommand{\IntroPlot}{

\begin{knitrout}
\definecolor{shadecolor}{rgb}{0.969, 0.969, 0.969}\color{fgcolor}\begin{figure}[H]

{\centering \includegraphics[width=0.98\linewidth,height=0.441\linewidth]{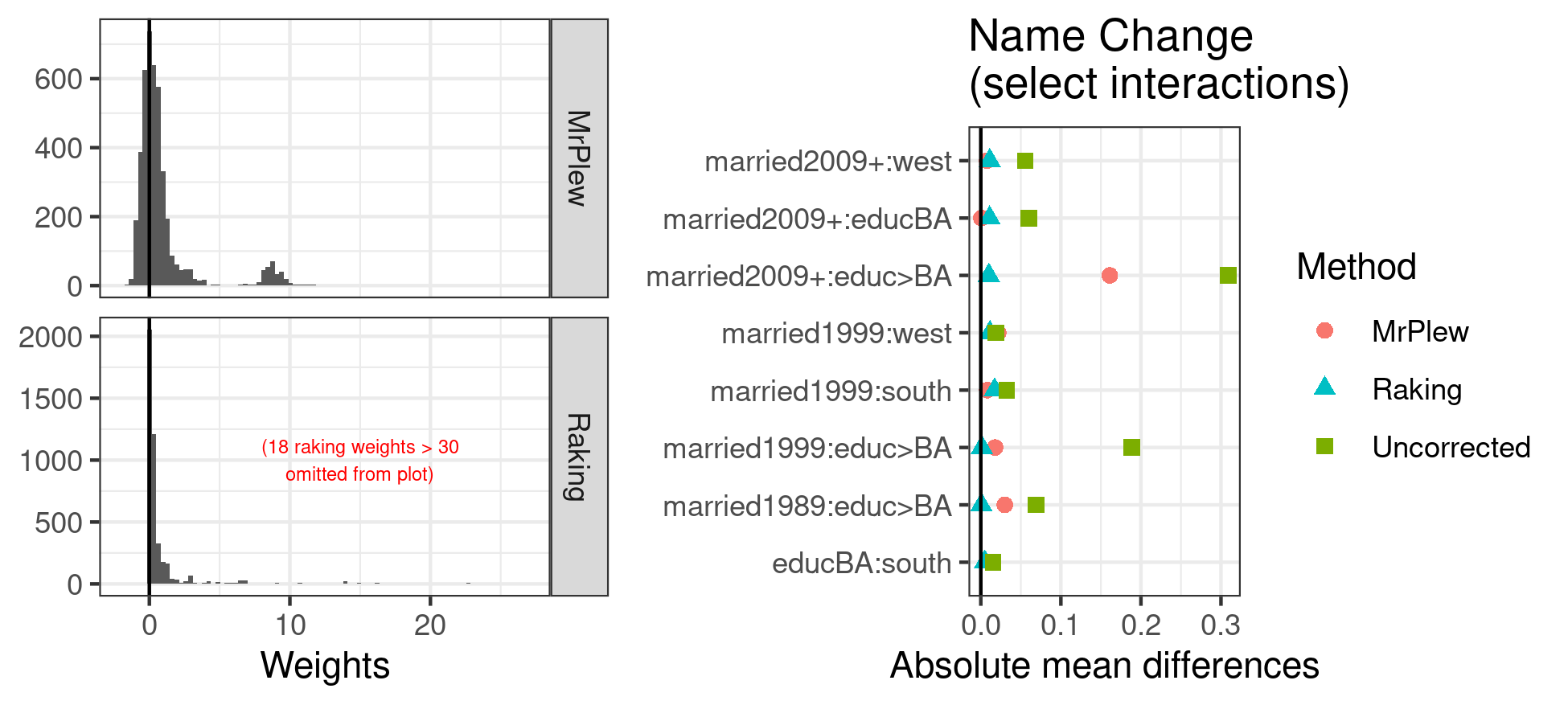} 

}

\caption[]{Preview of MrP Diagnostics made possible by MrPlew for the Name Change analysis}\label{fig:introplot}
\end{figure}

\end{knitrout}
}

\newcommand{\ExperimentTable}{
\begin{tabular}{lccc}
\hline
 & Name Change & Same-Sex Marriage & Election Forecasting \\
\hline
$\nsur$ & $\AlexanderNSur{}$ & $\LaxNSur{}$ & $\StoriesNSur{}$ \\
$\overline{\y}$ & $\AlexanderYbar{}$ & $\LaxYbar{}$ & $\StoriesYbar{}$ \\
$\muhat[\mrp]$ & $\AlexanderMrpMu{}$ & $\LaxMrpMu{}$ & $\StoriesMrpMu{}$ \\
$\muhat[\cal]$ & $\AlexanderRakingMu{}$ & $\LaxRakingMu{}$ & $\StoriesRakingMu{}$ \\
\hline
\end{tabular}

}

\newcommand{\AlexanderResidualTable}{
\begin{table}[!h]
\centering
\caption{\label{tab:alexanderresidtable}Mean response and residuals by interaction value for Name Change}
\centering
\begin{tabular}[t]{|r||l||l|}
\hline
\AlexanderColpert{} & $\overline{y}$ & $\overline{y - \hat{y}}$\\
\hline
0 & 0.412 & -0.001\\
\hline
1 & 0.560 & 0.002\\
\hline
\end{tabular}
\end{table}

}

\newcommand{\VariancePlot}{

\begin{knitrout}
\definecolor{shadecolor}{rgb}{0.969, 0.969, 0.969}\color{fgcolor}\begin{figure}[H]

{\centering \includegraphics[width=0.98\linewidth,height=0.441\linewidth]{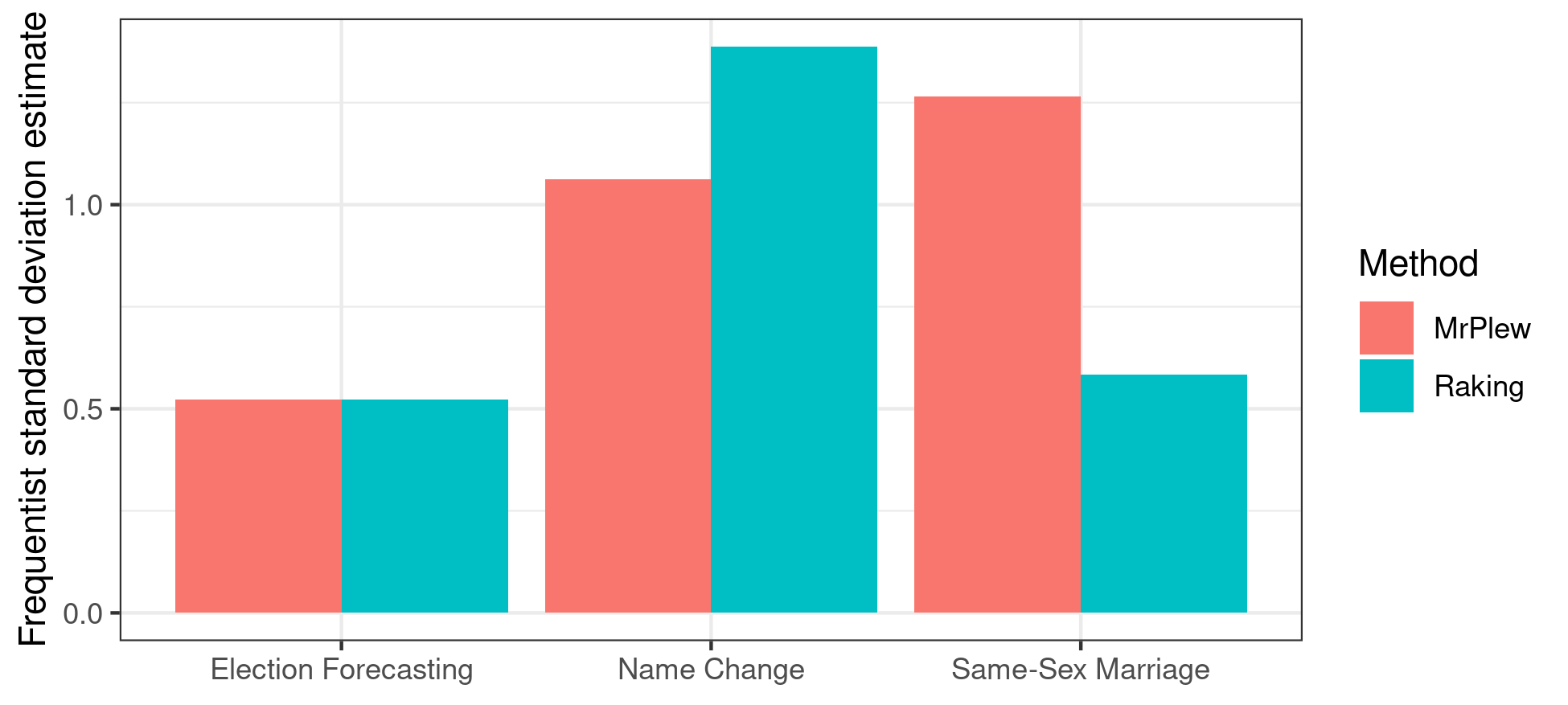} 

}

\caption[]{Estimates of the frequentist standard deviation of $\sqrt{\nsur} \muhat[\mrp]$ and $\sqrt{\nsur} \muhat[\cal]$}\label{fig:varplot}
\end{figure}

\end{knitrout}
}

\newcommand{\BootstrapPlot}{

\begin{knitrout}
\definecolor{shadecolor}{rgb}{0.969, 0.969, 0.969}\color{fgcolor}\begin{figure}[H]

{\centering \includegraphics[width=0.98\linewidth,height=0.441\linewidth]{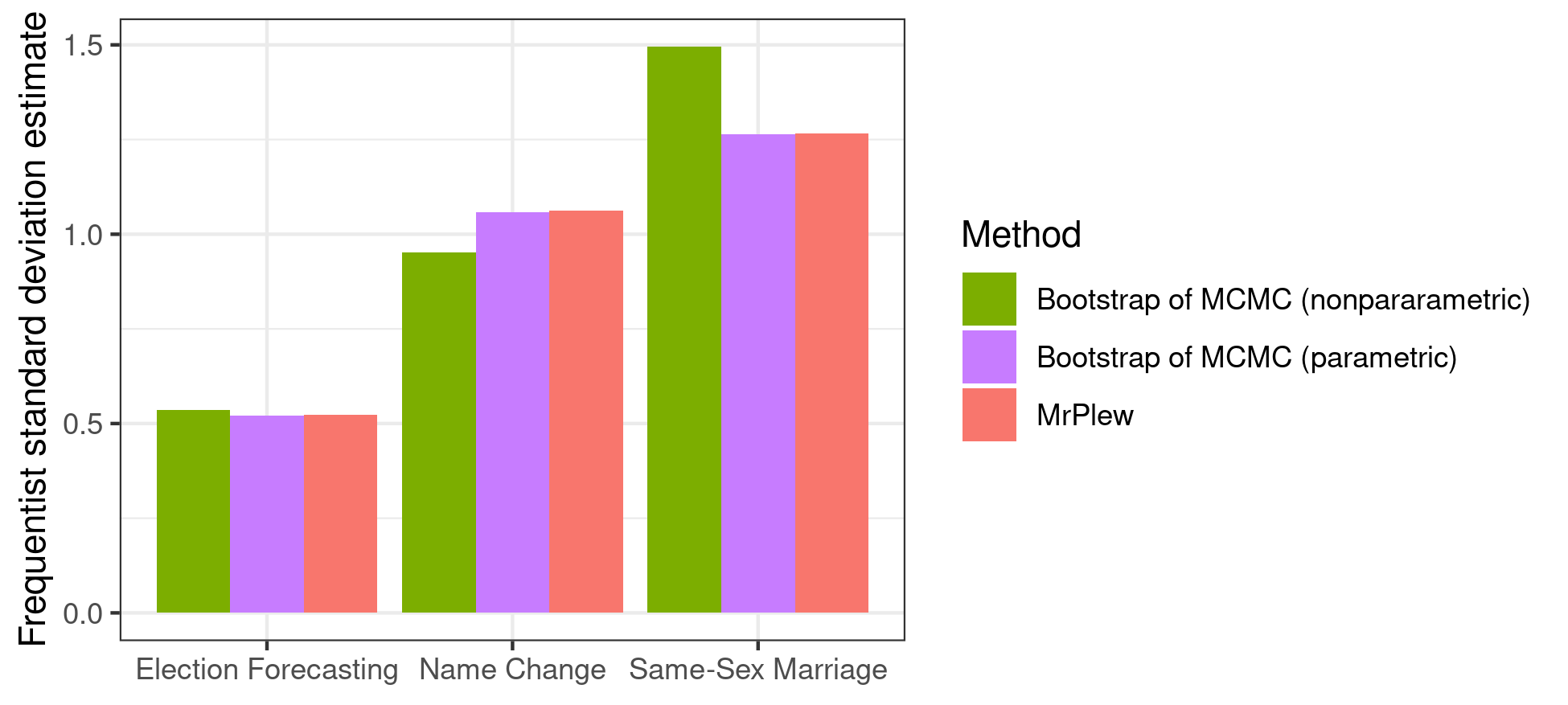} 

}

\caption[]{Estimates of the frequentist standard deviation of $\sqrt{\nsur} \muhat[\mrp]$}\label{fig:bootplot}
\end{figure}

\end{knitrout}
}

\newcommand{\WeightsPlot}{

\begin{knitrout}
\definecolor{shadecolor}{rgb}{0.969, 0.969, 0.969}\color{fgcolor}\begin{figure}[H]

{\centering \includegraphics[width=0.98\linewidth,height=0.441\linewidth]{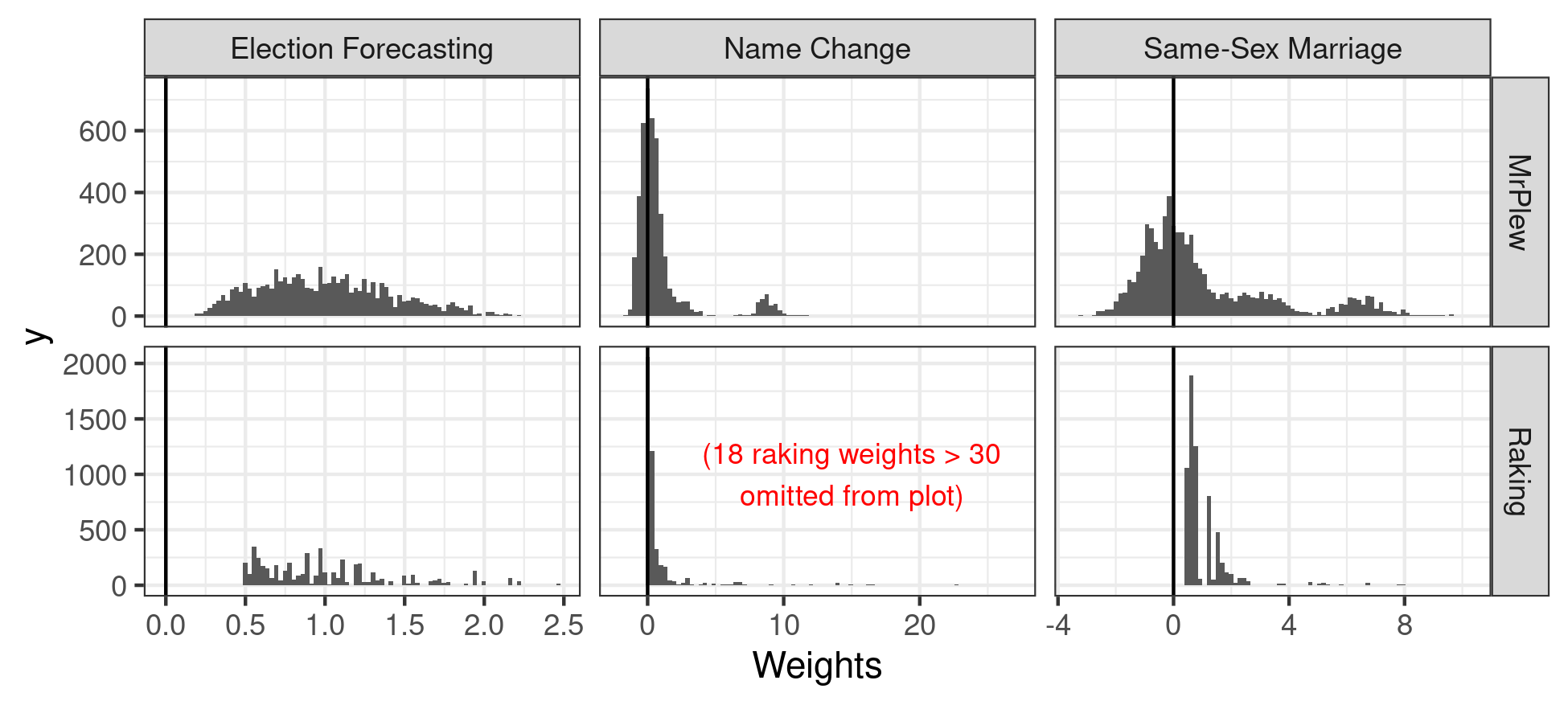} 

}

\caption[]{Comparison of weights}\label{fig:weightsplot}
\end{figure}

\end{knitrout}
}

\newcommand{\AlexanderBalance}{

\begin{knitrout}
\definecolor{shadecolor}{rgb}{0.969, 0.969, 0.969}\color{fgcolor}\begin{figure}[H]

{\centering \includegraphics[width=0.98\linewidth,height=0.441\linewidth]{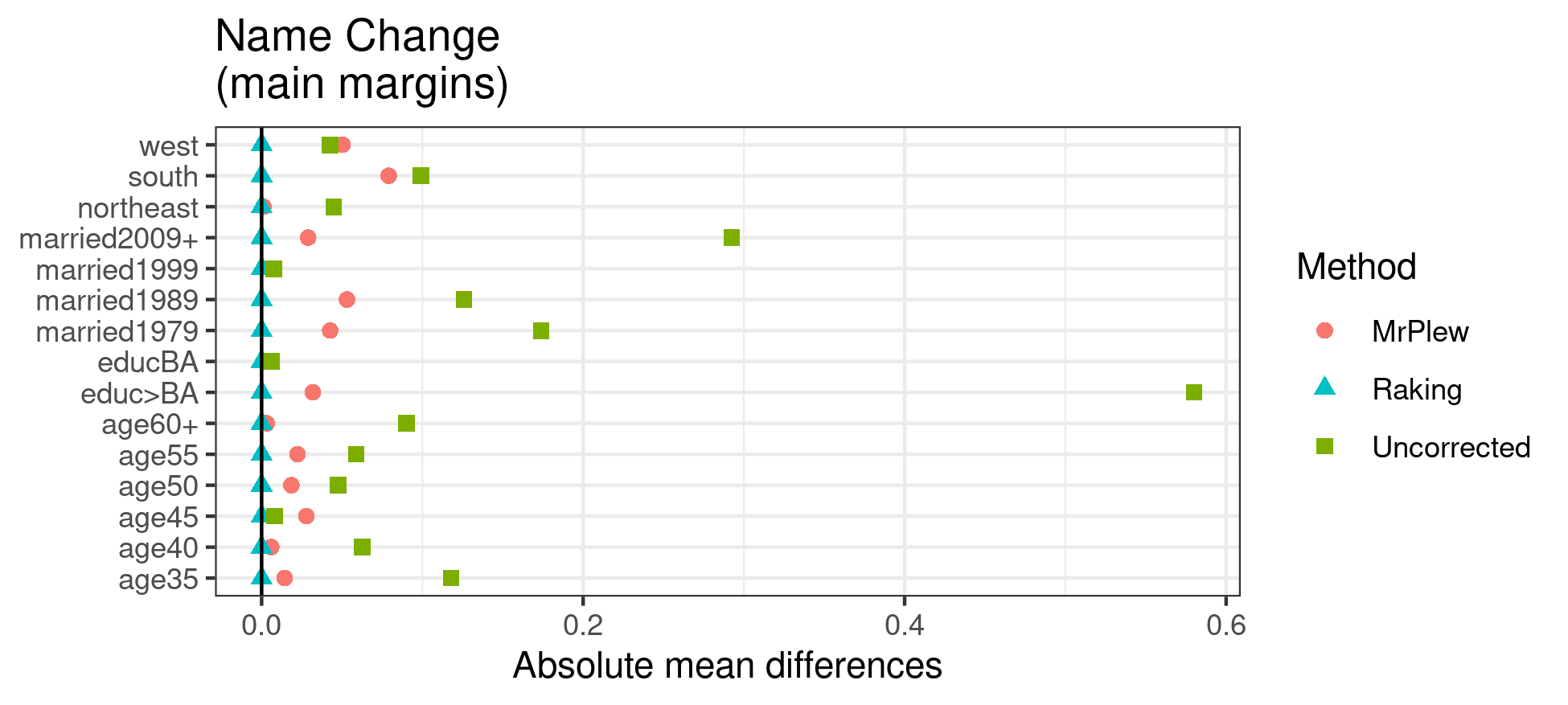} 

}

\caption[]{Balance}\label{fig:alexanderbalanceplot}
\end{figure}

\end{knitrout}
}

\newcommand{\LaxphilipsBalanceInteractions}{

\begin{knitrout}
\definecolor{shadecolor}{rgb}{0.969, 0.969, 0.969}\color{fgcolor}\begin{figure}[H]

{\centering \includegraphics[width=0.98\linewidth,height=0.441\linewidth]{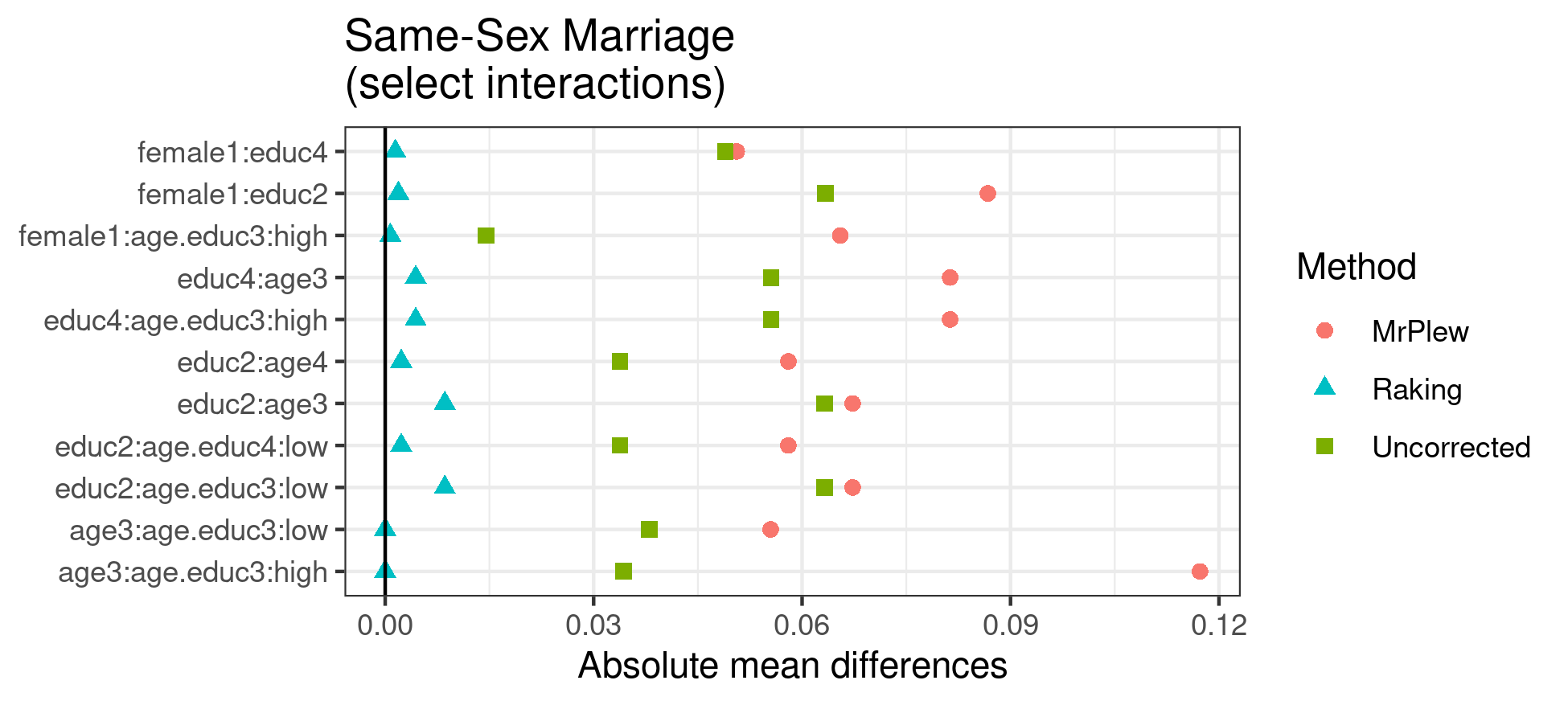} 

}

\caption[]{Balance}\label{fig:laxphilipsbalanceplotinteractions}
\end{figure}

\end{knitrout}
}

\newcommand{\StoriesBalanceInteractions}{

\begin{knitrout}
\definecolor{shadecolor}{rgb}{0.969, 0.969, 0.969}\color{fgcolor}\begin{figure}[H]

{\centering \includegraphics[width=0.98\linewidth,height=0.441\linewidth]{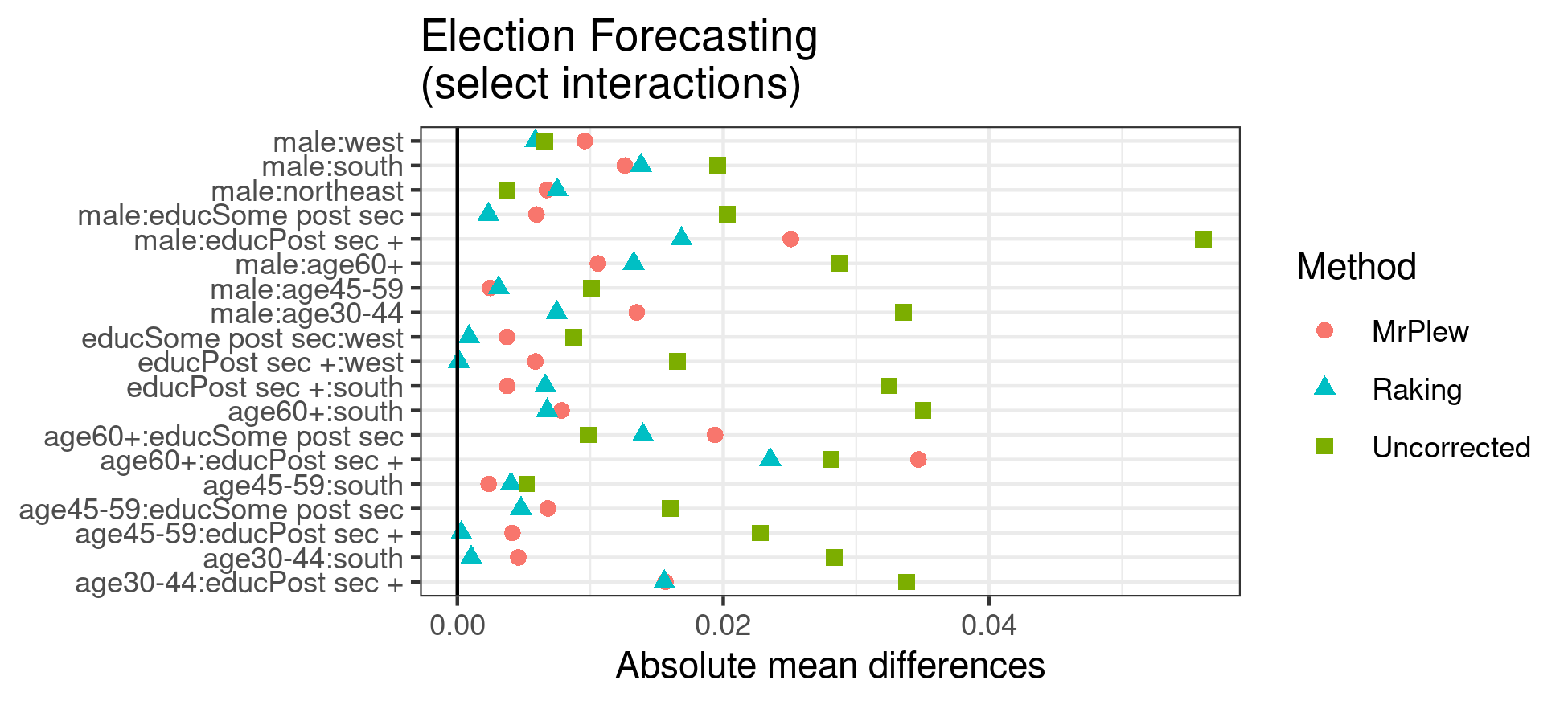} 

}

\caption[]{Balance}\label{fig:storiesbalanceplotinteractions}
\end{figure}

\end{knitrout}
}

\newcommand{\AlexanderRefitPlot}{

\begin{knitrout}
\definecolor{shadecolor}{rgb}{0.969, 0.969, 0.969}\color{fgcolor}\begin{figure}[H]

{\centering \includegraphics[width=0.98\linewidth,height=0.441\linewidth]{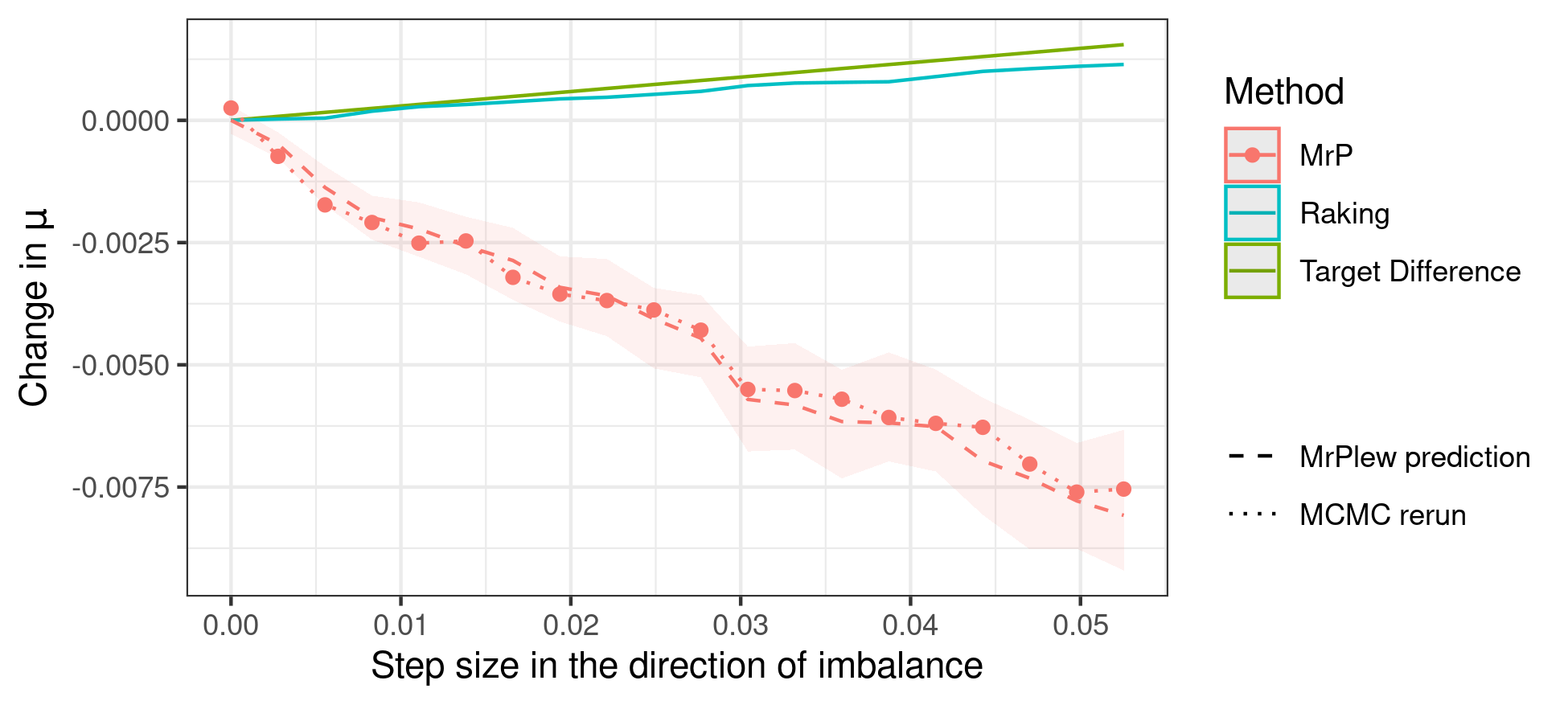} 

}

\caption[]{Refit}\label{fig:alexanderrefitplot}
\end{figure}

\end{knitrout}
}

\newcommand{\LaxRefitPlot}{

\begin{knitrout}
\definecolor{shadecolor}{rgb}{0.969, 0.969, 0.969}\color{fgcolor}\begin{figure}[H]

{\centering \includegraphics[width=0.98\linewidth,height=0.441\linewidth]{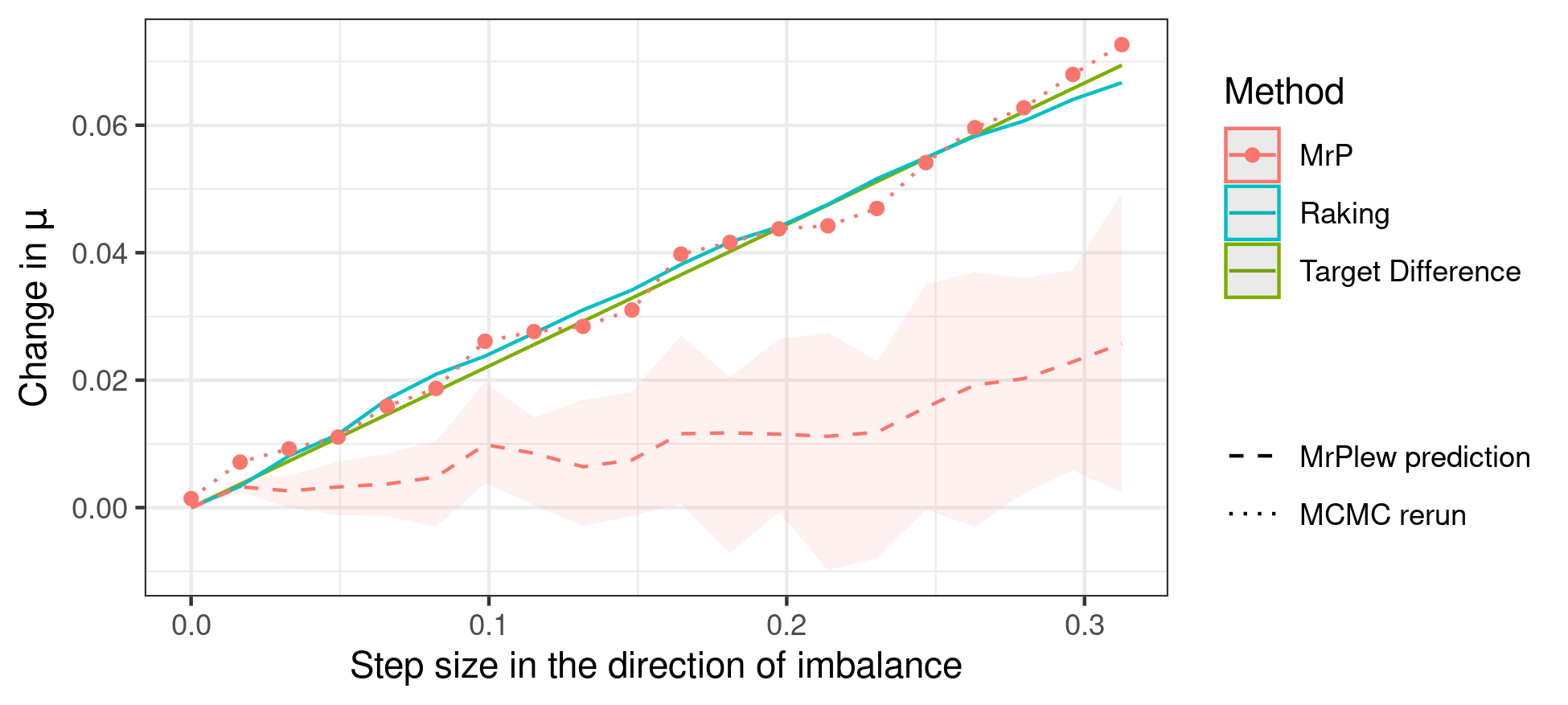} 

}

\caption[]{Refit}\label{fig:laxphilipsrefitplot}
\end{figure}

\end{knitrout}
}

\newcommand{\AlexanderPoolingPlot}{

\begin{knitrout}
\definecolor{shadecolor}{rgb}{0.969, 0.969, 0.969}\color{fgcolor}\begin{figure}[H]

{\centering \includegraphics[width=0.98\linewidth,height=0.441\linewidth]{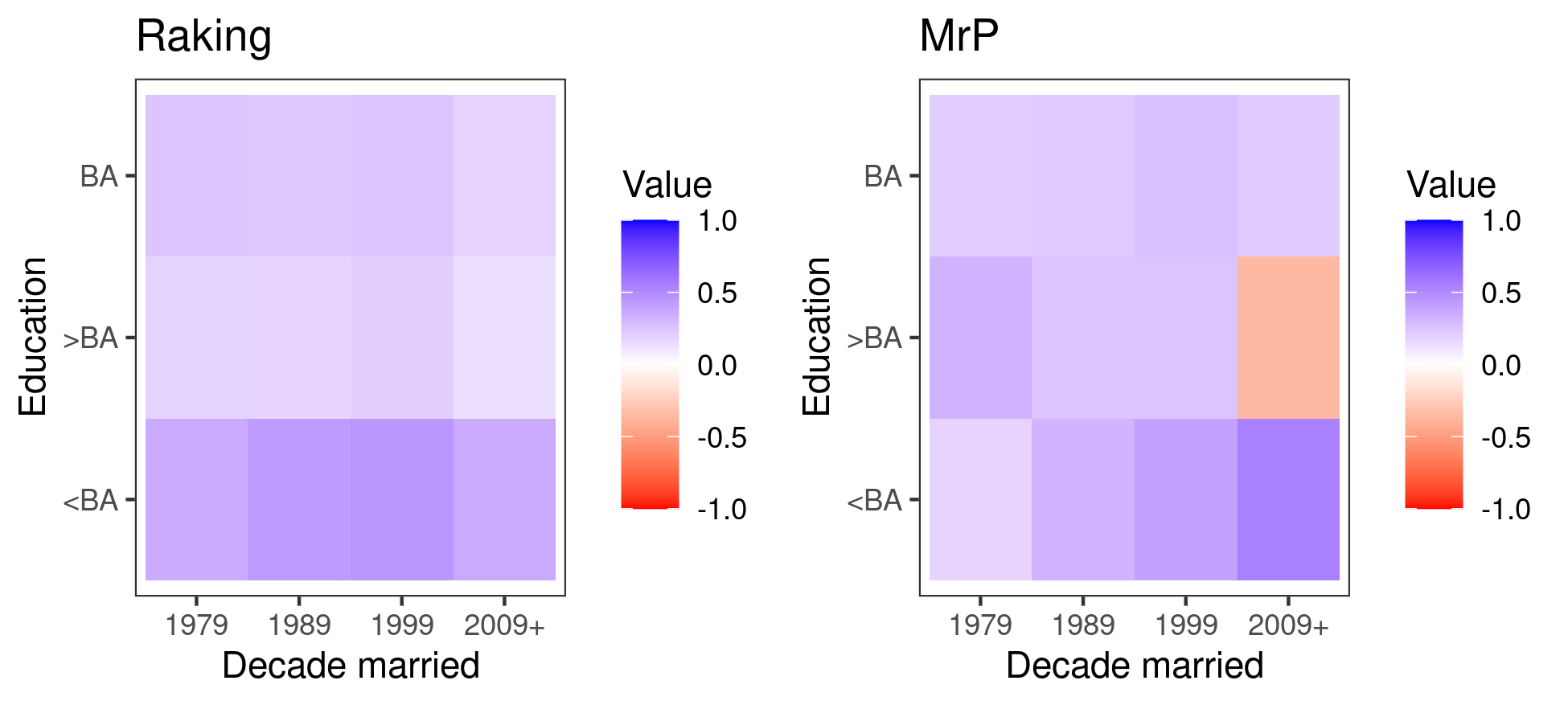} 

}

\caption[]{Subgroup contribution for Name Change}\label{fig:alexanderpoolingplot}
\end{figure}

\end{knitrout}
}

\newcommand{\PoolingComparisonPlot}{

\begin{knitrout}
\definecolor{shadecolor}{rgb}{0.969, 0.969, 0.969}\color{fgcolor}\begin{figure}[H]

{\centering \includegraphics[width=0.98\linewidth,height=0.441\linewidth]{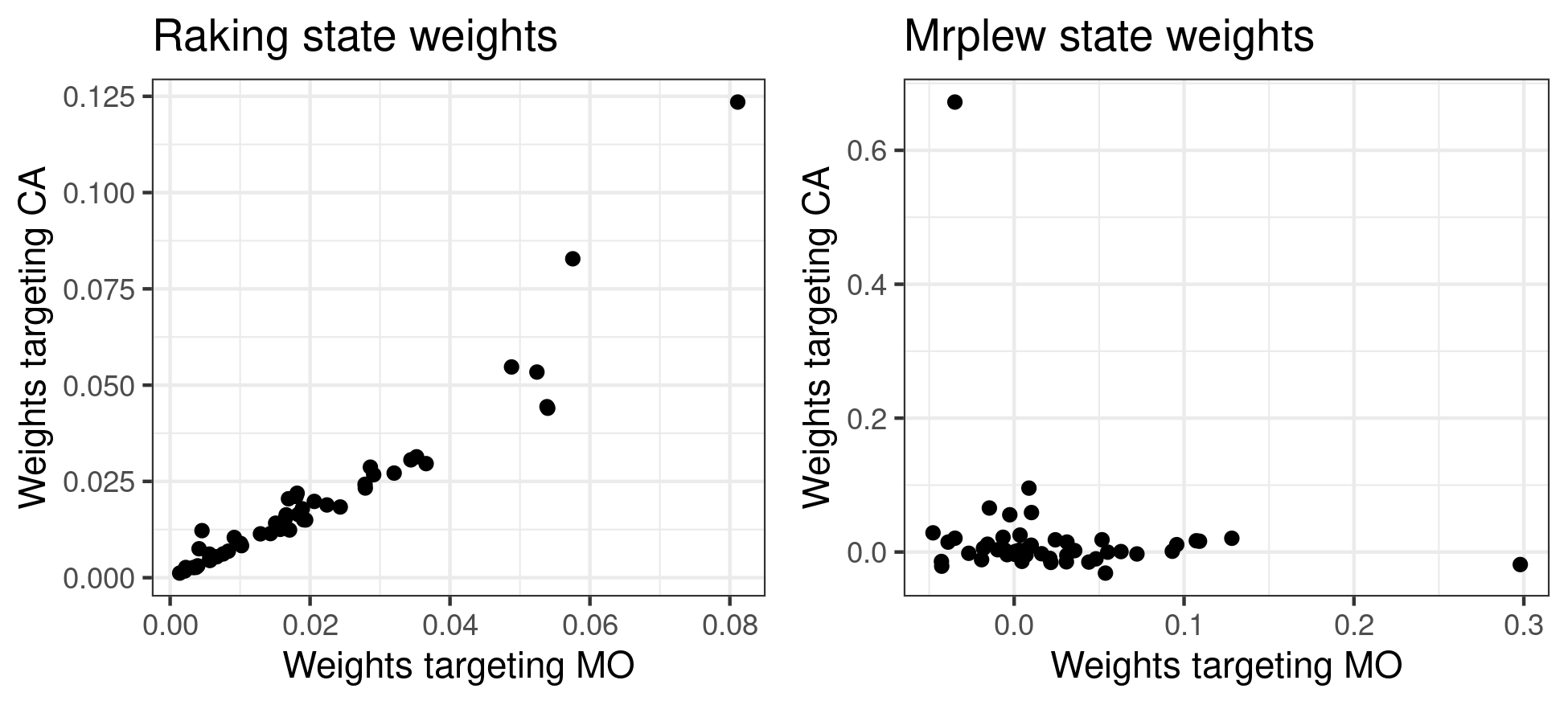} 

}

\caption[]{Different variability in per-state subgroup contribution weights for Same-Sex Marriage. Each point is a state.}\label{fig:poolingcompplot}
\end{figure}

\end{knitrout}
}

\newcommand{\ImportanceSamplingPlot}{

\begin{knitrout}
\definecolor{shadecolor}{rgb}{0.969, 0.969, 0.969}\color{fgcolor}\begin{figure}[H]

{\centering \includegraphics[width=0.98\linewidth,height=0.441\linewidth]{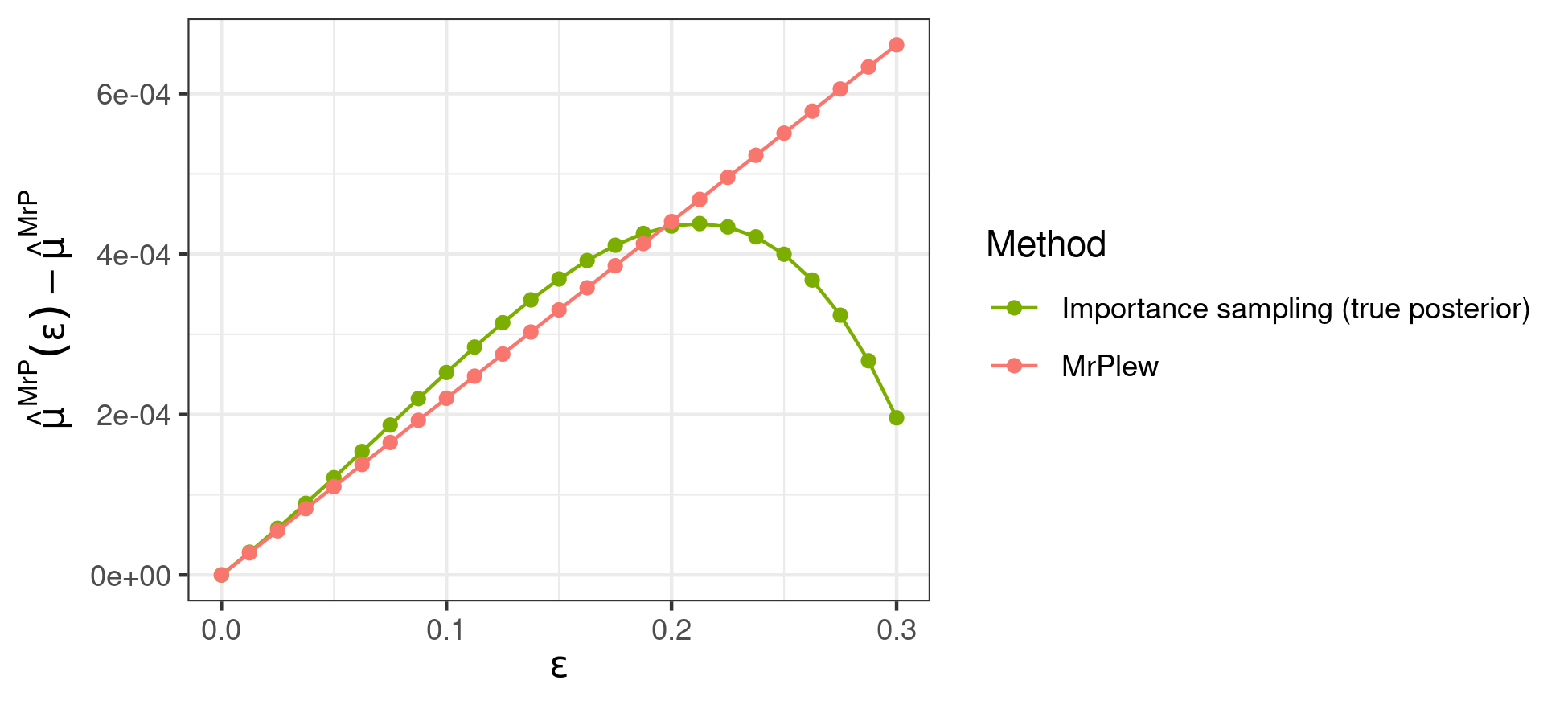} 

}

\caption[]{Local behavior of the Same-Sex Marriage perturbation.  We only consider $\epsilon$ with at least 1000 effective self-normalized importance samples.}\label{fig:isplot}
\end{figure}

\end{knitrout}
}

\newcommand{\AlexanderModelsBalancePlot}{

\begin{knitrout}
\definecolor{shadecolor}{rgb}{0.969, 0.969, 0.969}\color{fgcolor}\begin{figure}[H]

{\centering \includegraphics[width=0.98\linewidth,height=0.980\linewidth]{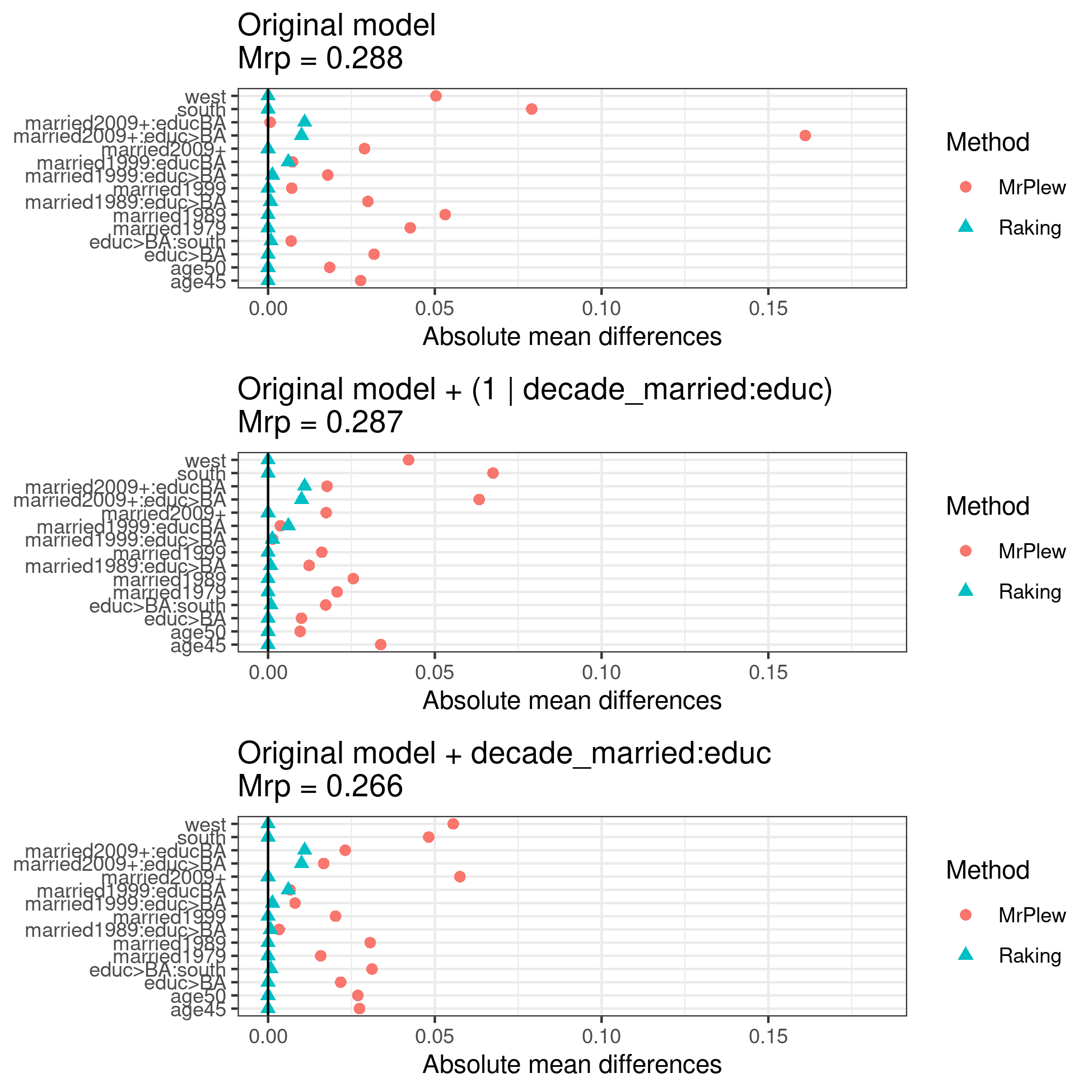} 

}

\caption[]{The effect of adding random and fixed effects for an imbalanced interaction in the Name Change analysis.  Any regressor showing an imbalance in any of the models above a certain threshold is shown in each plot, so regressors that are not shown can be assumed to be reasonably balanced.}\label{fig:alexandermodelsbalance}
\end{figure}

\end{knitrout}

}

\DefineMacros{}

\usepackage{parskip}

\usepackage{float}
\extrafloats{100}

\begin{document}

\maketitle


\begin{abstract}
\noindent Multilevel regression and poststratification (MrP) has become a
workhorse method for estimating population quantities from non-probability
surveys, and is the primary model-based alternative to traditional survey
calibration weighting methods, such as raking.  For simple linear regression
models, MrP methods admit ``equivalent weights'', allowing for direct
comparisons between MrP and traditional calibration weighting. Such weights,
however, have been unavailable for the most widely used MrP models, such as
logistic regression. In this paper, we develop a natural generalization, ``MrP
locally equivalent weights'' (MrPlew), which represent MrP as a weighting-style
estimator that is locally equivalent to calibration weights near the observed
responses. This enables a suite of standard weighting diagnostics, including
frequentist sampling variability, covariate balance, and subgroup contribution.
We formally justify the use of MrPlew in these cases: we prove the MrPlew-based
variance estimator is asymptotically equivalent to the infinitesimal jackknife
for common exponential family models, and we introduce a novel class of model
checks based on invariance to data perturbations that generalize covariate
balance and subgroup contribution to nonlinear models. We further show that
MrPlew can be computed easily using existing MCMC samples and provide
open-source software to compute MrPlew using the output of standard software. We
illustrate our approach for several canonical studies that use MrP, including
via a logistic regression outcome model, showing that implied covariate balance
can sometimes be worse for MrP than for raking. Given the ease of computing, we
recommend making MrPlew a standard part of the MrP model interrogation workflow.
\end{abstract}


\onehalfspacing
\section{Introduction}
Multilevel regression and poststratification \parencite[MrP;][]{Gelman1997} has
become a workhorse method for estimating population quantities from
non-probability surveys, and is the primary model-based alternative to
traditional survey calibration weighting procedures such as raking. MrP adjusts
for survey nonresponse and nonprobability sampling by modeling the relationship between the survey response
and observed covariates (``multilevel regression'') applied to a specific target
(``poststratification''). Estimates are typically obtained from approximate posterior
draws from a nonlinear model, computed via Markov chain Monte Carlo (MCMC).

Calibration weighting (CW) instead constructs estimates as weighted averages of survey responses, where the weights are chosen to exactly or
approximately balance observed covariates subject to some user-defined
dispersion penalty
\parencite{Deville1992,haziza2017weights}. A key advantage
of CW methods relative to MrP methods is their interpretability: visual and
quantitative inspection of the weights gives direct insight into the CW procedure
itself. For example, the variability of CW weights directly determines frequentist sampling variance. Practitioners can use CW weights to check
``covariate balance'' to assess whether the weights correct for differences in
observable quantities between survey and target populations.  CW weights also
directly measure the contribution of particular subgroups, such as
states or demographic groups, to the final estimate.  In short, the weights
themselves serve as a rich set of diagnostic tools.

In certain special cases, MrP estimates can be re-written as CW procedures with
\textit{equivalent weights}, or as a weighting representation of the MrP procedure; examples include simple linear regression models,
Gaussian process (GP) estimation with fixed kernels, and regression trees with
fixed structure
\parencite{gelman:2007:surveystruggles,Park2009,benmichael2024multilevel}. Such
weights enable the whole suite of diagnostics available for calibration
weighting as well as apples-to-apples comparisons for understanding differences
between MrP and CW estimates.
Unfortunately, globally valid equivalent weights are unavailable for the most
widely used MrP models, such as logistic regression and hierarchical models with
estimated random effect variances
\parencite{lopez:2022:mrpintro}. For these models, the MrP estimate is
essentially a computational black box.

In this paper, we propose a natural generalization, ``MrP locally equivalent
weights'' (MrPlew), which represent MrP as a weighting-style estimator similar
to CW, but only for response vectors near the original set of responses, in a
sense we make precise. While MrPlew is not (and cannot be in general) a globally
equivalent weighting representation, we formally justify the use of MrPlew
\emph{as if they were globally equivalent weights} for use in common CW
diagnostics such as frequentist sampling variability, comparisons of the weighted sample to the target population (which we refer to as covariate balance), and
subgroup contribution.  We further show that MrPlew can be computed easily using
existing MCMC samples and provide open-source software to compute MrPlew using
the output of standard MCMC software like \texttt{brms} \parencite{brms}.

To develop the theoretical framework for MrPlew, we make two main technical
contributions. First, we formally prove the asymptotic equivalence of the
MrPlew-based variance estimator with the infinitesimal jackknife variance
estimator \parencite{giordano:2024:bayesij}. This result justifies the use of
MrPlew for assessing frequentist sampling variability, enabling an
apples-to-apples comparison with the sampling variance of CW estimators. Second,
we define a novel class of model checks based on \emph{invariance to data
perturbations}; in the linear case, these reduce to familiar diagnostics. For
the nonlinear case, we formally prove that the MrPlew weights can be used to
assess this invariance locally in an asymptotic regime, uniformly over a large
class of potential perturbations. This result justifies the use of MrPlew for
assessing local covariate balance and subgroup contribution.  We discuss the gap
between our local theoretical results and practically interesting larger
perturbations, and recommend a method for checking the validity of our
diagnostics in practice.

Finally, we apply MrPlew-based model diagnostics to several real-world MrP
analyses. First, we consider an analysis that extrapolates a Twitter survey of
name changes after marriage to the entire US population
\parencite{alexander:2019:namechange,cohen:2019:namechange}. Second, we consider
a textbook example analyzing the 2020 US presidential election
\parencite{alexander:2023:telling}. Third, we consider an example of correcting
sampling bias in a national survey on support for same-sex marriage
\parencite{lax:2009:gay,kastellec:2010:laxmrp}.  Across the three examples, we
directly compare the weights themselves, the frequentist variability, and the
implied covariate balance, finding meaningful divergences, most strikingly in
covariate balance on unmodeled interactions. 

\subsection{Introductory Example: Name Change}
\label{sec:intro_example}


To preview our results, we replicate an MrP analysis from
\citet{alexander:2019:namechange} of the Marital Name Change Survey
\citep[MNCS;][]{cohen:2019:namechange}. The original survey is a convenience
sample from Twitter respondents, and the goal is to estimate the corresponding
rate in the overall US population. Following \citet{alexander:2019:namechange},
we limit the survey data to (self-reported) women married to men ($\nsur =
4{,}364$), and take our response of interest $\y$ to be a binary indicator of
whether the woman retains her surname when marrying; the survey average is
$\bar{y} = 0.46$.

For the ``multilevel regression'' part of MrP, we estimate a hierarchical
logistic regression with age group, education level, state of residence, and
decade married:
$$
\begin{aligned}
\y \sim{}&
    \texttt{logit(~(1 | age\_group) + (1 | educ\_group) + (1 | state) + (1 | decade\_married)~)}.
\end{aligned}
$$
We fit the model using \texttt{brms::brm} with the default priors.
For the ``poststratification'' part of MrP, we match the overall US population
using statistics from the 2017--2022 ACS survey \parencite{ipumsusa}. Finally,
we estimate corresponding calibration weights via \textit{raking}, with some
coarsening of the categories.
\Cref{sec:experiments} gives full details of the analysis and further results. 

\AlexanderBalance{}

\Cref{fig:alexanderbalanceplot} shows that there are substantial differences
between the covariate distribution in the survey and in the target population.
Both MrP and raking then make substantial adjustments, shifting the point
estimate from $\bar{y} = 0.46$ to $\hat{\mu}^{\text{MrP}} = 0.29$ and
$\hat{\mu}^{\text{CW}} = 0.26$. This is where comparisons between CW and MrP
would typically stop; the Bayesian analyst would instead proceed with standard
model checks as part of the Bayesian workflow \citep[see, for
example][]{kennedy2023model,kuh2024using,lopez:2022:mrpintro}.

\IntroPlot{} 
Our goal is to enable a suite of additional diagnostics based on
MrP locally equivalent weights. First, \Cref{fig:alexanderbalanceplot} shows
that the implied covariate balance, measured by comparing the MrPlew weighted covariate means to the population means, for MrP is excellent for the main margins;
raking balances these exactly by construction. However, \Cref{fig:introplot}
shows substantial imbalance on a key interaction term between decade married and
education level. Even though raking and MrP only explicitly adjust for the
margins of these groups, raking nonetheless balances this interaction while MrP
does not. \Cref{sec:experiments} includes additional analyses to explore the
impact of this imbalanced interaction. 

\Cref{fig:introplot} also compares the raking weights and the locally equivalent
MrP weights themselves. The latter weights are clustered much more around zero,
with a substantial proportion of negative weights. Many raking weights, however,
are extreme ($w^{\text{CW}}_i> 30$), which appears to drive up the overall
variability of raking versus MrP. Using our results, we can assess this
directly:
after scaling by $\sqrt{\nsur}$, the estimated frequentist sampling standard
deviation is $\AlexanderRakingSd$ for raking and $\AlexanderMrPSd$ for the MCMC
MrP procedure. 


To the best of our knowledge, such direct comparisons of the strengths and
weaknesses of MrP and calibration weighting methods on the same dataset were not
previously possible.

\subsection{Literature review}\label{sec:litreview}


\paragraph{Survey calibration and soft calibration.}
Calibration weighting adjusts survey estimates by finding weights that balance
observed covariates between sample and population
\parencite{deville:1993:generalizedraking,fuller:2011:sampling,Deville1992,haziza2017weights}.
\textcite{wu:2001:modelcalibration} extend this to \emph{model calibration},
where the calibration targets are derived from a fitted model rather than from
the sampling design. A key recent development is ``soft'' calibration, which
relaxes exact balance to allow approximate balance on higher-order interactions
\parencite{Park2009,Guggemos2010,Wang2019,benmichael2024multilevel}. In
particular, \textcite{gao:2023:softcalibration} make the connection to
random-effects models precise: under a shared random-effects structure, the
optimal calibration weights impose exact calibration on fixed effects and
approximate calibration on random effects.

\paragraph{Equivalent weights.}
Survey researchers have long known that regression adjustment has an equivalent
weighting form \parencite[e.g.,][]{Park2009}. In a foundational paper,
\textcite{gelman:2007:surveystruggles} applies this insight to a Bayesian
Normal-Normal outcome model, showing that the MrP estimate can be re-written as
a calibration weighting estimator with globally valid ``equivalent weights.''
\textcite{benmichael2024multilevel} extend this, showing that multilevel
calibration weights are equivalent to the MAP estimate of a multilevel outcome
model under certain conditions; see
\textcite{chattopadhyay:2023:regressionimpliedweights} and
\textcite{bruns2025augmented} for parallel results from the causal inference
literature.

\paragraph{Calibrated Bayes and Bayesian survey inference.}
\textcite{little:2004:tomodel} frames the central tension in survey inference as
``to model or not to model?'' and proposes \emph{calibrated Bayes}
\parencite{little:2006:calibratedbayes,little:2012:calibratedbayes} as a
resolution: use Bayesian models, but choose them so that the resulting
procedures have good frequentist properties. A growing body of work pursues this
vision, incorporating design information into Bayesian models through weighted
pseudo-posteriors
\parencite{savitsky:2016:bayesinformative,wang:2018:bayesinformative},
model-based survey weights
\parencite{si2020bayesian,si:2015:bayesnonparametric}, and Bayesian analogs of
raking \parencite{si:2021:bayesraking}. Most recently,
\textcite{gelman:2024:mrpw} propose a framework to incorporate design weights
into a Bayesian MrP model by jointly modeling the outcome $y$ and the sampling
weight $w$ given covariates $x$, then poststratifying on $(x, w)$. This is the
complement to our approach, incorporating design weights into a broader Bayesian
model, instead of finding locally equivalent weights of the original Bayesian model.

\paragraph{Extensions and diagnostics for MrP.}
There has been an explosion of interest in MrP; see the recent textbook from
\textcite{lopez:2022:mrpintro}. Extensions include deep interactions \citep{Ghitza2013}, tree-based methods
\parencite{Montgomery2018,bisbee2019barp}, 
and doubly robust-style combinations of weighting and outcome modeling
\parencite{chen2020dr_surveys,benmichael2024multilevel}. Despite this progress,
model diagnostics for MrP remain limited. \textcite{kennedy2023model} and
\textcite{kuh2024using} propose cross-validation-based diagnostics, but these
assess predictive accuracy rather than how the survey data are being used by the
estimator. \textcite{meng:2018:paradises} introduces the data defect index,
measuring the correlation between sample inclusion and the outcome.  Our
contribution can be understood as expanding the suite of tools for assessing
model stability as part of veridical data science for the Bayesian workflow
\parencite{yu:2020:vds,gelman:2020:bayesianworkflow}.

\paragraph{Local robustness.}
Our contributions follow in the tradition of the Bayesian local robustness
literature, which studies the effect of infinitesimal perturbations to Bayesian
posteriors \parencite{basu:1996:local,gustafson:1996:localposterior,
gustafson:2012:localrobustnessbook,gelman:2004:serialdilution,
giordano:2018:covariances,
giordano:2023:bnp,cabral:2025:robustness,di:2025:likelihood},
as well as related local sensitivity ideas in the frequentist case influence
literature \parencite{belsley:1980:regression,cook:1977:detectionofinfluential,
cook:1986:assessment, kass:1989:approximatebayesiansensitivity,
zhu:2007:perturbation,koh:2017:blackboxinfluence,giordano:2019:swiss,
thomas:2018:reconcilingcurvatureandis}. Our core technical results build on
\textcite{giordano:2024:bayesij}.  Specifically, we prove frequentist
consistency using the proof of consistency of the infinitesimal jackknife (IJ)
estimator for Bayesian posterior expectations found in \textcite[Theorem
2]{giordano:2024:bayesij}, and our result for covariate balance follows from an
extension of the series expansions of posterior expectations of
\textcite[Theorem 1]{giordano:2024:bayesij} to hold uniformly over a set of
posteriors.


\section{Methods}

\subsection{Problem setup}
We frame our problem of interest in terms of estimating a population quantity
from a non-representative sample, though the same formal problem arises in
observational causal inference and in domain adaptation for regression more
broadly. We observe scalar survey responses, denoted $\y$, and vector-valued
regressors, denoted $\x$; in the context of the Marital Name Change Survey in
\Cref{sec:intro_example}, $\y_i$ is a binary indicator of whether a married
woman retains her surname and $\x_i$ collects demographic regressors like age
and education level. We additionally observe regressors $\x$ from a target population,
for which the responses are unobserved; in the running example, the target is
the corresponding population of the United States. The problem is to infer the
expected value of the response in the target population, under the assumption
that the conditional distribution of the response given observed regressors is
the same in the survey and target populations.

\paragraph{Notation.}
Let $\{(\x_i, \y_i) : i \in [\nsur]\}$ denote the survey data, where $[\nsur] =
\{1,\ldots,\nsur\}$, $\y_i$ is a scalar survey response, $\x_i \in \rdom{P}$ is
a vector of regressors, and $\nsur$ is the number of survey observations.  Let
$\{\x_j : j \in [\ntar]\}$ denote regressors observed for $\ntar$ units drawn
from the target population; the corresponding responses are not observed.  We
write $\Y = (\y_1, \ldots, \y_{\nsur})^\trans$, $\X$ for the $\nsur \times P$
matrix of survey regressors, and $\Xtar$ for the $\ntar \times P$ matrix of
target regressors.  The responses $\y_i$ may be continuous or discrete, though
we are particularly interested in the binary case.

With some abuse of notation, the symbol $\x$ (without index) will denote a
generic random variable in both the survey and target distributions.  To
avoid ambiguity, we adopt the following convention for expectations: for
a random variable $\z$ with distribution $\p(\z)$ and measurable
function $\phi$,
$$
\expect{\p(\z)}{\phi(\z)} := \int \phi(\z) \, \p(d\z),
$$
with all other quantities taken as fixed.  For example,
$\expect{\p(\y \vert \x)}{\x \y}$ is a function of $\x$ but not of $\y$.
Covariances follow the analogous convention.

\paragraph{Formal setup.}
The surveys literature has several distinct traditions for formalizing
adjustment procedures; see \textcite{elliott2017inference}.  Following
the MrP literature \parencite{lopez:2022:mrpintro}, we anchor our
discussion in the super-population approach \parencite{chang2008using},
though we expect our results to extend to the quasi-randomization
approach \parencite{kott2010using}.  Our contributions are best
understood as \emph{diagnostics}: the assumptions below motivate the
estimators we study, but the diagnostics themselves do not rely on them.

We assume $\x$ has distribution $\psur(\x)$ in the survey and
$\ptar(\x)$ in the target, with these two distributions allowed to
differ.  The CW and MrP estimators we study are motivated by three
assumptions.

\begin{assu}[Invariance]\label{assu:invariance}
The conditional distribution $\p(\y \vert \x)$ is the same in the survey
and target populations.
\end{assu}

This assumption rules out any unmeasured factors that shift the response beyond
what the observed regressors $\x$ capture.  In the Name Change application,
invariance is plausible only to the extent that age, education level, state, and
decade of marriage fully capture the decision to change names.

\smallskip
\begin{assu}[Overlap]\label{assu:overlap}
$\ptar(\x)$ is absolutely continuous with respect to $\psur(\x)$.
\end{assu}

This assumption requires that any $\x$ that occurs with positive probability in
the target also occurs with positive probability in the survey.  Overlap can be difficult to justify in convenience samples, where large strata of the target may be thinly represented or entirely absent in the survey.

\smallskip
\begin{assu}[IID sampling]\label{assu:iid}
$(\x_i, \y_i) \iid \psur(\x) \p(\y \vert \x)$ for $i \in [\nsur]$ and
$\x_j \iid \ptar(\x)$ for $j \in [\ntar]$.
\end{assu}

We state our asymptotic results under IID survey sampling for
simplicity; the IID assumption on the target is not essential.

Our goal is to estimate
$$
\mu := \expect{\ptar(\x, \y)}{\pi(\x) \y}
$$
for some known weighting function $\pi(\x)$.  In the simplest case
$\pi(\x) \equiv 1$ and $\mu$ is the target-population mean of $\y$. This is the estimand in our Name Change example, corresponding to the
overall U.S.\ rate of women retaining their surname.  More generally,
$\pi(\x)$ accommodates subgroup means (e.g., rates within a state) and
contrasts between subgroups, both of direct substantive interest.  For
compactness, we write $\pi_j := \pi(\x_j)$ and $\piv = (\pi_1, \ldots,
\pi_{\ntar})^\trans$.  If we observed $\y_j$ for $j \in [\ntar]$, a
natural estimator would be
$$
\tilde\mu := \meantar \pi_j \y_j \quad\quad\textrm{(Infeasible)},
$$
but $\y_j$ is unobserved outside the survey. We now turn to feasible estimators of this quantity under the assumptions above.


\subsection{Survey adjustment: Calibration weighting}
\label{sec:calibration_weighting}

\subsubsection{Overview}
The classical approach to survey adjustment is \emph{calibration weighting}
\parencite{Deville1992}, with estimators of the form
\begin{align}
\muhat[\cal] := \meansur \w[\cal]_i \y_i
\quad\textrm{for some }\W[\cal] := (\w[\cal]_{1}, \ldots, \w[\cal]_{\nsur})^\trans.
\label{eq:calibration}
\end{align}
There is a rich literature on estimators of this form, reviewed in \Cref{sec:litreview}; see \textcite{haziza2017weights} for a modern overview.


To build intuition, consider the infeasible case in which we observe target responses $\y_j$. Then the weights should satisfy:
\begin{align}
\meantar \pi_j \y_j - \meansur \w[\cal]_i \y_i \eqcheck 0,
\quad\quad\textrm{(Infeasible)}
\label{eq:y_balance}
\end{align}
where $\eqcheck$ denotes an approximate equality treated as a check on the model.

We cannot compute \cref{eq:y_balance} because we do not observe target responses
$\y_j$.  However, we can write an analogous equation for the regressors.  Let
$\r(\x)$ denote some measurable function of $\x$, with $\r_i = \r(\x_i)$ and
$\r_j = \r(\x_j)$ in the survey and target, respectively. Ideally, $\r(\cdot)$
is predictive of either the outcome $\y$ or selection into the survey; see \citet{sarndal2005estimation} or
\citet{benmichael2021_balancing} for discussion.
Then \emph{covariate balance} for $\r(\cdot)$ measures the difference
\begin{align}
    \imbalance(\r, \W[\cal]) :=
    \meantar \pi_j \r_j -
    \meansur \w[\cal]_i \r_i
    \eqcheck 0.
    \label{eq:imbalance}
\end{align}

We use the term covariance balance to refer to the comparison of the weighted
sample distribution of covariates to the target population, however it is also
sometimes referred to as external consistency \citep{haziza2017weights}.
\emph{Raking} \parencite{Deming1940,Deville1992} finds the weights $\W[\cal]$
that minimize a dispersion criterion (e.g., entropy or variance) subject to
$\imbalance(\r, \W[\cal]) = 0$ for a user-chosen set of covariates $\r$. In the
Name Change application, we fit raking weights using the
\texttt{survey::calibrate} function with \texttt{calfun = "raking"}
\parencite{lumley:2025:survey}, with entropy as the dispersion criterion and ACS
population counts as targets. Alternative calibration estimators consider
different dispersion functions and imbalance constraints. An important example
we return to below is \emph{soft calibration}
\parencite{gao:2023:softcalibration,benmichael2024multilevel}, which bounds
$\imbalance(\r, \W[\cal])$ for some $\r$ rather than requiring exact balance.

A key feature of calibration weighting estimators is that they are
\emph{design-based}: the weights are estimated without using the survey
responses $\y$, which enter only through the weighted average in
\Cref{eq:calibration}.  We assume throughout that the calibration
weights are design-based in this sense.

\subsubsection{Diagnostics for calibration weighting}
\label{sec:diagnostics}
Since CW estimators are design-based and linear in the responses $\y$,
practitioners can directly leverage the weights to understand how the estimator
is constructed and to diagnose potential problems. We briefly review three
common diagnostics for CW estimators, which we will later adapt to MrP.

\paragraph{Direct inspection of the weights and subgroup contribution.}  The
weights themselves offer the most important initial diagnostic, for example by
identifying observations with extreme weights.
A natural diagnostic looks at this same influence but aggregated up to different
groups; for example, how much do survey respondents from each state contribute
to the overall estimate? Formally, for some partition $\A = \{\A_1, \ldots,
\A_K\}$ of the regressor space (e.g.~into US states), we can compute
$
   \meansur \w[\cal]_i \ind{\x_i \in \A_k},
$
where $\ind{\cdot}$ is the indicator function.  We provide concrete examples of
this in \Cref{sec:partial_pooling}.

\paragraph{Covariate balance.}  As outlined above, the covariate balance metric
in \Cref{eq:imbalance} is a key diagnostic for CW estimators. Raking explicitly
sets $\imbalance(\r, \W[\cal]) = 0$ for the covariates used in calibration, such
as the margins, but the same metric lets us check imbalance on regressors
\emph{outside} the raking set — e.g., interactions or nonlinear transformations
of the original regressors.

\paragraph{Frequentist variability.} The weights also directly determine the
frequentist variability of the CW estimator $\muhat[\cal]$. Conditional on
$\W[\cal]$, the variance is:
\begin{align}
\var{\p(\Y \vert \W[\cal], \X)}{\muhat[\cal]} =
    \frac{1}{\nsur^2} \sumsur \left(\w[\cal]_i\right)^2 \var{\p(\y \vert \x_i)}{\y_i }.
    \label{eq:freq_var_general}
\end{align}
Ignoring structure in the conditional response variances, the conditional
variance of $\muhat[\cal]$ is minimized when the weights are equal and grows as
they become more dispersed.

\medskip
Our goal is to develop analogs of these diagnostics for Bayesian survey
adjustment methods, which we turn to next.

\subsection{Survey adjustment: Multilevel regression and poststratification}
\label{sec:mrp}

\subsubsection{Overview}
Unlike CW, MrP explicitly models the outcome $\y$ as a function of $\x$,
targeting the conditional mean rather than the density ratio
\parencite{gelman:1997:poststratification,lopez:2022:mrpintro}.  Given a posited
outcome model $\p(\y \vert \x)$ and a corresponding estimate $\yhat(\x) \approx
\expect{\p(\y \vert \x)}{\y}$, the analyst sets $\yhat_j = \yhat(\x_j)$ for each
target $\x_j$ and forms
\begin{align}
    \muhat[\textrm{Generic MrP}] ={}& \meantar \pi_j \yhat_j.\label{eq:mrp_general}
\end{align}
To motivate \Cref{eq:mrp_general}, note that if our estimates $\yhat_j$ are accurate in
the sense that
$$
\expect{\ptar(\x_j)}{ \pi_j \yhat_j} \approx
\expect{\ptar(\x_j)}{ \pi(\x_j) \expect{\p(\y \vert \x_j)}{\y}} =
\expect{\ptar(\x, \y)}{\pi(\x) \y}  = \mu,
$$
then $\muhat[\textrm{Generic MrP}]$ is (nearly) unbiased under sampling from the
target distribution.

The \emph{multilevel regression} step estimates the function $\yhat(\cdot)$ from
the survey data, typically via a multilevel model; the \emph{poststratification}
step averages $\yhat_j$ over the target $\x_j$ in \Cref{eq:mrp_general}.  We
focus on generalized linear models of the form $\expect{\p(\y \vert \x)}{\y} =
\m(\betav^\trans \x)$ for some inverse link $\m(\cdot)$, where $\x$ may contain
non-linear transformations of the observed regressors.
For OLS, $\m(\cdot)$ is the identity and $\yhat^{\ols}_j = \betahat^\trans
\x_j$; for logistic regression, $\mlogit(\z) := 1/(1 + \exp(-\z))$ is the
logistic link and $\yhat_j = \mlogit(\betahat^\trans \x_j)$.

\subsubsection{Bayesian computation}
We focus on \emph{Bayesian hierarchical} estimates of $\yhat_j$.  A Bayesian
estimator posits a likelihood $\p(\y \vert \x, \betav)$, in this case a
parametric model for the full conditional distribution of $\y$ given $\x$; for
any $\betav$, this gives a corresponding estimate $\expect{\p(\y \vert \x,
\betav)}{\y}$.  We specify a possibly hierarchical prior $\p(\betav)$, expressed
marginally over any hyperparameters throughout. While not central to our
discussion, we note that hierarchical priors can induce non-linearity in
posterior estimates as a function of $\Y$ via the implicit estimation of
variance parameters. Finally, we leave the posterior's dependence on $\X$
implicit, since its distribution is ancillary.

Writing $\ell(\y_i | \x_i, \betav) := \log
\p(\y_i \vert \x_i, \betav)$, let $\post$ denote the posterior of
$\betav$ given $\Y$ via Bayes' rule,
\begin{align}
    \post :=
    \frac{\p(\Y \vert \X, \betav) \p(\betav) }
                  {\int \p(\Y \vert \tilde{\betav}, \X)
                  \p(\tilde{\betav})  d\tilde{\betav}} =
    \frac{\exp\left( \sumsur \ell(\y_i \vert \x_i, \betav)\right)
          \p(\betav) }
                  {\p(\Y | \X)} .
        \label{eq:bayes_rule}
\end{align}
We then form
\begin{align}
\yhat^{\bayes}_j = \expect{\post}{\m(\betav^\trans \x_j)}
\quad\textrm{and}\quad
\muhat[\mrp] ={} \meantar \pi_j \yhat^{\bayes}_j.\label{eq:mrp_bayes}
\end{align}
The MrP family also includes optimization-based and other approaches, but we
focus on this Bayesian estimator throughout.

In practice, $\post$ is not available in closed form, and
\Cref{eq:mrp_bayes} is estimated by Markov chain Monte Carlo (MCMC).
Given posterior draws $\betav_k$, $k \in [M]$, we construct MCMC estimates
$$
\yhat^{\textrm{MCMC}}_j = \frac{1}{M} \sum_{k \in [M]} \m(\betav_k^\trans \x_j)
\approx \yhat^{\bayes}_j,
$$
and plug these into \Cref{eq:mrp_general} in place of
$\yhat^{\bayes}_j$. We treat $\muhat[\mrp]$ as if computed from the true
posterior, flagging MCMC issues where relevant.

Returning to the Name Change application, we fit the hierarchical logistic model
of \Cref{sec:intro_example} via \texttt{brms::brm} \parencite{brms}
with default priors, running four MCMC chains of 2,000 iterations
each (500 warmup).  Posterior means $\yhat^{\textrm{MCMC}}_j$ are then
averaged over the ACS target population to form $\muhat[\mrp]$.

\subsubsection{Globally equivalent weights for linear outcome models}
\label{sec:linear_equivalent_weights}

In general, the MrP estimator $\muhat[\mrp]$ depends on the survey responses
$\y_i$ through the posterior $\post$ and through the inverse link function
$\m(\cdot)$.  As a consequence, the mapping $\Y \mapsto \muhat[\mrp](\Y)$ can be
highly nonlinear.  As a result, we cannot in general directly apply the
diagnostics described in \Cref{sec:diagnostics}, which rely on the linearity of
$\muhat[\cal]$ in $\y_i$.

However, an important special case arises when the outcome model is linear. In a
foundational paper, \citet{gelman:2007:surveystruggles} shows that the MrP estimator
with a linear outcome model can be expressed in closed form as a CW estimator
with specific weights, known as \emph{equivalent weights}. For a simple OLS
outcome model \citep[see][]{Park2009}, the equivalent weights are given by:
\begin{align}
\muhat[\ols] ={}& \meantar \pi_j \yhat^{\ols}_j
\\={}& \meansur \w[\ols]_i \y_i
\quad\textrm{where}\quad
\w[\ols]_i := \frac{\nsur}{\ntar} \piv^\trans \Xtar (\X^\trans \X)^{-1} \x_i.
\label{eq:ols_w}
\end{align}
It follows that the map $\Y \mapsto \muhat[\ols](\Y)$ is linear, and that
$\w[\ols]_i$ can be used for all the diagnostics described in
\Cref{sec:diagnostics}.  Moreover, $\muhat[\ols]$ is still linear under ridge
penalization and therefore when $\betavhat$ is the posterior mean of a Bayesian
linear model with multivariate normal prior $\p(\betav)$; see
\Cref{exmp:normal_weights} below for more details. 


\section{MrP Locally Equivalent Weights}

We now turn to defining locally equivalent weights for MrP, beginning with the
general case and then providing closed-form examples. In the following sections,
we show how these weights can be used to construct diagnostics for MrP analogous
to those in \Cref{sec:diagnostics}.

\subsection{Motivation and definition}

To motivate our approach to locally equivalent weights, suppose we want
equivalent weights for OLS, $\w[\ols]$ as in
\Cref{sec:linear_equivalent_weights}, but do not have access to the closed-form
expression. A practical instance of such a case might be black-box software that
computes some linear function of the data, such as a regression tree, but
without providing access to the internal parameters of the model. Importantly,
suppose we can nevertheless repeatedly call this software to construct the
estimate $\muhat[\ols](\Ytil)$ for any $\Ytil$ in a small neighborhood of $\Y$.
We can then use these black box evaluations to compute $\w[\ols]_i$ via the
relation
$$
\w[\ols]_i = \nsur \fracat{\partial \muhat[\ols](\Ytil)}{\partial \ytil_i}{\Ytil = \Y}.
$$
This immediately recovers the OLS weights in \Cref{eq:ols_w}, noting that the
mapping $\Y \mapsto \muhat[\ols](\Y)$ is linear, and thus the Taylor series
expansion of $\muhat[\ols](\Ytil)$ around $\Y$ is exact.

Our key idea is to apply this same approach to MrP, even though the mapping $\Y
\mapsto \muhat[\mrp](\Y)$ is not linear, arguing that this gives the appropriate
notion of a Taylor series approximation to $\muhat[\mrp](\Ytil)$ around $\Y$.

\begin{defn}[MrP locally equivalent weights]\label{defn:mrp_lew}
The \emph{MrP locally equivalent weight} for observation $i$ is
\begin{align}
    \w[\mrp]_i :={}& \nsur \frac{\partial \muhat[\mrp](\Y)}{\partial \y_i}
    ={} \nsur
        \cov{\post}{\nabla_\y \ell(y_i | \x_i, \betav),
        \g(\betav)},
    \label{eq:mrp_w}
\end{align}
where $\g(\betav) := \meantar \pi_j \m(\betav^\trans \x_j)$ is the model's
implied target-population mean at a fixed $\betav$, and $\nabla_\y \ell(\y_i
\vert \x_i, \betav) := \partial \ell(\y_i \vert \x_i, \betav) / \partial \y_i$
denotes the derivative of the log-likelihood contribution with respect to
$\y_i$.
\end{defn}

The formula for the derivative in \Cref{eq:mrp_w} follows immediately from
well-known results in Bayesian local robustness \citep[e.g.~Theorem 1
of][]{giordano:2018:covariances}. Importantly, the weights $\w[\mrp]_i$ can be
easily estimated using MCMC samples from the posterior $\post$ with minimal
additional computation. Further, as we highlight below, for generalized linear
models such as logistic regression, this partial derivative simplifies to
$\x_i^\trans \betav$, which is also easy to compute from standard model output.

\subsection{Justification and interpretation}
While the locally equivalent weights are straightforward to define and compute,
justifying their use and interpreting their meaning requires more care;
technical results follow in the next sections. In short, the weights $\W[\mrp]$
are justified as the unique first-order representation of $\muhat[\mrp]$ near
$\Y$, and this local equivalence is what licenses the CW-style diagnostics
developed in \Cref{sec:variance,sec:balance}.

First, we must make ``local'' precise by extending the definition of $\post$ in
\Cref{eq:bayes_rule} to accommodate $\Ytil$ in a Euclidean neighborhood of $\Y$.
Such $\Ytil$ may not be valid inputs for the log likelihood $\ell(\y \vert \x,
\betav)$, but as long as both $\p(\Ytil)$ and
$\expect{\post[\Ytil]}{\m(\betav^\trans \x_i)}$ are finite for all $\x_i$, the
mapping $\Ytil \mapsto \muhat[\mrp](\Ytil)$ remains well-defined. We assume here
that $\muhat[\mrp](\Ytil)$ is defined and smooth in a neighborhood of $\Y$; we
give precise conditions in \Cref{sec:variance,sec:balance} and discuss the
binary-response case in \Cref{sec:non_local_robustness}.

In general, $\muhat[\mrp]$ is not globally linear in $\Y$, so we cannot hope
that $\muhat[\mrp](\Y) \approx \meansur \w[\mrp]_i \y_i$. What we do have is a
\emph{locally affine} approximation, the first-order Taylor expansion of
$\muhat[\mrp]$ around $\Y$:
\begin{align}
    \muhat[\mrplew](\Ytil) :={}& \muhat[\mrp](\Y) +
        \meansur \w[\mrp]_i (\ytil_i - \y_i)
    \label{eq:mrplew}
\\
\muhat[\mrp](\Ytil)  ={}& \muhat[\mrplew](\Ytil) +
        \resid(\Ytil, \Y),
        \label{eq:mrp_taylor}
\end{align}
where $\resid(\Ytil, \Y)$ is the appropriate residual. For a globally linear
estimator such as $\muhat[\ols]$, the residual vanishes: $\resid(\Ytil, \Y) = 0$
and $\muhat[\mrplew](\Ytil) = \muhat[\ols](\Ytil)$, recovering the OLS weights
of \Cref{eq:ols_w}.
While $\resid(\Ytil, \Y) \ne 0$ in general, under regularity conditions
established below, we show that $\resid(\Ytil, \Y)$ is of order
$\norm{\Ytil-\Y}_2^2$. In other words, when $\Ytil$ is close to $\Y$, the
residuals are small and $\muhat[\mrplew]$ is a locally affine approximation to
$\muhat[\mrp]$.

Under such an approximation, we define below several senses in which we can
justify using the weights $\W[\mrp]$ for diagnostics \emph{as if}
$\muhat[\mrp](\Y)$ were a CW estimator with weights $\W[\mrp]$. In
\Cref{sec:variance}, we justify using the weights to compute frequentist
variance, leveraging recent results for the Bayesian infinitesimal jackknife. In
\Cref{sec:balance}, we justify using the weights to compute covariate balance:
we show that judicious choices of $\Ytil$ produce \emph{nonlinear
generalizations} of more standard covariate balance checks and give conditions
under which the corresponding nonlinearity is appropriately small.

\subsection{Illustration: Closed-form examples}

To build intuition, we now derive locally equivalent weights for two closed-form
Bayesian examples: a conjugate linear model, and logistic regression in the
asymptotic limit. We revisit these examples in the context of covariate balance
in \Cref{sec:balance}.

While illustrative, these closed-form analyses are only possible due to exact
(\Cref{exmp:normal_weights}) or asymptotic (\Cref{exmp:logistic_weights})
linearity.  For more complex models, such as those in our applications in
\Cref{sec:experiments}, no closed form is available, and we instead compute
$\W[\mrp]$ directly from \Cref{eq:mrp_w} using MCMC samples.

\begin{exmp}\label{exmp:normal_weights}
First, consider the conjugate linear model
$$
\p(\y \vert \x, \betav) = \gauss{\betav^\trans \x, \sigma^2}
\quad\textrm{and}\quad
\p(\betav) = \gauss{\zerov, \betacov},
$$
for known $\sigma \ne 0$ and invertible $\betacov$. In this case, the posterior
has a closed form and $\muhat[\mrp](\Y)$ is linear in $\Y$. Building on
\textcite{gelman:2007:surveystruggles}, we show in \Cref{app:simple_normal} that
\begin{align}
\W[\mrp] ={} \frac{1}{\ntar} \X
\left(\frac{1}{\nsur} \X^\trans \X +
      \frac{\sigma^2}{\nsur} \betacov^{-1} \right)^{-1} \Xtar^\trans \piv.
\label{eq:simple_normal_normal_w}
\end{align}
For a tight prior, $\betacov \to \zerov$, the weights $\W[\mrp]$ shrink and
incorporate less of the covariance $\frac{1}{\nsur} \X^\trans \X$. Conversely,
in the flat-prior limit, $\betacov^{-1} \to \zerov$ with $\Xtar = \X$,
$\W[\mrp]$ reduces to the projection of $(\nsur / \ntar) \piv$ onto the column
space of $\X$.
\end{exmp}

\begin{exmp}\label{exmp:logistic_weights}
Next, consider logistic regression in the asymptotic regime.  Suppose
that $\y$ is binary, with
$$
\expect{\p(\y \vert \x, \betav)}{\y} = \mlogit(\betav^\trans \x)
\quad\textrm{where}\quad
\mlogit(z) = 1 / (1 + \exp(-z)).
$$
We assume the prior $\p(\betav)$ is smooth and $\nsur$ is large, so that $\post$
is well-approximated by Bernstein-von Mises
\parencite[Ch.~10.2]{vaart:2000:asymptoticstatistics}. Write $\betahat$ for the
MLE, $\yhat_i := \mlogit(\betahat^\trans \x_i)$ and $\vhat_i :=
\yhat_i(1-\yhat_i)$ for the approximate mean and variance, with analogous
definitions for the target ($\yhat_j, \vhat_j$). Let $\V$ and $\Vtar$ denote
diagonal matrices with survey and target variance estimates on the diagonal,
respectively. As we show in \Cref{app:simple_logistic}, in this case
\begin{align}
\W[\mrp] \approx \frac{1}{\ntar} 
    \X \left(\frac{1}{\nsur} \X^\trans \V \X\right)^{-1} \Xtar^\trans \Vtar \piv,
    \label{eq:logistic_mrplew}
\end{align}
where $\approx$ reflects the delta method and Bernstein-von Mises approximations.

\Cref{eq:logistic_mrplew} mirrors the flat-prior limit of
\Cref{eq:simple_normal_normal_w} ($\betacov^{-1} = \zerov$), with extra factors
of $\V$ and $\Vtar$. The absence of the prior covariance reflects the fact that
\Cref{eq:logistic_mrplew} is computed in the asymptotic limit and thus any prior
influence vanishes. The variance factors $\V$ and $\Vtar$ reflect the fact that
$\muhat[\mrp](\Y)$ is nonlinear in $\Y$, since different $\Y$ gives rise to
different $\V$ and so different $\W[\mrp]$.
\end{exmp}


\section{Formal results}
We now formally justify the use of MrPlew weights for estimating frequentist
variability and for assessing covariate balance.

\subsection{Variance estimation}\label{sec:variance}


We begin with frequentist variability. As we will see, this is conceptually
distinct from Bayesian posterior uncertainty, which captures uncertainty about
the parameter $\betav$ given the data. The frequentist variance, by contrast,
captures the variability of the estimator across repeated samples.

Our proof will rely on some technical assumptions required for consistency of
the Bayesian infinitesimal jackknife \parencite{giordano:2024:bayesij}. To state
these conditions, we need to introduce some notation. Let $\Agrad{k}$ denote the
$k$-th derivative of the log partition function  $\A(\cdot)$ of an exponential
family, and let $\r^{\otimes k}$ denote the $\rdim \times \ldots \rdim$ array of
products of $\r$.   Let $\norm{\cdot}_2^2$ of a multidimensional array denote
the squared Euclidean norm of the stacked array, i.e., the sum of the squares of
the array entries. Finally, let $\dlim$ denote convergence in distribution and
$\plim$ convergence in probability, both with respect to IID samples from
$\psur(\x, \y)$.

\begin{assu}[Canonical exponential family]\label{assu:linear}
    Assume that the likelihood is given by
    a one-parameter natural exponential family with sufficient statistic
    $\y$ and natural parameter $\eta = \betav^\trans \x$.
    (This model may be misspecified.)
    Specifically, the log likelihood for $(\x_i, \y_i)$ is given by
    $$
    \ell(\y_i \vert \x_i, \betav) = \y_i \eta_i - A(\eta_i)
    \textrm{ for }\eta_i := \betav^\trans \x_i,
    $$
    where $A(\cdot)$ is the log partition function and the density is
    assumed to be relative to some fixed base measure on $\y$.
\end{assu}

Popular models satisfying this restriction are generalized linear models with
canonical link functions. We primarily focus on logistic regression; other
examples include Poisson regression and Normal regression with known residual
variance.


Next, we assume that the limit of the
maximum likelihood estimator of $\betav$ exists and is identifiable as $\nsur \to \infty$.
\begin{assu}[Identifiability of the MLE]\label{assu:mle} Assume that $\expect{\psur(\x)}{\A(\betav^\trans
\x)}$ is finite for all finite $\betav$, and define $\ell(\betav) :=
\expect{\psur(\y, \x)}{\ell(\y_i \vert \x_i, \betav)}$. Assume that $\betastar
:= \argmax{\betav \in \betadom} \ell(\betav)$ exists, is unique, and that
$
\info :=
    -\fracat{\partial^2 \ell(\betav)}
            {\partial \betav \partial \betav^\trans}{\betastar}
$ is positive definite.
\end{assu}
If we can exchange integration and differentiation in the definition of
$\ell(\betav)$ (sufficient conditions are given in \Cref{assu:ij} and \Cref{lem:ij} in
\Cref{app:ij}), then the information matrix from \Cref{assu:mle} takes the form
$$
\info = \expect{\psur(\x)}{\Agrad{2}({\betastar}^\trans \x) \x \x^\trans}.
$$
Since $\Agrad{2}({\betastar}^\trans \x) > 0$ by standard properties of
exponential families, positive definiteness of $\info$ reduces to a mild
moment condition on $\x$. This positive-definiteness condition fails if $\x$ lies almost surely in
a proper linear subspace of $\rdom{\rdim}$, but holds whenever the support of
$\x$ spans $\rdom{\rdim}$.

We impose a mild set of conditions on the survey data-generating distribution $\psur(\x, \y)$ and the prior $\p(\betav)$.

\begin{assu}[Regularity for the infinitesimal jackknife]\label{assu:ij_lite}
Under \Cref{assu:linear,assu:mle}, assume that as $\nsur \to \infty$ (with observations IID from $\psur(\x, \y)$), the following stay fixed:
\begin{itemize}
    \item The dimension of $\betav$,
    \item The target observations $\{\x_1, \ldots, \x_{\ntar}\}$, and
    \item The weighting function $\pi(\cdot)$.
\end{itemize}
Additionally assume that:
\begin{itemize}
    \item With probability one under $\psur(\x)$, $\x$ is bounded.
    \item The prior $\p(\betav)$ has bounded support.
    \item Marginally, $\expect{\psur(\y)}{\y^2} < \infty$.
    \item The prior $\p(\betav)$ is proper and has a density that is nonzero and
          four times continuously differentiable in a neighborhood of
          $\betastar$.
\end{itemize}
The boundedness assumptions on $\x$ and $\betav$ can be relaxed to weaker moment conditions; see \Cref{assu:ij} in \Cref{app:ij}.
\end{assu}


We now state the main result that a sample-variance estimator $\mrpvarhat$,
analogous to the CW variance formula in \Cref{eq:freq_var_general}, is
consistent for the frequentist variance of $\muhat[\mrp]$.

\begin{thm}[Infinitesimal jackknife-based variance for MrPlew]\label{thm:ij}
Let \Cref{assu:linear,assu:mle,assu:ij_lite} hold. For $i \in [\nsur]$, let
$\yhat_i := \expect{\post}{\m(\betav^\trans \x_i)}$ and $\varepsilon_i := (\y_i
- \yhat_i)$. Define
\begin{align}
\mrpvarhat := \meansur \left(
    \nsur \w[\mrp]_i \varepsilon_i  - \nsur \overline{\w[\mrp] \varepsilon}
    \right)^2
\quad\textrm{where}\quad
\overline{\w[\mrp] \varepsilon} := \meansur \w[\mrp]_i \varepsilon_i,
\label{eq:mrplew_variance}
\end{align}
and where $\mrpvarhat$ is the sample variance of $\nsur \w[\mrp]_i \varepsilon_i$.
Then, as $\nsur \rightarrow \infty$,
$$
\sqrt{\nsur}\left(\muhat[\mrp] - \muhat[\infty] \right)
\dlim \gauss{0, \mrpvar}
\quad\textrm{and}\quad
\mrpvarhat \plim \mrpvar
$$
for some variance $\mrpvar \ge 0$ and
$\muhat[\infty] = \meantar \pi_j \m({\betastar}^\trans \x_j)$.
\end{thm}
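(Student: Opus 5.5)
The plan is to reduce the statement to the consistency result for the Bayesian infinitesimal jackknife in \textcite[Theorem 2]{giordano:2024:bayesij}, applied to the posterior expectation $\muhat[\mrp] = \expect{\post}{\g(\betav)}$. The function $\g(\betav) = \meantar \pi_j \m(\betav^\trans \x_j)$ is a fixed smooth function of $\betav$, because by \Cref{assu:ij_lite} the target points and $\pi(\cdot)$ stay fixed and $\m = \Agrad{1}$ is smooth. Under \Cref{assu:linear}, $\nabla_\y \ell(\y_i \vert \x_i, \betav) = \betav^\trans \x_i$. The MrPlew weight is therefore $\w[\mrp]_i = \nsur \cov{\post}{\betav^\trans \x_i, \g(\betav)}$.

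The IJ influence score for observation $i$ is $\nsur \cov{\post}{\ell(\y_i \vert \x_i, \betav), \g(\betav)}$. With $\ell = \y_i \betav^\trans\x_i - A(\betav^\trans \x_i)$, this score equals $\y_i \w[\mrp]_i - \nsur\cov{\post}{A(\betav^\trans\x_i), \g(\betav)}$. Write $\psi_i$ for this IJ influence score. The first step is to show that $\psi_i = \w[\mrp]_i \varepsilon_i + o_p(1)$ uniformly enough in $i$ that sample variances agree.

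To do this, I will expand $A(\betav^\trans \x_i)$ around the posterior mean $\bar\betav$, giving $A(\bar\betav^\trans\x_i) + \Agrad{1}(\bar\betav^\trans\x_i)(\betav-\bar\betav)^\trans \x_i + R_i$. The constant term has zero covariance. The linear term gives $\Agrad{1}(\bar\betav^\trans \x_i)\,\w[\mrp]_i$, and $\Agrad{1}(\bar\betav^\trans\x_i)$ differs from $\yhat_i = \expect{\post}{\m(\betav^\trans \x_i)}$ by $O_p(\nsur^{-1})$ by a second-order Taylor argument. The remainder $R_i$ is quadratic in $\betav - \bar\betav$. Its covariance with $\g(\betav)$, multiplied by $\nsur$, is $O_p(\nsur^{-1/2})$, because posterior central moments of order three are $O_p(\nsur^{-3/2})$. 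I will get these moment bounds from the Laplace-type expansions used in \textcite[Theorem 1]{giordano:2024:bayesij}, using bounded $\x$ and bounded prior support to make them uniform in $i$. Hence $\psi_i = \w[\mrp]_i \varepsilon_i + \rho_i$ with $\max_i |\rho_i(\y_i)| $ controlled by $(1+|\y_i|)\,O_p(\nsur^{-1/2})$.

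Next, \textcite[Theorem 2]{giordano:2024:bayesij} gives two facts under \Cref{assu:mle,assu:ij_lite}. First, $\sqrt{\nsur}(\muhat[\mrp] - \g(\betastar)) \dlim \gauss{0,\mrpvar}$, where $\mrpvar = \nabla\g(\betastar)^\trans \info^{-1}\cov{\psur}{(\y - \m({\betastar}^\trans\x))\x}\info^{-1}\nabla\g(\betastar)$ is the sandwich variance. Second, the sample variance of the IJ scores $\psi_i$ (with $\nsur$ scaling matching \Cref{eq:mrplew_variance}) converges in probability to $\mrpvar$. Note that $\g(\betastar) = \muhat[\infty]$. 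Since \Cref{assu:ij_lite} is the "lite" version of \Cref{assu:ij}, I will verify the latter's hypotheses via \Cref{lem:ij}. Bounded $\x$ and bounded prior support make all derivatives of $A(\betav^\trans\x)$ uniformly bounded, and the second moment of $\y$ supplies the needed integrability.

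Finally, the difference between the two sample variances is controlled by Cauchy--Schwarz. It is bounded by $\left(\meansur \nsur^2\rho_i^2\right)^{1/2}$ times a term of order $\mrpvarhat^{1/2}$, so it suffices to have $\meansur \nsur^2 \rho_i^2 \plim 0$. Here $\nsur\rho_i$ denotes the remainder on the scale used in $\mrpvarhat$, and the needed bound follows from $\expect{\psur(\y)}{\y^2}<\infty$. The main obstacle is the uniform-in-$i$ control of the posterior higher-moment remainders. I will handle it by noting that the remainder terms are polynomial in $\x_i$ with posterior-moment coefficients that do not depend on $i$: writing $R_i$ as $\x_i^\trans(\text{matrix})\x_i$ plus a bounded tail, the covariance reduces to finitely many $i$-independent posterior covariances. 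Uniformity then comes for free from the boundedness of $\x$.
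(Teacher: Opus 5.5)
Your plan is correct in substance and has the paper's skeleton. You write the IJ score as $\psi_i = \nsur\cov{\post}{\g(\betav), \ell(\y_i \vert \x_i, \betav)} = \y_i\w[\mrp]_i - \nsur\cov{\post}{\g(\betav), A(\betav^\trans\x_i)}$, show that $\psi_i - \w[\mrp]_i\varepsilon_i$ is negligible in mean square, and invoke Theorem~2 of \textcite{giordano:2024:bayesij}.

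You differ in how you bound the discrepancy $\nsur\cov{\post}{\g(\betav), A(\betav^\trans\x_i) - \yhat_i\betav^\trans\x_i}$.
\begin{itemize}
\item The paper proves a covariance expansion around the MAP (\Cref{lem:cov_ij}, derived from Theorem~1 of \textcite{giordano:2024:bayesij}). Its leading term carries the factor $\Agrad{1}(\betavhat^\trans\x_i) - \yhat_i = O_p(\nsur^{-1})$, and its residuals are only square-summable. This gives an $O_p(\nsur^{-1})$ discrepancy.
\item You expand $A$ about the posterior mean, so the linear part is exactly $\Agrad{1}(\bar{\betav}^\trans\x_i)\w[\mrp]_i$. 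You then bound the quadratic remainder's covariance crudely by $\expect{\post}{\norm{\betav-\bar{\betav}}_2^3}$, which gives only $O_p(\nsur^{-1/2})$.
\end{itemize}
Your rate suffices and your argument is more elementary. Under bounded $\x$ and bounded prior support it is uniform in $i$ immediately, because $\abs{R_i} \le C\norm{\betav-\bar{\betav}}_2^2$ with $C$ free of $i$. The paper's mean-square residual bookkeeping is what would survive under the relaxed moment conditions of \Cref{assu:ij}.

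On hypotheses: \Cref{lem:ij} supplies the domination conditions. It does not supply the separation of the empirical log likelihood from its maximum outside a ball. The paper gets that from concavity (\Cref{lem:loglik_bounded}). With bounded prior support, a uniform law of large numbers on the support plus uniqueness of $\betastar$ would also work.

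The step that needs repair is the normalization at the end. \Cref{defn:mrp_lew} already builds a factor $\nsur$ into $\w[\mrp]_i$, so $\w[\mrp]_i\varepsilon_i$ and $\psi_i$ are both $O_p(1)$. Read literally, $\nsur\w[\mrp]_i\varepsilon_i$ in \Cref{eq:mrplew_variance} is of order $\nsur$, and its sample variance is of order $\nsur^2$. The paper's proof in fact reads $\nsur\w[\mrp]_i\varepsilon_i$ as $\nsur\cov{\post}{\g(\betav), \x_i^\trans\betav}\varepsilon_i$, which is your $\w[\mrp]_i\varepsilon_i$.

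Your condition $\meansur\nsur^2\rho_i^2 \plim 0$ tries to match the literal scale, and it fails on your own bounds. With $\abs{\rho_i} \le (1+\abs{\y_i})O_p(\nsur^{-1/2})$ that average is $O_p(\nsur)$. Even an $O_p(\nsur^{-1})$ rate, as in the paper, leaves it $O_p(1)$.

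Instead, compare the sample variances of $\w[\mrp]_i\varepsilon_i$ and $\psi_i$ on the $O(1)$ scale. Cauchy--Schwarz then needs only $\meansur\rho_i^2 \plim 0$. Your bound gives this at rate $O_p(\nsur^{-1})$, using $\expect{\psur(\y)}{\y^2} < \infty$.
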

See \Cref{app:ij} for the proof.

The proof of \Cref{thm:ij} operates by showing that $\mrpvarhat$ is
asymptotically equivalent to the infinitesimal jackknife covariance;
\textcite[Theorem 2]{giordano:2024:bayesij} show this is consistent. The bulk of
the proof verifies that the model satisfies the conditions of the IJ consistency
theorem under \Cref{assu:linear,assu:mle,assu:ij_lite}.

\begin{remark}[Failure of variance estimation when the canonical exponential
    family does not hold]\label{rem:assu_linear_necessary} \Cref{assu:linear} is
    essential for \Cref{thm:ij}, though not for \Cref{thm:balance} below: if the
    model's sufficient statistic is not $\y$ alone, then $\mrpvarhat$ is
    generally an \emph{inconsistent} estimator of the frequentist variance. For
    example, if we model $\p(\y \vert \x, \betav, \sigma) = \gauss{\x^\trans
    \betav, \sigma^2}$ with unknown variance $\sigma^2$, the sufficient
    statistic is $(\y, \y^2)$, \Cref{assu:linear} is not satisfied, and
    $\mrpvarhat$ is inconsistent; see \Cref{exmp:gaussian_glm} in
    \Cref{app:simple_examples} for further discussion of this counterexample.
    This is not a problem for variance estimation itself, since one can use the
    infinitesimal jackknife covariance instead. However, this counterexample
    emphasizes that $\W[\mrp]$ does not automatically inherit CW-style
    interpretations for variance.
\end{remark}

\subsection{Covariate balance}\label{sec:balance}
\subsubsection{Motivation and intuition}

We now turn to covariate balance. The key idea is to reframe the standard
balance check as a sensitivity diagnostic that assesses whether the weighting
procedure would have been able to detect a small, directed perturbation of the
response variable.  This reframing admits a local interpretation to which we can
apply the Taylor series approximation in \Cref{eq:mrp_taylor}, justifying the
use of $\W[\mrp]$ in a local analogue of the standard CW balance check.



To build intuition, consider a perturbation of a continuous response $\y$ of the form:
\begin{align}
\ytil = \y + \delta \r(\x)
\label{eq:ytil}
\end{align}
for some measurable function of the covariates $\r = \r(\x)$ and small $\delta$.
For continuous $\y$, the additive construction in \cref{eq:ytil} can produce
valid observations. For binary $\y$ or other bounded responses, the additive
construction of \cref{eq:ytil} no longer produces a valid response in general.
We will \emph{define and analyze} $\ytil$ of \cref{eq:ytil} through the
generalized posterior below, deferring the question of which binary processes
approximately satisfy \cref{eq:ytil_expectation} to
\Cref{sec:parametric_bootstrap}.

The corresponding change in the (unknown) target mean is:
$$
\begin{aligned}
\meantar \pi_j \ytil_j - \meantar \pi_j \y_j = 
    \underbrace{ \delta \meantar \pi_j \r_j}_{\text{actual change}}.
\end{aligned}
$$
For linear calibration weights, the corresponding change in the estimator is:
\begin{align}
\meansur \w[\cal]_i \ytil_i - \meansur \w[\cal]_i \y_i = 
    \underbrace{\delta \meansur \w[\cal]_i \r_i}_{\text{inferred change}}.
    \label{eq:cal_balance_change}
\end{align}
Subtracting the inferred change from the actual change and dividing by $\delta$
immediately recovers the standard covariate balance check in
\Cref{eq:imbalance}:
\begin{align}
\frac{1}{\delta} \imbalance(\r, \W[\cal])
={}&
    \meantar \pi_j \r_j -
    \meansur \w[\cal]_i \r_i \eqcheck 0.
\label{eq:imbalance_pert}
\end{align}


The main advantage of this reframing is that we can immediately extend this
argument to nonlinear MrP via its Taylor expansion. Specifically, as long as we
can sensibly define $\muhat[\mrp](\Ytil)$ and the Taylor expansion in
\Cref{eq:mrp_taylor} is valid, we have an analogous \emph{local} change:
\begin{align}
    \muhat[\mrp](\Ytil) - \muhat[\mrp](\Y) ={}&
        \meansur \w[\mrp]_i (\ytil_i - \y_i) + \resid(\Ytil, \Y)
    \nonumber
    \\={}& \underbrace{\delta \meansur \w[\mrp]_i \r_i}_{\text{inferred change}} + 
    \resid(\Ytil, \Y).
\label{eq:balance_linear}
\end{align}
\Cref{thm:balance} below shows that, under regularity conditions, the residual
term $\resid(\Ytil, \Y)$ is of order $\delta^2$, uniformly over a Donsker class
of functions $\r$. After dividing by $\delta$ and dropping the residual term, we
can therefore form a \emph{local} balance check that replaces $\w[\cal]$ with
$\w[\mrp]$ but otherwise mirrors the linear version.

%


\subsubsection{Main result}

\begin{defn}[Perturbed likelihood and generalized posterior]\label{defn:balance}
    Let $\deltamax > 0$ denote an
    upper bound on the perturbation.  Under \Cref{assu:linear}, for all $\delta
    \in \deltadom$, we formally define the perturbed likelihood

    $$
    \begin{aligned}
    \ell(\y \vert \x, \betav; \delta \r)
    ={}& (\y + \delta \r(\x)) \x^\trans \betav - A(\betav^\trans \x)
    \\={}& \ell(\y \vert \x, \betav) + \delta \r(\x) \x^\trans \betav \\
    \p(\Y \vert \betav; \delta \r) :={}&
        \exp\left( \sumsur \ell(\y_i + \delta \r_i \vert \x_i, \betav) \right).
    \end{aligned}
    $$
    We then define the \emph{generalized posterior}
    $$
    \postd :={}
    \frac{\p(\Y \vert \X, \betav; \delta \r) \p(\betav) }
                  {\int \p(\Y \vert \X, \tilde{\betav}; \delta \r)
                  \p(\tilde{\betav})  d\tilde{\betav}}
    $$
    when the denominator is finite, and let
    $\muhat[\mrp](\Ytil) := \expect{\postd}{\g(\betav)}$ ($\g(\betav)$ is defined in \cref{eq:mrp_w}).
\end{defn}

\begin{assu}[Donsker class]\label{assu:balance}
    Let $\rset$ denote a Donsker class of $\psur(\x)$-measurable functions for which
    $\sup_{\r \in \rset} \expect{\psur(\x)}{\norm{\x \r(\x)}_2^2} < \rmax^2 < \infty$.
\end{assu}
Donsker classes are classes of functions restricted enough to obey uniform laws
of large numbers \parencite[Chapter 19]{vaart:2000:asymptoticstatistics}.
Readers less familiar with Donsker classes can instead imagine that $\rset$ is a
finite set of functions without losing any essential understanding.

\begin{thm}\label{thm:balance}
Take $\Ytil = \Y + \delta \R$, where $\R = (\r(\x_1), \ldots, \r(\x_{\nsur}))^\trans$. Under
    \Cref{assu:linear,assu:mle,assu:ij_lite,assu:balance}, with probability
    approaching one as $\nsur \rightarrow \infty$, $\muhat[\mrp](\Ytil)$ exists,
    and satisfies
    $$
        \sup_{\r \in \rset}
        \frac{1}{\delta}
        \abs{
            \muhat[\mrp](\Ytil) - \muhat[\mrp](\Y) -
            \meansur \w[\mrp]_i \r_i
        } ={} O(\delta)
    \quad\textrm{as }\delta \rightarrow 0.
    $$
\end{thm}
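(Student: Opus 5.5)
The plan is to treat $\delta \mapsto \muhat[\mrp](\Y + \delta \R)$ as a smooth function of the scalar $\delta$ for each fixed $\r$. We then apply Taylor's theorem with Lagrange remainder, and bound the second derivative uniformly over $\delta \in \deltadom$ and $\r \in \rset$.

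Under \Cref{assu:linear}, the perturbed log likelihood is $\ell(\y_i \vert \x_i, \betav) + \delta \r_i \x_i^\trans \betav$. The generalized posterior is therefore an exponential tilt of $\post$:
$$
\postd \propto \post \exp\bigl(\delta \nsur \betav^\trans \bar{\bm{s}}_\r\bigr), \qquad \bar{\bm{s}}_\r := \meansur \r_i \x_i .
$$
Because the prior has bounded support (\Cref{assu:ij_lite}), this tilt is integrable for every $\delta$. This gives existence of $\muhat[\mrp](\Ytil)$, and differentiability under the integral sign is immediate. Standard cumulant identities (as in \textcite{giordano:2018:covariances}) then give
$$
\frac{d}{d\delta} \expect{\postd}{\g(\betav)} = \nsur \cov{\postd}{\g(\betav), \betav^\trans \bar{\bm{s}}_\r}.
$$
At $\delta=0$ this equals $\meansur \w[\mrp]_i \r_i$, since $\nabla_\y \ell = \x_i^\trans\betav$ and by \Cref{defn:mrp_lew}. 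The second derivative is the third mixed cumulant
$$
\nsur^2\, \cumulant{\postd}{\g(\betav), \betav^\trans \bar{\bm{s}}_\r, \betav^\trans \bar{\bm{s}}_\r}.
$$
Hence the quantity in the theorem is bounded by $\frac{\delta}{2} \sup_{\tilde\delta \in [0,\delta]} |\text{second derivative}|$, and it suffices to show
$$
\sup_{\r\in\rset}\sup_{\tilde\delta \in \deltadom} \nsur^2 \left|\cumulant{\postdtil}{\g, \betav^\trans \bar{\bm{s}}_\r, \betav^\trans \bar{\bm{s}}_\r}\right| = O_p(1).
$$

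We reduce this to a bound on posterior cumulants in $\betav$ alone. By multilinearity, the cumulant equals $\bar{\bm{s}}_\r^{\otimes 2}$ contracted against the $(\g, \betav, \betav)$ cross-cumulant array. For the first factor, \Cref{assu:balance} and the Donsker property give $\sup_{\r} \norm{\bar{\bm{s}}_\r}_2 \le \sup_\r \norm{\expect{\psur(\x)}{\x\r(\x)}}_2 + o_p(1) \le \rmax + o_p(1)$, using Jensen's inequality and a uniform law of large numbers. The problem is thus reduced to showing that
$$
\sup_{\norm{\bm{s}}_2 \le \rmax+1,\ \tilde\delta\in\deltadom} \nsur^2 \norm{\cumulant{}{\g, \betav, \betav}}
$$
under the tilted posterior $\post \exp(\tilde\delta \nsur \betav^\trans \bm{s})$ is $O_p(1)$. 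Note that the supremum over $\r$ has disappeared, leaving a finite-dimensional family indexed by $\bm{s}$ in a compact ball.

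The main obstacle is this uniform Laplace-type bound. Each tilted posterior is itself a posterior with log density $\nsur(\meansur \ell(\y_i\vert\x_i,\betav) + \tilde\delta \betav^\trans \bm{s}) + \log\p(\betav)$. For small $\tilde\delta$, its mode $\betahat_{\tilde\delta,\bm{s}}$ lies near the maximizer of the limit objective $\ell(\betav) + \tilde\delta \betav^\trans \bm{s}$. That maximizer is unique and in a neighborhood of $\betastar$ for $\deltamax$ small enough, because $\info$ is positive definite and $A$ is convex; shrinking $\deltamax$ is allowed since the statement is only as $\delta\to0$. I would redo the argument of \textcite[Theorem 1]{giordano:2024:bayesij}. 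That argument expands the log posterior to fourth order around the mode and uses prior smoothness (four derivatives) to control the remainders, so that third-order posterior cumulants of $\betav$ are $O_p(\nsur^{-2})$ and mixed cumulants with the smooth function $\g$ are likewise $O_p(\nsur^{-2})$ by the delta method. The uniformity in $(\tilde\delta, \bm{s})$ follows for three reasons. First, the tilt is linear in $\betav$, so it does not change the Hessian or any higher derivative of the log likelihood. Second, the parameter set is compact. Third, uniform convergence of $\meansur \ell(\y_i\vert\x_i,\betav)$ and its derivatives on the bounded prior support follows from bounded $\x$ and a uniform LLN. Each Laplace constant can then be taken uniform, which yields the $O(\delta)$ bound with probability approaching one.
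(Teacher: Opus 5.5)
Your proposal is correct and follows essentially the same route as the paper. Both use a Lagrange-remainder Taylor expansion in $\delta$, identify the second derivative as $N^2$ times a third posterior cumulant, and control $\frac{1}{N}\sum_i x_i r(x_i)$ by a uniform law of large numbers over the Donsker class. Both then apply a uniform-in-parameter version of Theorem 1 of the Bayesian infinitesimal jackknife paper to get third cumulants of order $O_p(N^{-2})$, with uniqueness and curvature of the shifted maximizer coming from positive definiteness of the information and convexity of the log partition function. The one helpful variation is that you index the tilted posteriors by a deterministic vector $s$ in a compact ball rather than by $(\delta, r)$, which makes the uniform continuity and envelope conditions the paper checks in \Cref{lem:bayesij_balance} essentially immediate.
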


See \Cref{app:balance} for a proof. \Cref{thm:balance} shows that, to leading
order in $\delta$, the change in the perturbed MrP estimator per unit $\delta$
is $\meansur \w[\mrp]_i \r_i$. The resulting MrP balance check $\imbalance(\r,
\W[\mrp])$ is thus the natural analogue of \Cref{eq:imbalance_pert}. The fact
that \Cref{thm:balance} holds uniformly over a wide class of functions
$\r(\cdot)$ justifies searching over a set of prespecified covariates to check
for imbalance, as is commonly done in practice.

Unlike \Cref{thm:ij}, the assumption in
\Cref{assu:linear} that $\y$ is a sufficient statistic of the model is not
essential for \Cref{thm:balance}, and an inspection of the proof
of \Cref{lem:bayesij_balance} in \Cref{app:balance} will show that 
versions of \Cref{thm:balance} should hold for a much broader class of
models, including generic multivariate exponential families.  As with
\Cref{thm:ij}, the boundedness conditions in \Cref{assu:ij_lite}
can be relaxed to a set of technical moment conditions (see
\Cref{assu:ij} in \Cref{app:ij}). 

\subsubsection{Closed-form examples}\label{sec:examples}

We next apply the balance check of \cref{thm:balance} to the two closed-form
Bayesian examples from \Cref{sec:examples}.

\begin{exmp}[Conjugate normal models]\label{exmp:normal_balance} We continue
\Cref{exmp:normal_weights}, the conjugate normal model, focusing on the
consequences of prior shrinkage for covariate balance; \Cref{app:simple_normal}
gives full details. Applying \Cref{thm:balance} and taking $\r(\x)$ to be the
components of $\x$, we have the following imbalance:
\begin{align}
\frac{1}{\delta} \imbalance(\x, \W[\mrp]) ={}&
\frac{1}{\ntar} \piv^\trans \Xtar 
    \left(
        \id_{P} - \left( 
            \frac{1}{\nsur} \X^\trans \X + 
            \frac{\sigma^2}{\nsur} \betacov^{-1}
        \right)^{-1}
        \left( \frac{1}{\nsur} \X^\trans \X \right)
    \right)
\label{eq:simple_linear_imbalance}
\end{align}
When $\betacov^{-1} \ne \zerov$, $\muhat[\mrp]$ only \emph{approximately}
balances $\x$. As $\betacov^{-1} \to \zerov$, however, the matrix product inside
\cref{eq:simple_linear_imbalance} converges to $\id_P$, so the imbalance
converges to zero. This reflects existing results on the implied imbalance of
ridge regression; see, for example, \citet{bruns2025augmented}.

This setup is formally equivalent to \emph{soft calibration}
\citep{gao2022soft}, where $\betav$ is treated as a random effect with
distribution $\p(\betav)$. Indeed, as with soft calibration, we show in
\cref{app:simple_normal} that $\W[\mrp]$ minimizes the expected mean squared
error of $\frac{1}{\ntar} \piv^\trans \Y_{\tar} - \frac{1}{\nsur} {\W}^\trans
\Y$ when $\Y$ and $\Y_{\tar}$ are generated from a shared draw of $\betav \sim
\p(\betav)$.
\end{exmp}


\begin{exmp}[Asymptotic logistic regression]\label{exmp:logistic_balance}

We continue \cref{exmp:logistic_weights}, the asymptotic logistic
regression setting; \Cref{app:simple_logistic} gives full details.
Importantly, we show that logistic regression generally balances the
\emph{variance-weighted} covariates, but not the covariates themselves.
This mirrors the classical result for model-assisted generalized
regression estimation under a logistic link \citep{firth1998robust}.
Consider variance-weighted regressors $\r(\x) = \v(\x) \x$, where
$\v(\x) = \var{\p(\y \vert \x)}{\y}$ is the conditional variance of
$\y$, estimated by $\vhat_i$ at $\x_i$. Then:
$$
\frac{1}{\delta} \imbalance(\v(\x) \x, \W[\mrp]) = \zerov.
$$
By contrast, logistic regression does not balance the covariates $\x$ themselves:
$$
\frac{1}{\delta} \imbalance(\x, \W[\mrp])
=
\frac{1}{\ntar} \piv^\trans \Xtar
\left(\id_P - \left(\frac{1}{\nsur} \X^\trans \V \X\right)^{-1} \frac{1}{\nsur} \X^\trans \X \right)
\ne \zerov. \textrm{ (in general)}
$$
Unlike the conjugate normal model above, this imbalance is not due to prior
influence, which vanishes in the asymptotic regime. Rather, it arises from a
mismatch in the scale of the model and the perturbation:
\Cref{eq:ytil_expectation} defines perturbations on the $\y$ scale, whereas
logistic regression operates on the log-odds scale. 

Somewhat surprisingly, although it is the presence of $\V$ that causes the
imbalance, this imbalance is not fundamentally driven by nonlinearity of the map
$\muhat[\mrp](\Y)$ as a function of $\Y$.  In \Cref{app:logistic_linearity}, we
consider the special case in which the density ratio is linear in $\x$ and show
that the logistic regression MrP estimator is then approximately linear in $\Y$
for large $\nsur$. Even in this case, however, the imbalance of $\x$ remains
nonzero.
\end{exmp}

\subsubsection{Using a parametric bootstrap to produce restricted responses}
\label{sec:parametric_bootstrap}

As we discuss above, when the response $y$ is binary or otherwise bounded, the
additive construction in \cref{eq:ytil} cannot be valid in general, and
yet it forms the basis for our theoretical results in \cref{thm:balance}.  In
this section, we bridge the gap and discuss how to generate valid
response vectors that \emph{approximately reproduce} imbalance identified
for the continuously perturbed $\ytil$.

The key idea is that the perturbation \cref{eq:ytil} can be a good approximation
to a restricted random variable $\ytiltil$ with \emph{conditional expectation}
shifted relative to $\y$:
\begin{align}
\expect{\p(\ytiltil \vert \x)}{\ytiltil} ={}& 
    \expect{\p(\y \vert \x)}{\y} + \delta \r(\x).
\label{eq:ytil_expectation}
\end{align}
Let $\Ytiltil$ denote the corresponding vector of perturbed responses.
\Cref{eq:ytil_expectation} makes sense since, for any model satisfying
\cref{assu:linear}, the posterior $\post[\Ytiltil]$ depends on $\Ytiltil$ only
through
$$
\betav \mapsto \betav^\trans \meansur \y_i \x_i \approx
    \betav^\trans \expect{\psur(\x)}{\expect{\p(\y \vert \x)}{\y} \x}
    \textrm{ for large }\nsur.
$$
So if $\ytiltil$ satisfies \cref{eq:ytil_expectation} then we can expect
$\post[\Ytiltil] \approx \postd[\Ytil]$, and we can thus think of
\cref{thm:balance} as approximating the behavior of $\post[\Ytiltil]$.  

We briefly describe a simple procedure based on a perturbed parametric bootstrap
to generate $\Ytiltil$ satisfying \cref{eq:ytil_expectation}, while still being
reasonably close to $\Y$; \Cref{app:binary} gives the full details. We begin
with an estimate $\mhat(\x_i)$ of $\expect{\p(\y\vert\x)}{\y}$, as one would for
performing the parametric bootstrap \parencite{efron:1994:bootstrap}. We then
draw $\ytiltil$ with mean $\mhat(\x_i) + \delta \r(\x_i)$, which requires
restricting $\delta$ small enough that $\mhat(\x_i) + \delta \r(\x_i) \in [0,1]$
for all $i$.  The final step is to correlate $\ytiltil_i$ with $\y_i$ while
maintaining the desired marginals, in order to keep $\Ytiltil - \Y$ as small as
possible; a procedure for doing so is described in \cref{app:binary}.  In
\cref{sec:non_local_robustness} we apply this technique to our experiments and
find that the resulting binary vectors match the corresponding predicted results
closely.

\subsubsection{Subgroup contribution}\label{sec:subgroup_contribution}
\def\s{s}
\def\stateset{\mathcal{I}_\s}

An important special case of the general perturbation results is for measuring
the contribution of particular subgroups to the final estimate. Consider the
Name Change application from \Cref{sec:intro_example}, where we group
respondents by education level: less than a college degree, a college degree, or
more than a college degree. If we index these groups from $s=1,\ldots,S$ (here,
$S = 3$), and let $\stateset$ denote the set of indices $i$ that are from
subgroup $\s$, then we can rewrite any CW estimator as
\begin{align*}
\muhat[\cal](\Y) ={}&
    \frac{1}{\nsur}
    \sum_{\s=1}^{S} \sum_{i \in \stateset} \w[\cal]_i \y_i.
\end{align*}
Here, $\w[\cal]_s := \sum_{i \in \stateset} \w[\cal]_i$ is the total weight
given to subgroup $\s$, and measures how much subgroup $\s$ contributes to the
overall estimate.  Comparing how $\w_\s$ varies from subgroup to subgroup can
provide intuition to the analyst about how the data is being used. 

For MrPlew weights, the quantity $\w[\mrp]_s := \sum_{i \in \stateset}
\w[\mrp]_i$ is precisely the left-hand side of the covariate balance check for
the indicator $\z_{is} = \ind{\x_i \textrm{ is in subgroup }\s}$.  Here,
$\z_{is}$ takes value $1$ if the survey observation $i$ is from subgroup $\s$,
and $0$ otherwise.   Thus, $\w[\mrp]_s$ admits an interpretation similar to that
of covariate balance: if the responses in subgroup $\s$ were all to increase in
expectation by a small $\delta$, we would expect $\muhat[\mrp]$ to increase by
$\delta \w[\mrp]_s / \nsur$.  This provides an intuitively meaningful measure of
the ``importance'' of subgroup $\s$ in the estimator $\muhat[\mrp]$.  This
interpretation is supported by \Cref{thm:balance} without modification, simply
by taking $\r(\x)$ to be the indicator of the subgroup categories.

\subsubsection{Why not perturb the log odds?}\label{sec:logodds}
A key feature of our generalized balance checks is that the perturbation to the
response in \Cref{eq:ytil,eq:ytil_expectation} is defined in the space of
responses, $\y$, rather than in the space of log-odds,
${\mlogit}^{-1}(\expect{\p(\y \vert \x)}{\y})$.  Before proceeding, we briefly
motivate and discuss this decision.

As a motivating question, in light of the failure of logistic regression to
balance the regressors as shown in \cref{exmp:logistic_balance}, one might
wonder why we do not use the following perturbation rather than
\cref{eq:ytil_expectation} to define our perturbations to the data:
\begin{align}
\expect{\p(\y \vert \x)}{\ytil} =
\mlogit\left(
{\mlogit}^{-1}(\expect{\p(\y \vert \x)}{\y}) + \delta' \r(\x)
\right).
\label{eq:ytil_logodds}
\end{align}
Since
$$
\frac{\partial}{\partial \delta'} \expect{\p(\y \vert \x)}{\ytil} = \v(\x) \r(\x),
$$
adding a small $\r(\x)$ to the log odds is equivalent to adding a small $\v(\x)
\r(\x)$ to the expectation.  Since the logistic regression models the log odds
as a linear combination of $\x$, it is perfectly able to capture perturbations
of the log odds in the direction of $\x$, and such directions correspond to
perturbations of the form $\v(\x) \x$ in the space of expectations.

We might argue that \cref{eq:ytil_expectation} is easier for a practitioner to
think about intuitively.  But a more fundamental reason is that the perturbation
\cref{eq:ytil_logodds} is \emph{defined in terms of a particular model}, a fact
which makes it difficult to meaningfully compute and compare perturbations in a
model-agnostic way.  For one, in order to actually induce the perturbation
\cref{eq:ytil_logodds}, one needs to have an estimate of $\v(\x)$ that can only
be provided by one of the very models we are trying to interrogate.  Different
logistic regression models, with different priors, will define different
perturbations to the data.  Also, the perturbation \cref{eq:ytil_logodds}
privileges the (fairly arbitary) log-odds parameterization when defining the
data perturbation.  Other link functions, such as the normal distribution
function for probit regression will exhibit unfavorable performance under the
perturbation \cref{eq:ytil_logodds}.  The perturbation \cref{eq:ytil_logodds}
appears to preclude direct comparison between methods that are not guaranteed to
produce valid log-odds estimates, such as OLS and raking.  
Thus, we believe that the failure of logistic regression to balance $\x$
asymptotically should be taken as a warning about logistic regression, not a
failure of our balance checks.  


\section{Applying MrPlew}\label{sec:experiments}

We demonstrate how to use MrPlew for three existing MrP analyses: (1) an
analysis that extrapolates a Twitter survey of name changes after marriage to
the entire US population
\parencite{alexander:2019:namechange,cohen:2019:namechange}; (2) an example of
correcting sampling bias in a national survey on support for same-sex marriage
\parencite{lax:2009:gay,kastellec:2010:laxmrp}; and (3) a textbook example
analyzing the 2020 US presidential election \parencite{alexander:2023:telling}.

In each case, we compare the original MrP analysis to raking on marginals of
coarsened versions of the same regressors used in the MrP analysis
({\small{\texttt{survey::calibrate}}} from \textcite{lumley:2025:survey}).  
Following \textcite{debell:2009:raking}, the raking covariates were coarsened so
that no marginal category contains fewer than 5\% of the survey observations.
For example, when MrP regressions included US state as a regressor, state
indicators were coarsened to geographic region (west, south, northeast, and
midwest) for raking. Finally, for simplicity we removed the small number of
observations with missing regressor values.%
\footnote{MrPlew weights remain computable under Bayesian data imputation,
provided the posterior covariance in \Cref{eq:mrp_w} incorporates the additional
variability due to imputation in the $\nabla_\y \ell(\y_i \vert \x_i, \betav)$
term.}


\subsection{Application descriptions}

We first describe the two additional datasets and analyses that we reproduce; \Cref{sec:intro_example} above gives additional details for the ``Name Change'' analysis.


\paragraph{``Same-Sex Marriage'' analysis.}
Our next analysis is based on the classic MrP primer from \textcite{kastellec:2010:laxmrp},
which provides code and data that is amenable to re-analysis with MCMC
\parencite{kastellec:2024:data}.
\citet{kastellec:2010:laxmrp} analyze five consistently-coded national
polls from 2004 that surveyed support for same-sex marriage; the authors call these polls a ``megapoll.'' The response variable $\y$
is a binary indicator where $1$ encodes support for same-sex marriage, and $0$
encodes either opposition or no expressed opinion; 
see the appendix of
\textcite{lax:2009:gay} for more details.

Following \textcite{kastellec:2010:laxmrp}, we then use MrP to calibrate the
megapoll responses to individual states using a 5\% Public Use Microdata Sample
(PUMS) from the 2000 US census. We fit an MCMC version of the following model,
which the original paper fit using marginal maximum likelihood:
$$
\begin{aligned}
\y \sim{}& \texttt{logit}\Big(~\texttt{(1 | race.female) + (1 | age.cat) + 
  (1 | edu.cat) + (1 | age.edu.cat) +}
  \\{}&\qquad\quad\;\; \texttt{(1 | state) + (1 | region) + (1 | poll) +
  p.relig.full + p.kerry.full}~\Big)
\end{aligned}
$$
See \textcite{kastellec:2010:laxmrp}, as well as the original research 
paper from \textcite{lax:2009:gay} for additional context for the surveys
and regressor definitions.
Following the ``MrP Primer'' chapter of \textcite{mastny:2018:study},
which re-analyzes the same data using \texttt{brms}, we set
standard Gaussian priors for each of the fixed effects and
the standard deviations.\footnote{In \texttt{brms}, which
follows conventions from the \texttt{stan} software package, the 
prior is a truncated half-normal for the standard deviations.}

Our main illustration uses MrP to predict support for same-sex marriage in
California, which is an example of using MrP for small area
estimation.\footnote{When forming predictions, we set the \texttt{poll} random
effect to zero.} In \Cref{sec:partial_pooling}, we also predict support in
Missouri, as a contrast to the estimate for California.
To estimate raking weights for California, we eliminated or coarsened
interactions that occurred less than 5\% of the time in either the survey or
target population. In the end, we used the following for raking: gender,
education level, age category, race (white, black, or other), white /
non-white interacted with gender, and age category interacted with secondary /
no secondary education.


\paragraph{``Election Forecasting'' analysis.}
Our third illustration is based on an analysis of the 2020 US presidential election
taken from
\citet[][Ch. 6, 8, and 16]{alexander:2023:telling}.  The survey dataset is from the Nationscape
project \parencite{tausanovitch:2022:nationscape}, which combines a large number
of surveys conducted between July 2019 and January 2021. The response variable $\y$
is a binary indicator where $1$ encodes support for Joe Biden in the 2020 US
presidential election, and $0$ encodes support for Donald Trump.  

Our objective with MrP is to adjust for the fact that the original survey is a convenience sample that is potentially unrepresentative of the entire US population.  The population dataset is taken from the
2019 American Community Survey (ACS) dataset, accessed through IPUMS
\parencite{ipumsusa}, and is selected as a proxy for the demographic profile of the entire US population.

The regressors are gender (encoded as male or female), four age groups in
roughly 15-year bins, three levels of education, and state.  We fit the
following hierarchical logistic regression:
$$
\begin{aligned}
\y \sim{}& \texttt{logit}\Big(~ \texttt{gender + (1 | age\_group) + (1 | state) + (1 | education\_level)} ~\Big),
\end{aligned}
$$
with normal priors for the scale and intercept terms.
For raking, we coarsened the states to regions as in the Name Change analysis
above.



\subsection{Comparing MrPlew and raking weights}

\begin{table}[tb]
\centering 
\ExperimentTable{}
\caption{High-level descriptions of the three applications.}\label{tab:experiments}
\end{table}

\Cref{tab:experiments} gives a high-level summary of the three main
applications, which exhibit different relationships between raking, MrP, and the
uncorrected survey mean $\overline{\y}$. In the Election Forecasting
application, the survey mean, raking, and MrP are all similar; in the Name
Change application, raking and MrP are similar to one another but quite
different from the survey mean; and in the Same-Sex Marriage application, raking
and the survey mean are similar, but MrP is very different.

\WeightsPlot{}

\Cref{fig:weightsplot} shows the corresponding MrPlew and raking weights. A
qualitative inspection of the weights largely aligns with the comparisons across
point estimates. In the Election Forecasting application, the MrP and raking
weights are broadly similar. By contrast, the MrP and raking weights display
very different patterns in the other two applications. Using the tools developed
above, we will assess whether these differences are meaningful.

\subsection{Frequentist variance}\label{sec:variance_experiments}
\VariancePlot{}

The weights in \Cref{fig:weightsplot} suggest that the frequentist variance of
MrP and raking may be similar in the Election Forecasting example, and that the
MrP variance may be higher than the raking variance in the Same-Sex Marriage
example; for the Name Change application, this is harder to assess visually
given the extreme outliers for raking.

\Cref{fig:varplot} shows the frequentist variance estimates for MrP from
\Cref{thm:ij}, compared to the corresponding raking-based variance estimate,
which we compute as:
$$
\mrpvarhat_{\cal} :=
\meansur \left(
    \nsur \w[\cal]_i \varepsilon_i  - \nsur \overline{\w[\cal] \varepsilon}
    \right)^2
\quad\textrm{where}\quad
\overline{\w[\cal] \varepsilon} := \meansur \w[\cal]_i \varepsilon_i,
$$
for $\varepsilon_i = (\y_i - \yhat_i)$ defined in \Cref{thm:ij} and used in our estimate
of $\mrpvarhat$.

Importantly, the frequentist standard error estimates in \Cref{fig:varplot}
enable an apples-to-apples comparison between MrP and raking, since these
estimates capture uncertainty under repeated sampling for both estimators. More
generally, frequentist variances and Bayesian posterior variances only coincide
asymptotically under posterior concentration and correct specification;
otherwise, these two quantities generally differ, especially in the presence of
weakly estimated random effects
\parencite{kleijn:2012:bvm,giordano:2024:bayesij}.

\Cref{fig:varplot} shows three distinct patterns. For the Election Forecasting
application, the estimated variances are quite close between MrP and raking,
consistent with the qualitative inspection of the weights in
\Cref{fig:weightsplot}. For the Name Change application, by contrast, the
extreme raking weights lead to higher variance for raking than MrP, likely
due to the fact that MrP uses more information than raking, and so
is able to produce a greater variety of weights. Finally, for
the Same-Sex Marriage application, we see the opposite pattern, with nearly
triple the frequentist standard deviation for MrP versus raking. 


Finally, \Cref{app:experiments} assesses the accuracy of the MrPlew variance
estimates by bootstrapping the MCMC procedure
\parencite{huggins:2022:bayesbag,giordano:2024:bayesij}. We consider parametric
and nonparametric bootstraps; both confirm that the standard deviation of
$\muhat[\mrp]$ across bootstrap draws closely matches the $\mrpvarhat$ estimates. 

\subsection{Covariate balance}\label{sec:balance_experiments}

Next, we use the results in \Cref{sec:balance} to examine differences in the implied covariate balance between MrPlew and
raking weights. For each application, we computed
covariate balance for each regressor used in raking, which are all binary or discrete, as well as all two-way interactions of those regressors. \Cref{app:experiments} gives complete results.

\begin{figure}[tbp]
\centering
\begin{subfigure}[t]{0.98\linewidth}
\centering
\includegraphics[width=\linewidth,height=0.441\linewidth]{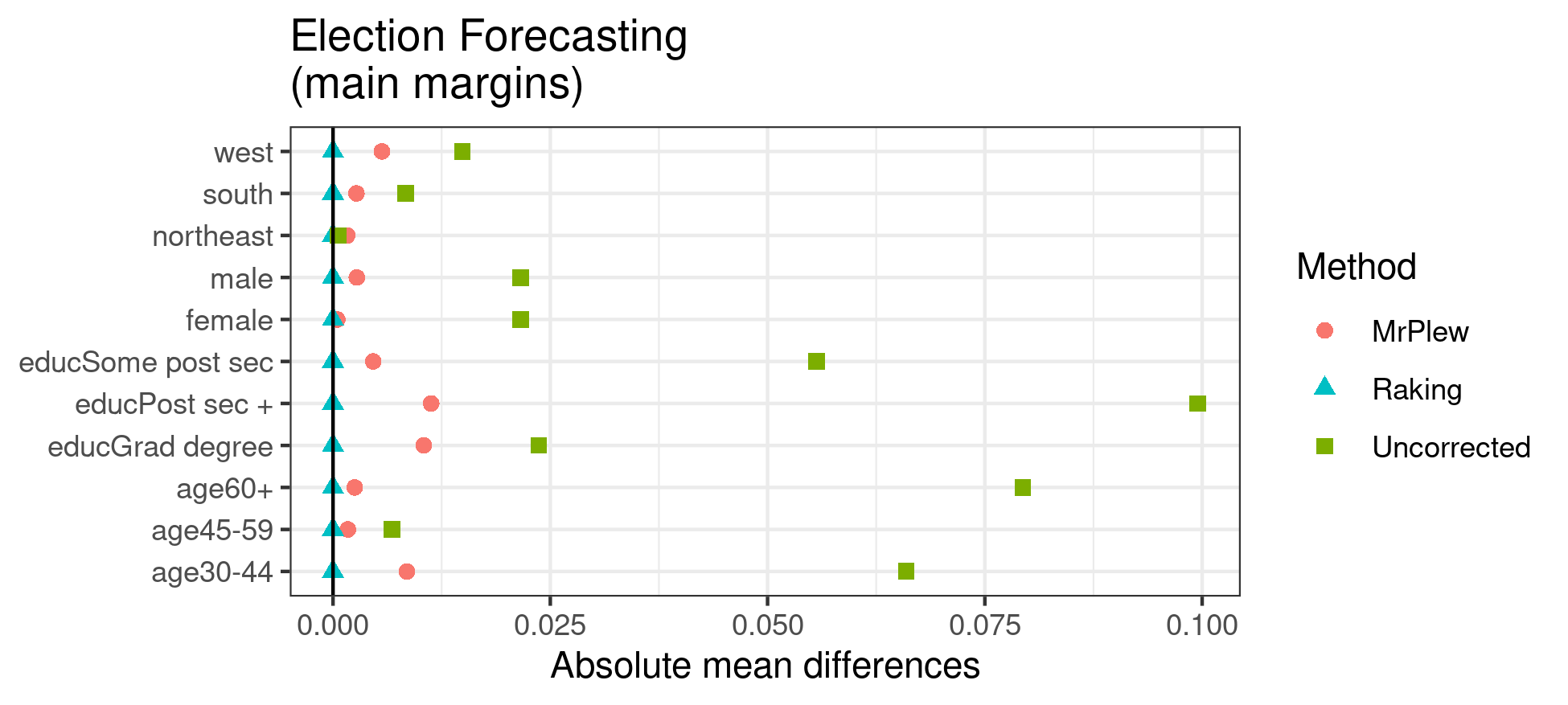}
\caption{Election Forecasting}
\label{fig:storiesbalanceplot}
\end{subfigure}

\vspace{0.75em}

\begin{subfigure}[t]{0.98\linewidth}
\centering
\includegraphics[width=\linewidth,height=0.441\linewidth]{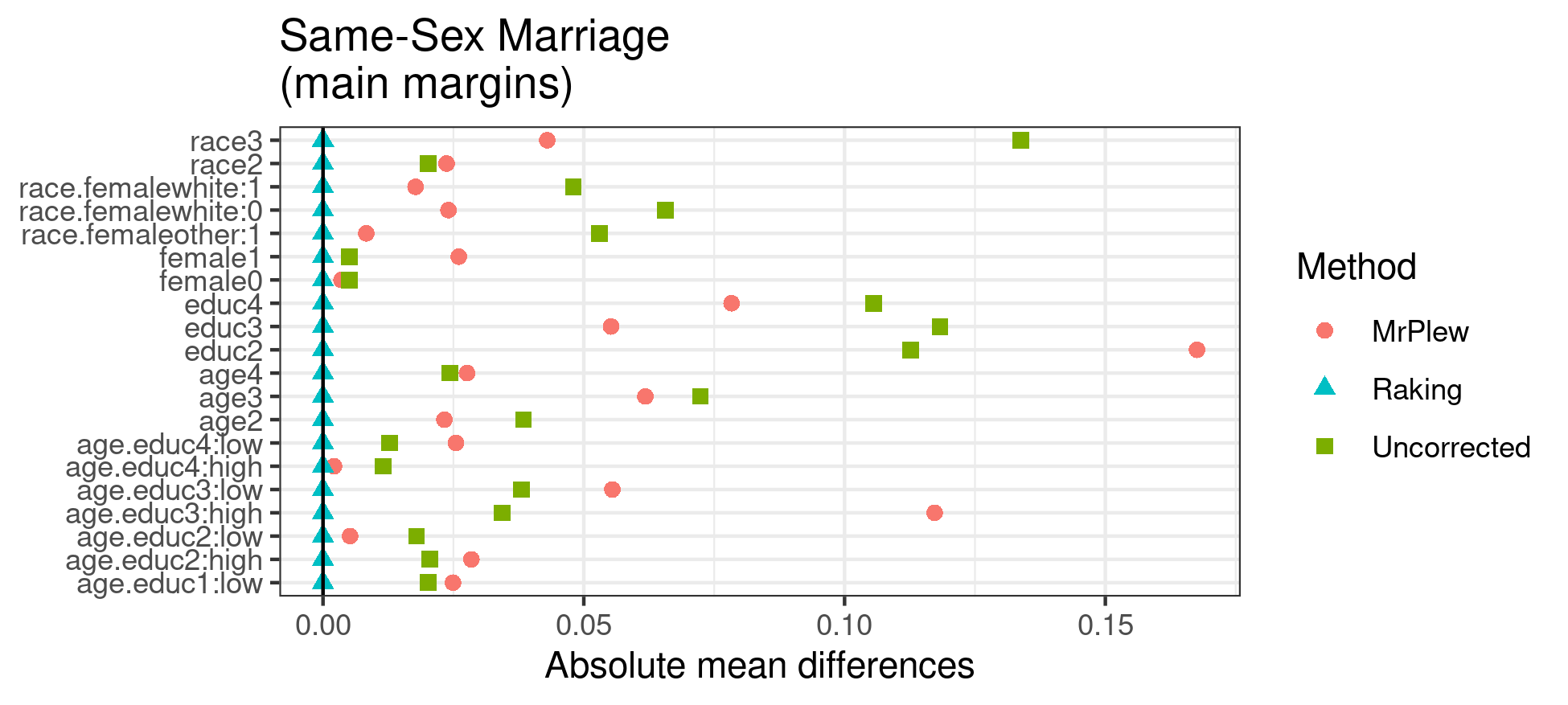}
\caption{Same-Sex Marriage}
\label{fig:laxphilipsbalanceplot}
\end{subfigure}
\caption{Implied covariate balance on raking marginals for the Election Forecasting and Same-Sex Marriage applications.}
\label{fig:balance_marginals}
\end{figure}


\Cref{fig:balance_marginals} shows the implied covariate balance for the
marginal regressors in the Same-Sex Marriage and Election Forecasting
applications; we discuss the covariate balance for the Name Change application
in \Cref{sec:intro_example}. As expected, raking exactly balances the marginal
covariates in both examples. For the Election Forecasting application, MrPlew
also achieves excellent (if not exact) covariate balance on the marginals. For
the Same-Sex Marriage application, however, the implied covariate balance for
MrP appears substantially worse than the \emph{uncorrected} covariate balance
across several age and education levels. We assess whether these local
imbalances translate to meaningful sensitivity in $\hat\mu^{\mrp}$ in
\Cref{sec:non_local_robustness}.

\Cref{app:experiments} includes additional balance plots reporting selected
two-way interactions of the regressors used in raking.\footnote{Due to the large
number of resulting balance checks, we assess covariate balance for interactions
that occurred in at least 5\% of both the survey and target populations. To
preserve space, we only plot interactions in which either MrPlew or raking
weights exhibited some minimal degree of imbalance.} For the Election
Forecasting application, both raking and MrP yield good balance on the
interactions as well. For the Same-Sex Marriage application, however, raking
largely balances two-way interactions but MrP again fails to achieve good
balance.




\subsection{Subgroup contribution}\label{sec:partial_pooling}

Finally, we apply the results in \Cref{sec:subgroup_contribution} to assess the
contribution of each subgroup to the final MrP estimate. We begin with the Name
Change application from \Cref{sec:intro_example}. As \Cref{fig:introplot} shows,
there is substantial imbalance in the interaction between education level and
decade married, and this interaction is plausibly associated with the outcome of
interest. \Cref{fig:alexanderpoolingplot} shows the subgroup contribution for
this interaction, comparing the raking and MrP estimates. The right-hand side
shows that the subgroup with negative weights is precisely the subgroup with the
largest imbalance in \Cref{fig:introplot}, suggesting that the imbalance in this
interaction could contribute to the difference between the MrP and raking
estimates for this application. We explore this interaction further in
\Cref{sec:alexander_add_regs}.

\AlexanderPoolingPlot{}

\begin{figure}[tb]
\centering
\begin{subfigure}[t]{0.98\linewidth}
\centering
\includegraphics[width=\linewidth,height=0.294\linewidth]{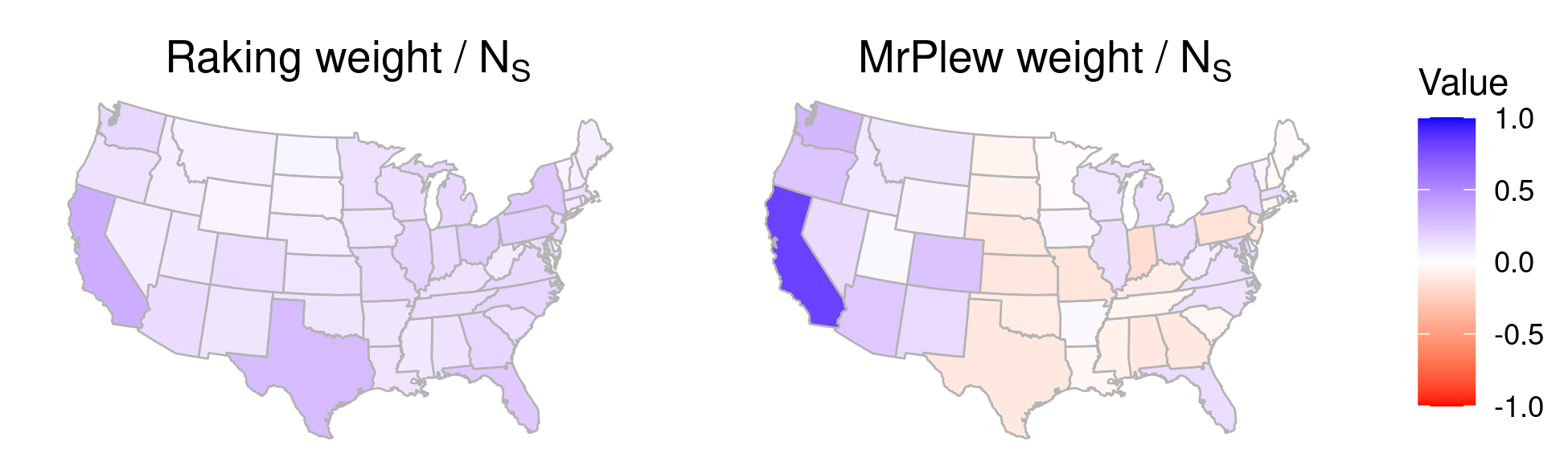}
\caption{Target: California}
\label{fig:laxphilipspoolingplot_ca}
\end{subfigure}

\vspace{0.75em}

\begin{subfigure}[t]{0.98\linewidth}
\centering
\includegraphics[width=\linewidth,height=0.294\linewidth]{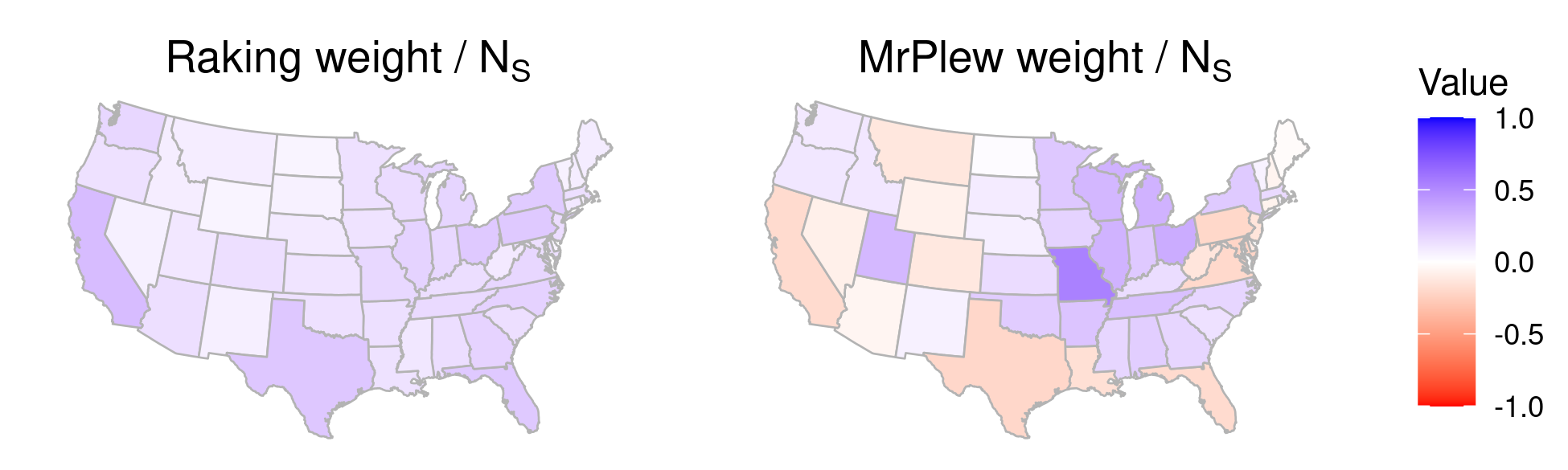}
\caption{Target: Missouri}
\label{fig:laxphilipspoolingplot_mo}
\end{subfigure}
\caption{Subgroup contribution for the Same-Sex Marriage application, with California and Missouri as targets.}
\label{fig:laxphilipspoolingplot}
\end{figure}

Next, we consider subgroup contributions for the Same-Sex Marriage application,
where we focus on states as the subgroups of interest. To demonstrate this
approach, we explore how state contributions change when the target geography
changes from California to Missouri. \Cref{fig:laxphilipspoolingplot_ca} shows
the relative contribution of each state for the raking and MrP estimates of
support for same-sex marriage in California. There are stark differences in the
variation of state weights between the two methods, with substantially more
variability for MrP than for raking.\footnote{ In \Cref{fig:poolingcompplot} of
\Cref{app:experiments}, we give additional evidence that the visual similarity
between the raking subgroup contribution plots is not a bug. Raking produces
very little state-to-state variation in weights between California and Missouri.
}

These differences partly reflect differences in the inputs for the two
estimators: MrP includes state as a predictor in the model, while raking only
includes non-geographic variables. They also reflect the fact that, as shown in
\Cref{fig:weightsplot}, locally equivalent weights for MrP can be negative,
unlike raking weights. As a result, many states actually have a \emph{negative}
contribution on the California-specific estimate, with the largest negative
weights for states in the Midwest. Finally, California itself receives much more
weight under MrP than under raking; this reflects the fact that California is a
large state with many survey observations and that MrP---but not
raking---includes state as a predictor.

\Cref{fig:laxphilipspoolingplot_mo} shows the same plot when we shift the target
from California to Missouri. We see a similar overall pattern, with much more
variability for MrPlew than for raking weights. However, the weights themselves
are now quite different, with states like California and Pennsylvania showing
negative weights for the MrP estimate for Missouri.




\subsection{Name Change example: Expanding the outcome model}
\label{sec:alexander_add_regs}

We fundamentally view MrPlew as an exploratory tool for model interrogation and
argue that the results in this section give useful context for assessing the use
of MrP in a given application. To illustrate one use of MrPlew in model
development, we return to the Name Change example from \Cref{sec:intro_example}.
Recall that \Cref{fig:introplot} showed substantial covariate imbalance in the
\AlexanderColpert{} interaction, which was not modeled in the original MrP
outcome model. 

Although balance is intuitively desireable, imbalance does not itself diagnose
model misspecification in general.  For example, \cref{exmp:logistic_balance}
demonstrates that correctly specified models can imbalance their own regressors,
and \cref{exmp:normal_weights} shows that calibration weights do not necessarily
even include $\Y$ at all and so cannot possibly diagnose problems with $\p(\y
\vert \x)$.  In the presence of imbalance, we recommend first considering
whether the imbalanced regressor plausibly aligns with the response, and then
imformally checking whether there is evidence in the fitted residuals
align with the imbalanced regressor.  In this case, \cref{tab:alexanderresidtable} shows
that, though the response does align with the imbalanced regressor,
the residuals of the model do not, suggesting that the imbalance
may not be affecting the MrP estimate.

\AlexanderResidualTable{}

\AlexanderModelsBalancePlot{}

Nevertheless, suppose we want to modify the model to reduce the imbalance. A
natural strategy to reduce imbalance is to expand the outcome model to include
the interaction term.\footnote{Including the regressor directly does not
generally yield exact balance: \Cref{exmp:logistic_balance} shows that logistic
regression balances $\v(\x)\r(\x)$, not $\r(\x)$. One could in principle include
$\vhat(\x)^{-1}\r(\x)$ for some plug-in $\vhat$, but this is suspect because
$\vhat$ depends on $\Y$ and so is not properly a measurable function of $\x$.}
Importantly, this interaction seems substantively plausible, so the choice to
include this term is not driven by MrPlew alone.
\Cref{fig:alexandermodelsbalance} shows the effect on covariate balance of
adding the imbalanced interaction to the model as a random effect (second panel)
and as a fixed effect (third panel). Both lead to substantial improvement in
covariate balance, even if the resulting point estimate is largely unchanged, as
predicted by the residual means in \cref{tab:alexanderresidtable}. Moreover,
after including the interaction term, MrPlew covariate balance checks do not
flag any additional imbalances of note.

\section{Discussion and open questions}

Our work raises a number of interesting questions which we hope will motivate
further research.

\subsection{Role of locally equivalent weights in model interrogation}

We view MrPlew balance checks as diagnostics for the common setting in which the
analyst trusts the model only approximately and wants to probe how it uses the
data. Balance checks are not, however, tests of model specification.
\Cref{exmp:logistic_balance,exmp:normal_balance} make this concrete. First, for
MrP with an OLS outcome model (\Cref{sec:linear_equivalent_weights}), the
\emph{globally} equivalent weights do not involve $\Y$ at all and so cannot
detect misspecification of $\p(\y \vert \x, \betav)$. Second, for MrP with a
logistic regression model (\Cref{exmp:logistic_balance}), a correctly specified
outcome model can still fail to achieve covariate balance.

Despite these limitations, we believe that the central idea of
\Cref{sec:balance} can be modified to produce actual specification checks:
design a perturbation of the data whose behavior is known under correct
specification, and then design local robustness tests of whether the model
exhibits the expected behavior.  For example, if the response is correctly
specified, then the MrP estimate should be approximately invariant to the
distribution $\psur(\x)$.  Pursuing this idea is the subject of ongoing work.


\subsection{Assessing non-local robustness}
\label{sec:non_local_robustness}

Local robustness is best understood as a computationally efficient approximation
to a non-local robustness question; see \textcite{giordano:2023:bnprejoinder}
and \citet[Appendix C]{giordano:2018:covariances}. In
\Cref{app:non_local_robustness}, we assess how well our local approximation
extrapolates in two applications, with mixed results: the imbalanced interaction
in the Name Change example extrapolates well, while the imbalanced education
level category in the Same-Sex Marriage example does not. The local results from
\Cref{thm:balance} remain valid, but understanding when and why the
approximation extrapolates is an important open question.


To assess non-local behavior, we first use MrPlew balance checks to identify
covariate directions that, in principle, can lead to large changes in the
estimate $\hat{\mu}^{\mrp}$. Using the procedure described in
\cref{sec:parametric_bootstrap}, we then repeatedly generate perturbed data sets
$\Ytiltil$ increasingly far from the original outcome and assess how well the
predictions from MrPlew track the realized changes in
$\hat{\mu}^{\mrp}(\Ytiltil)$. 
Suppose we have selected a perturbation $\delta \r(\x)$ and that we have
generated $\Ytiltil$ satisfying \Cref{eq:ytil_expectation}.  Then we expect that
\begin{align}
\underbrace{
    \meansur \w[\mrp]_i \ytiltil_i - \meansur \w[\mrp]_i \y_i
}_{\text{inferred change on binary vector}}
\approx
\underbrace{
    \delta \meansur \w[\mrp]_i \r_i
}_{{\substack{\text{inferred change on}\\\text{continuous perturbation}}}}.
\label{eq:ytil_expectation_imbalance}
\end{align}
\Cref{eq:ytil_expectation_imbalance} is a combination of \Cref{eq:cal_balance_change}
with the MrPlew weights, and \Cref{eq:ytil_expectation} with the law of large numbers as
$\nsur \rightarrow \infty$.
Next, suppose we have identified a $\Ytiltil$ satisfying \Cref{eq:ytil_expectation_imbalance},
and that our local approximation is good (e.g., see \Cref{eq:mrp_taylor}).  Then 
we expect that
\begin{align}
\underbrace{
    \muhat[\mrp](\Ytiltil) - \muhat[\mrp](\Y)
}_{\text{actual posterior change}}
\approx
\underbrace{
    \meansur \w[\mrp]_i \ytiltil_i - \meansur \w[\mrp]_i \y_i
}_{\text{inferred change on binary vector}}.
\label{eq:ytil_expectation_local_approximation}
\end{align}
Finally, if we have identified a perturbation that is imbalanced according to
$\W[\mrp]$, combining
\cref{eq:ytil_expectation_imbalance,eq:ytil_expectation_local_approximation} we
expect that
\begin{align}
\underbrace{
    \muhat[\mrp](\Ytiltil) - \muhat[\mrp](\Y)
}_{\text{actual posterior change}}
\quad \textrm{ is far from } \;
\underbrace{
    \meantar \pi_j \r_j
}_{\text{actual target change (\cref{eq:imbalance_pert})}}.
\label{eq:mrp_nonlocality}
\end{align}
That is, our local approximation \textit{should} be able to produce binary datasets
$\Ytiltil$ whose posterior change does not match the true target population
change.

In \Cref{app:non_local_robustness}, we ran experiments to check
\Cref{eq:mrp_nonlocality} for the Name Change and Same-Sex Marriage analyses.
For the Same-Sex Marriage example, we took $\r(\x)$ to be the imbalanced raking
marginal \LaxColpert{}; for the Name Change example, we took $\r(\x)$ to be the
imbalanced interaction \AlexanderColpert{}. As desired, the MrPlew prediction
for the generated binary $\Ytiltil$ indeed predicts a divergence with the truth:
the generated $\Ytiltil$ do in fact identify potentially problematic response
vectors. For the Name Change example, this approach shows that the local
approximation is quite good and extrapolates well. For the Same-Sex Marriage
example, however, the extrapolation is quite poor, suggesting we should be wary
of over-relying on the reported model checks.

It is known that linear approximations to posterior expectations may fail for
large changes, especially in posteriors with large numbers of poorly-estimated
random effects \parencite[Section 3]{giordano:2024:bayesij}. It is common in MrP
problems to have a large number of random effects, and so local approximations
should be taken with a grain of salt.  However, precisely why the nonlinearity
is so severe in the Same-Sex Marriage analysis but not in the Name Change
analysis is an open question; see \Cref{app:experiments} for additional
discussion.  Exploring posterior nonlinearity in more depth is an important
direction for future work.

\subsection{Creating asymptotically globally linear MrP estimators with DrP}
\label{sec:drp_globally_linear}

\newcommand{\drp}{\textrm{DrP}}
\newcommand{\rhohat}{\hat\rho}

As a complement to assessing curvature as discussed in
\cref{sec:non_local_robustness}, one might instead improve the usefulness of
locally linear diagnostics by modifying MrP estimators that are approximately
globally linear.  Here, we briefly argue that a simple technique for doing so is
provided by DrP \parencite{benmichael2024multilevel}.  An alternative approach
for creating globally linear MrP estimates is augmenting the MrP model with
regressors that predict $\ptar(\x) / \psur(\x)$, as discussed in
\cref{exmp:logistic_linearity} of \cref{app:logistic_linearity}. 

The DrP estimator is defined as follows.  Let $\rho(\x) := \ptar(\x) /
\psur(\x)$ denote the Radon-Nikodym derivative of the target regressor
distribution with respesct to the survey, and let $\rhohat(\cdot)$ denote an
estimate of $\rho(\cdot)$ that is independent of $\Y$ (e.g., formed with
covariates only or by sample splitting).  As usual, write $\rhohat_i =
\rho(\x_i)$.  For the present, take $\pi(\cdot) = 1$ for simplicity. Then the
DrP estimator based on $\muhat[\mrp](\Y)$ is
\parencite[Equation (12)]{benmichael2024multilevel}
$$
\muhat[\drp](\Y) = 
\muhat[\mrp](\Y) + \meansur \rhohat_i (\y_i - \mhat_i).
$$
Since $\muhat[\drp](\Y)$ is a function of $\Y$, we can form MrPlew weights for
the DrP estimate, and a simple argument sketched in \cref{app:drp_linearity}
shows that, if $\rhohat(\cdot)$ is a consistent estimator of $\rho(\cdot)$, then
\begin{align}
\w[\drp]_{i} = \rho(\x_i) + o_p(1),\label{eq:drp_linear}
\end{align}
where $o_p(1)$ is a quantity that vanishes as both $\nsur$ and $\ntar$ go to
infinity. Since $\rho(\x_i)$ does not depend on $\Y$, $\muhat[\drp](\Y)$ becomes
linear as $\nsur$ and $\ntar$ go to infinity, and $\w[\drp]_i$ converges to the
importance ratio $\rho(\x_i)$.  The authors are hopeful that modifications such
as this can improve the shortcomings of local robustness demonstrated in
\cref{sec:non_local_robustness}, though we leave rigorous theoretical and
experimental analysis of this approach for future work.



\section{Data and software availability}

Code to produce all the experiments of our paper
can be found at the git repo\\
\url{https://github.com/rgiordan/MrPLocallyEquivalentWeightsPaper}.  An
open-source software implementation of MrPlew is available
at \url{https://github.com/rgiordan/mrplew}.

\newpage
\printbibliography


\newpage
\begin{appendices}
\crefalias{section}{appendix}

\section{Details of closed-form examples}\label{app:simple_examples}

\subsection{Counterexample for \cref{thm:ij}}

\begin{exmp}\label{exmp:gaussian_glm}
Suppose that $\p(\y \vert \x, \betav, \sigma)$ is normal with mean
$\betav^\trans \x$ and variance $\sigma^2$.  Then the sufficient statistics are
$(\y, \y^2)$, and the model does not satisfy \cref{assu:linear}.  We will show
that, in general, the estimator $\mrpvarhat$ in \cref{eq:mrplew_variance} does not
provide consistent estimates of the frequentist variance.

In this case, the log likelihood is given (up to a
constant $C$ not depending on $\y_i$) by
$$
\begin{aligned}
\ell(\y_i \vert \x_i, \betav, \sigma) ={}&
-\frac{1}{2}\sigma^{-2} \left(
    \y_i^2  -
    2 \betav^\trans \x_i \y_i +
    \betav^\trans \x_i^\trans \x_i \betav
\right) + C \Rightarrow \\
\nabla_\y \ell(y_i | \x_i, \betav, \sigma) ={}&
    -\sigma^{-2} \left(\y_i  - 2 \betav^\trans \x_i \y_i\right)
\end{aligned}
$$
so $\w[\mrp]_i$ is given by (see \cref{eq:mrp_w})
$$
\w[\mrp]_i = \nsur
    \cov{\p(\betav, \sigma \vert \Y)}{
        -\sigma^{-2} \left(\y_i  - \betav^\trans \x_i \right),
        \g(\betav)}.
$$
In contrast, the empirical influence function of $\muhat[\mrp]$
\parencite{giordano:2024:bayesij} is given by
$$
\psi_i =
    \cov{\p(\betav, \sigma \vert \Y)}{
        \ell(\y_i \vert \x_i, \betav, \sigma),
        \g(\betav)},
$$
and $\nsur \psi_i \ne \w[\mrp]_i \varepsilon_i$ in general, even asymptotically.
For example, the coefficient in front of the $\y_i^2$ term is different
by a factor of two.
However, Theorem 2 of \parencite{giordano:2024:bayesij} shows that,
under mild regularity conditions,
$$
\meansur \left(N \psi_i -  (N \overline{\psi})\right)^2 \rightarrow \V.
$$
Since $\mrpvarhat$ of \cref{thm:ij} converges, in general, to a different limit
than that of the IJ estimator,
it cannot be consistent.

\end{exmp}


\subsection{Details for \cref{exmp:normal_weights,exmp:normal_weights}}
\label{app:simple_normal}

It will be convenient to define the following quantities:
$$
\xcovhat := \frac{1}{\nsur} \X^\trans \X
\quad\quad
\xscovhat := \xcovhat + \frac{\sigma^2}{\nsur} \betacov^{-1}
\quad\quad
\xycovhat := \frac{1}{\nsur} \X^\trans \Y.
$$

For \cref{eq:simple_normal_normal_w}, we have
\begin{align}
\post ={}& \gauss{\xscovhat^{-1} \xycovhat,
    \frac{\sigma^2}{\nsur} \xscovhat^{-1}
}
\nonumber \Rightarrow \\
\muhat[\mrp](\Y) ={}& \frac{1}{\ntar}\piv^\trans \Xtar \expect{\post}{\betav}
={}
\frac{1}{\ntar} \piv^\trans \Xtar \xscovhat^{-1} \xycovhat
\nonumber  \Rightarrow \\
\W[\mrp] ={}& \frac{1}{\ntar} \X \xscovhat^{-1} \Xtar^\trans \pi.
\end{align}

We can derive \cref{eq:simple_linear_imbalance}

$$
\begin{aligned}
\frac{1}{\delta} \imbalance(\x, \W[\mrp]) ={}&
\frac{1}{\ntar} \piv^\trans \Xtar - 
\frac{1}{\nsur} {\W[\mrp]}^\trans \X
\\={}&
\frac{1}{\ntar} \piv^\trans \Xtar - 
\frac{1}{\nsur \ntar} \piv^\trans \Xtar \xscovhat^{-1} \X^\trans \X
\\={}&
\frac{1}{\ntar} \piv^\trans \Xtar \left(\id_{P} - \xscovhat^{-1} \xcovhat \right)
\end{aligned}
$$

Finally, we prove that $\W[\mrp]$ minimizes the expected mean squared error
marginally over $\betav$.  First, marginally we have
$$
\cov{p(\Y \vert \X)}{\Y} = \sigma^2 \id_{\nsur} + \X \betacov \X^\trans
\quad\textrm{(marginally over $\betav$, correct specification)}.
$$
Then
$$
\begin{aligned}
\mathcal{E}(\W) 
:={}& 
\frac{1}{\ntar} \piv^\trans \Y_{\tar} - \frac{1}{\nsur} {\W}^\trans \Y
\\={}&
    \frac{1}{\ntar} \piv^\trans \Xtar \betav + \frac{1}{\ntar} \piv^\trans\resv_\tar - 
    \frac{1}{\nsur} {\W}^\trans \X \betav - \frac{1}{\nsur} {\W}^\trans \resv
\\={}&
    \left(
        \frac{1}{\ntar} \piv^\trans \Xtar - \frac{1}{\nsur} {\W}^\trans \X
    \right) \betav + \frac{1}{\ntar} \piv^\trans\resv_\tar - 
    \frac{1}{\nsur} {\W}^\trans \resv 
    \Rightarrow\\
\expect{\p(\Y, \Y_\tar \vert \X, \Xtar)}{\mathcal{E}(\W)^2} ={}&
\left(
        \frac{1}{\ntar} \piv^\trans \Xtar - \frac{1}{\nsur} {\W}^\trans \X
    \right) \betacov
    \left(
        \frac{1}{\ntar} \Xtar^\trans \piv - \frac{1}{\nsur} \X^\trans \W
    \right) + 
\\{}&
    \frac{1}{\nsur^2} {\W}^\trans \W \sigma^2 + \frac{1}{\ntar^2} \piv^\trans \piv \sigma^2
    \Rightarrow\\
\frac{\partial}{\partial \W} \expect{\p(\Y, \Y_\tar \vert \X, \Xtar)}{\mathcal{E}(\W)^2} ={}&
    \frac{2}{\nsur^2} \left(\sigma^2 \id_{\nsur} + \X \betacov \X^\trans \right) \W - 
    \frac{2}{\nsur \ntar} \X \betacov \Xtar^\trans \piv
    \Rightarrow\\
\W[*] ={}&
\frac{\nsur^2}{\nsur \ntar} 
    \left( \sigma^2 \id_{\nsur} + \X \betacov \X^\trans \right)^{-1}
    \X \betacov \Xtar^\trans \piv
\\={}&
\frac{\nsur}{\ntar}
    \X  
    \left( \sigma^2 \id_{P} + \betacov \X^\trans \X  \right)^{-1} \betacov
    \Xtar^\trans \piv
\\={}&
\frac{\nsur}{\ntar}
    \X  
    \left( \sigma^2 \betacov^{-1} + \X^\trans \X  \right)^{-1}
    \Xtar^\trans \piv
\\={}&
\frac{1}{\ntar} \X  \xscovhat^{-1} \Xtar^\trans \piv = \W[\mrp].
\end{aligned}
$$
In the preceding display, we used the push-through identity for
matrix inverses \parencite{henderson:1981:pushthrough}.


\subsection{Details for \cref{exmp:logistic_weights,exmp:logistic_balance}}
\label{app:simple_logistic}

Recall the setup in \cref{exmp:logistic_weights}.  We have
$$
\betahat = \argmax{\betav} \meansur \log \p(\y \vert \x, \betav),
$$
and so
$$
\begin{aligned}
\betagrad \meansur \log \p(\y_i \vert \x_i, \betahat) ={}& \meansur (\y_i - \yhat_i) \x_i
\quad\textrm{and}\\
\betagrad[2] \meansur \log \p(\y_i \vert \x_i, \betahat) ={}& -\meansur \vhat_i \x_i \x_i^\trans
= -\frac{1}{\nsur} \X^\trans \V \X.
\end{aligned}
$$

Note that
$$
\betagrad \g(\betavhat) = \meantar \pi_j \expect{\post}{\mlogit(\betav^\trans \x_j)}
$$
The Bernstein-von Mises theorem then states that, for large $\nsur$,
\begin{align}
\p\left( 
\sqrt{\nsur}(\betav - \betahat) 
\vert
\Y
\right)
\overset{\textrm{approx}}{\approx}  
\gauss{\zerov, \left(\frac{1}{\nsur} \X^\trans \V \X\right)^{-1}}.
\label{eq:logistic_bvm}
\end{align}

We can use the delta method and the approximation \cref{eq:logistic_bvm} to approximate
the MrPlew weights:
$$
\begin{aligned}
\w[\mrp]_i ={}& \nsur \cov{\post}{
    \g(\betav), \frac{\partial}{\partial \y_i} \log \p(\y_i \vert \x_i, \beta)}
\\={}&
    \nsur \cov{\post}{\g(\betav), \betav^\trans \x_i}
\\\approx{}&
\fracat{\partial \g(\betav)}{\partial \betav^\trans}{\betahat}
\left(\frac{1}{\nsur} \X^\trans \V \X\right)^{-1} \x_i.
\end{aligned}
$$
Next,
$$
\begin{aligned}
\fracat{\partial \g(\betav)}{\partial \betav}{\betahat} ={}
\meantar \pi_j \fracat{\partial \mlogit(\betav^\trans \x_j)}
                      {\partial \betav^\trans}{\betahat}
    ={}
\meantar \pi_j \vhat_j \x_j
= \frac{1}{\ntar} \Xtar^\trans \Vtar \piv.
\end{aligned}
$$
Combining gives \cref{eq:logistic_mrplew}.

Next, we consider covariate balance, beginning with the variance-weighted
function $\r(\x) = \v \x$, where we approximate $\v(\x_i)$ with $\vhat_i$ and
$\v(\x_j)$ with $\vhat_j$. 
$$
\begin{aligned}
\MoveEqLeft
\frac{1}{\delta} \imbalance(\v \x, \W[\mrp])
=\\{}&
\frac{1}{\ntar} \piv^\trans \Vtar \Xtar - 
\frac{1}{\nsur} {\W[\mrp]}^\trans \V \X
=\\{}&
\frac{1}{\ntar} \piv^\trans \Vtar \Xtar - 
\frac{1}{\ntar} \piv^\trans \Vtar \Xtar 
    \left(\frac{1}{\nsur} \X^\trans \V \X\right)^{-1} \frac{1}{\nsur} \X^\trans \V \X
=\\{}&
\frac{1}{\ntar} \piv^\trans \Vtar \Xtar - 
\frac{1}{\nsur} \piv^\trans \Vtar \Xtar = \zerov.
\end{aligned}
$$
In contrast, the imbalance for $\r(\x) = \x$ is generally not zero:
$$
\begin{aligned}
\MoveEqLeft
\frac{1}{\delta} \imbalance(\x, \W[\mrp])
=\\{}&
\frac{1}{\ntar} \piv^\trans \Xtar - 
\frac{1}{\nsur} {\W[\mrp]}^\trans \X
=\\{}&
\frac{1}{\ntar} \piv^\trans \Xtar - 
\frac{1}{\ntar} \piv^\trans \Vtar \Xtar 
    \left(\frac{1}{\nsur} \X^\trans \V \X\right)^{-1} \frac{1}{\nsur} \X^\trans \X
=\\{}&
\frac{1}{\ntar} \piv^\trans \Xtar
\left(\id_P - \left(\frac{1}{\nsur} \X^\trans \V \X\right)^{-1} \frac{1}{\nsur} \X^\trans \X \right)
\ne \zerov. \textrm{ (in general)}
\end{aligned}
$$


\subsection{Imbalance is not due to nonlinearity alone}
\label{app:logistic_linearity}
\begin{exmp}[Imbalance is not due to nonlinearity alone]
    \label{exmp:logistic_linearity}
One might wonder whether the failure of logistic regression to balance the
covariates in \cref{exmp:logistic_balance} is due to the nonlinearity of
the mapping $\Y \mapsto \muhat[\mrp](\Y)$. We show here that this is
not the case, and that in fact certain MrP estimators with nonlinear link
functions can be approximately linear in $\Y$ for large $\nsur$.

We consider again the asymptotic regime of logistic regression in
\cref{exmp:logistic_weights,exmp:logistic_balance}. Suppose we make the
following (improbable) assumption:
\begin{align}
    \textrm{Assume that there exists }\alpha\textrm{ such that }
\frac{\pi(\x) \ptar(\x)}{\psur(\x)} = \alphav^\trans \x.
\label{eq:density_ratio_lin}
\end{align}
\Cref{eq:density_ratio_lin} may be an unreasonable assumpton in general, but
with appropriate restrictions on the domain of $\x$, one could certainly
generate data according to \cref{eq:density_ratio_lin}, and so it is not
impossible that \cref{eq:density_ratio_lin} can be satisfied.  When
\cref{eq:density_ratio_lin} holds, then we show below
that, up to asymptotic approximations,
\begin{align}
\muhat[\mrp](\Y) \approx{}& \meansur \alphav^\trans \x_i \y_i,
\label{eq:logistic_linear}
\end{align}
where the approximation becomes exact as $\nsur \rightarrow \infty$.

\Cref{eq:logistic_linear} is perhaps surprising. Note that, even if one found
the assumption in \cref{eq:density_ratio_lin} plausible, it would be difficult
to directly estimate $\alphav$ because we do not directly observe $\ptar(\x)$
and $\psur(\x)$.  However, \cref{eq:logistic_linear} shows that $\w[\mrp]_i$ in
fact acts as a consistent estimator of $\alphav$.
\end{exmp}

\paragraph{Derivation.}
Recall that $\betahat$ is
the solution to the estimating equation
\begin{align}
\betagrad \meansur \log \p(\y_i \vert \x_i, \betahat) ={}
    \meansur (\y_i - \yhat_i) \x_i = \zerov.
    \label{eq:logistic_estimating_equation}
\end{align}
We will refer to the Radon-Nikodym derivative of $\pi(\cdot) \ptar(\cdot)$ with
respect to $\psur(\x)$ as the ``importance ratio,'' since it is the appropriate
weighting for a Horwitz-Thompson importance sampling estimator for converting a
sample from $\psur(\x)$ into a sample from $\ptar(\x)$.  In a causal inference
setting, the importance ratio is equivalent to the propensity score.

When \cref{eq:density_ratio_lin} holds, and when $\ntar$ and $\nsur$ are large
enough to invoke a law of large numbers, we can write
$$
\begin{aligned}
\muhat[\mrp](\Y) ={}&
    \meantar \pi_j \yhat_j & \textrm{(definition)}
\\\approx{}& \meantar \pi_j \x_j^\trans \betahat  & \textrm{(Bernstein von-Mises)}
\\\approx{}& \int \pi(\x) \x^\trans \ptar(\x) d\x \betahat & \textrm{(law of large numbers)}
\\={}& \alphav^\trans \left(
    \int \x \x^\trans \psur(\x) d\x
\right)  \betahat & \textrm{(\cref{eq:density_ratio_lin})}
\\\approx{}& \alphav^\trans \left(
    \meansur \x_i \x_i^\trans
    \right) \betahat  & \textrm{(law of large numbers)}
\\=& \alphav^\trans \meansur \x_i \yhat_i  & \textrm{(definition)}
\\=& \meansur \alphav^\trans \x_i \y_i.  & \textrm{\cref{eq:logistic_estimating_equation}}
\end{aligned}
$$
Up to the approximations in the preceding display, we thus have that $\Y \mapsto
\muhat[\mrp](\Y)$ is linear, and
\begin{align}
\w[\mrp]_i \approx \alphav^\trans \x_i = \frac{\pi(\x_i) \ptar(\x_i)}{\psur(\x_i)}.
\label{eq:implicit_weights}
\end{align}
Despite the linearity of $\Y \mapsto \muhat[\mrp](\Y)$, the conclusions of
\cref{exmp:logistic_balance} still hold: logistic regression, even
when linear, balances $\v(\x) \x$, not $\x$.


\subsection{Global linearity of DrP}
\label{app:drp_linearity}

We continue the discussion in \cref{sec:drp_globally_linear}
to support \cref{eq:drp_linear}.  First, we
can expand DrP as
$$
\muhat[\drp](\Y) = 
\muhat[\mrp](\Y) + \meansur \rhohat_i (\y_i - \mhat_i) = 
\meantar \mhat_j + \meansur \rhohat_i (\y_i - \mhat_i).
$$
The MrPlew weights for DrP are given by
$$
\w[\drp]_{i'} = 
    \frac{\nsur}{\ntar} \sumtar \frac{\partial \mhat_j}{\partial \y_{i'}} -
    \frac{\nsur}{\ntar} \sumsur \rhohat_i \frac{\partial \mhat_j}{\partial \y_{i'}} + 
    \rhohat_{i'},
$$
where we have used the fact that $\rhohat(\cdot)$ does not depend on $\y_{i'}$.

Applying a LLN to
both the survey and target populations gives
$$
\begin{aligned}
\w[\drp]_{i'} \approx{}&
    \int \nsur \frac{\partial \mhat(\x)}{\partial \y_{i'}} \ptar(\x) -
    \int \nsur  \frac{\partial \mhat(\x)}{\partial \y_{i'}} \rhohat(\x) \psur(\x) + \rhohat_{i'}
\\={}&
    \int \nsur \frac{\partial \mhat(\x)}{\partial \y_{i'}} (\rho(\x) - \rhohat(\x)) \psur(\x) 
    + \rhohat_{i'}.
\end{aligned}
$$
Under the assumption that $\var{\psur(\x)}{\nsur \frac{\partial
\mhat(\x)}{\partial \y_{i'}}} = O_p(1)$ (see \cref{lem:cov_ij}) and
$\rhohat(\cdot)$ is consistent in the sense that $\var{\psur(\x)}{\rho(\x) -
\rhohat(\x)} = o_p(1)$, Cauchy-Schwartz gives  
\begin{align}
\w[\drp]_{i'} = \rhohat_{i'} + o_p(1).
\end{align}
An argument analogous to \cref{thm:balance} can make this argument uniform in
$\i'$, and justify the replacement of $\rhohat_{i'}$ with $\rho(\x_i)$, though a
careful rigorous treatment is beyond the scope of the current paper.

\section{Proof of \cref{thm:ij}}\label{app:ij}

\def\assuij{\cref{assu:linear,assu:mle,assu:ij_lite}}

We first note that the boundedness of
$\x$ in \cref{assu:ij_lite} implies \cref{assu:ij}, which is more
technical but closer to what we actually need in the proof of
\cref{lem:ij}.

\begin{assu}\label{assu:ij}
    Let \cref{assu:ij_lite} hold (recall that this includes
    \cref{assu:linear,assu:mle}), but in place of 
    the boundedness of $\x$, assume that
\begin{itemize}
    \item $\expect{\psur(\x)}{\norm{\x}_2^2} < \infty$
    \item $\expect{\psur(\x)}{\y^2 \norm{\x}_2^2} < \infty$
    \item $\expect{\psur(\x)}{
        \norm{\Agrad{k}({\betastar}^\trans \x) \x^{\otimes k}}_2^2} < \infty$
        for $k \in \{0, \ldots, 4\}$
    \item $\expect{\psur(\x)}{
        \sup_{\betav \in \betaball}
        \norm{\Agrad{4}(\betav^\trans \x) \x^{\otimes 4}}_2^2} < \infty$ for
        some $\Delta > 0$.
\end{itemize}
In place of the assumption that $\p(\betav)$ has bounded support, assume that
\begin{itemize}
    \item $\expect{\p(\betav)}{\g(\betav)^2} < \infty$.
    \item $\expect{\psur(\x)}{
                \expect{\p(\betav)}{\ell(\y \vert \x, \betav)^2}
            } < \infty$.
    \item $\expect{\psur(\x)}{
                \expect{\p(\betav)}{
                    \g(\betav)^2 \ell(\y \vert \x, \betav)^2
                    }} < \infty$.
\end{itemize}
\end{assu}

For the remainder of the proof, let $\betaball := \{ \betav: \norm{\betav -
\betastar}_2^2 \le \Delta \}$ to be the $\Delta$--ball around $\betastar$.  We
first show a technical regularity lemma.

\begin{lem}\label{lem:ij} Under \assuij{}, there exists a $\Delta > 0$
    such that, for all
    $\betav \in \betaball$, $\ell(\betav)$
    is four times continuously differentiable and the exchange of partial
    differentiation with respect to $\betav$ and integration with respect to
    $\psur(\y, \x)$ is justified.

    Additionally, for $0 \le k \le 4$, there exists functions $M_k(\x, \y)$ such that
    \begin{align}
    \sup_{\betav \in \betaball} \norm{\betagrad[k] \ell(\betav)}_2^2 \le M_k(\x, \y)
    \quad\textrm{and}\quad\expect{\psur(\x, \y)}{M_k(\x, \y)} < \infty.
    \label{eq:ulln}
    \end{align}
\end{lem}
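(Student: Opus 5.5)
We work directly from the canonical form of \Cref{assu:linear}. Write the per-observation log likelihood as $\ell(\y \vert \x, \betav) = \y\, \betav^\trans \x - \A(\betav^\trans \x)$. Its $\betav$-derivatives are then explicit:
$$
\betagrad \ell = (\y - \Agrad{1}(\betav^\trans\x))\x,
\qquad
\betagrad[k] \ell = -\Agrad{k}(\betav^\trans \x)\, \x^{\otimes k}
\quad (k = 2,3,4).
$$
Two standard facts about natural exponential families do most of the work. First, $\A$ is real analytic, hence $C^\infty$, on the interior of its natural parameter space. Second, \Cref{assu:mle} gives $\expect{\psur(\x)}{\A(\betav^\trans\x)} < \infty$ for every finite $\betav$, so $\betav^\trans\x$ lies in the natural parameter space for $\psur$-almost every $\x$ and every $\betav$.

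The plan is to build the dominating functions $M_k$ and then invoke the usual dominated-convergence criterion for differentiating under the integral sign. That criterion says: if $\betagrad[k]\ell$ is continuous in $\betav$ and $\sup_{\betaball}\norm{\betagrad[k]\ell}$ is integrable for $k\le 4$, then $\ell(\betav) = \expect{\psur}{\ell(\y\vert\x,\betav)}$ is four times continuously differentiable on $\betaball$ and derivatives commute with the expectation. Continuity of the derivatives of $\ell(\betav)$ follows from the same domination. (Bounds on squared norms dominate the norms themselves via $\norm{\cdot}_2 \le 1 + \norm{\cdot}_2^2$.)

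The $M_k$ come from the boundedness in \Cref{assu:ij_lite}, under which $\norm{\x}_2 \le B$ almost surely.
\begin{itemize}
\item Fix $\Delta>0$. On $\betaball$ we have $\abs{\betav^\trans\x} \le (\norm{\betastar}_2 + \sqrt{\Delta})B =: C$.
\item The natural parameter space is an interval containing every value $\betav^\trans\x$ for every $\betav$. With $\x$ bounded and $\betav$ ranging over all of $\rdom{\rdim}$, this should force it to contain $[-C, C]$, at least when the support of $\x$ spans, as \Cref{assu:mle} requires. In any case I would argue that the relevant compact set of $\eta$ values lies in its interior.
\item Continuity of $\Agrad{k}$ on that compact set gives $\sup_{\abs{\eta}\le C}\abs{\Agrad{k}(\eta)} =: a_k < \infty$.
\item Take $M_k(\x,\y) = a_k^2 B^{2k}$ for $k \ge 2$.
\item Take $M_1 = 2(\y^2 + a_1^2)B^2$, which is integrable because $\expect{\psur}{\y^2}<\infty$.
\item Take $M_0 = 2(\y^2 C^2 + a_0^2)$, which is integrable for the same reason.
\end{itemize}

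To cover the weaker \Cref{assu:ij} as well, I would instead use its moment conditions directly. For $k \le 3$, the plan is to Taylor expand $\Agrad{k}(\betav^\trans\x)$ about $\betastar$ up to fourth order, with Lagrange remainder. The pieces are then controlled as follows: the terms at $\betastar$ by the moments of $\Agrad{k}({\betastar}^\trans\x)\x^{\otimes k}$, and the remainder by $\sup_{\betaball}\norm{\Agrad{4}\x^{\otimes 4}}$ times powers of $\Delta^{1/2}$. Combining these with Cauchy--Schwarz and the moment bounds on $\norm{\x}_2^2$ and $\y^2\norm{\x}_2^2$ yields integrable envelopes.

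The main obstacle is exactly this step: ensuring $\betav^\trans \x$ stays uniformly in the interior of the natural parameter space, so that the $\Agrad{k}$ are finite and continuous there. I expect to handle it by the finiteness of $\expect{\psur}{\A(\betav^\trans\x)}$ for all $\betav$ in the bounded case, and by the supremum assumption on $\Agrad{4}$ over $\betaball$ in the general case.
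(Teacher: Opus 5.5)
Your proof is correct, but your main line differs from the paper's. The paper never uses boundedness of $\x$ directly. It asserts that boundedness implies the moment conditions of \Cref{assu:ij}, and then:
\begin{itemize}
\item shows $\mathbb{E}_{\psur}\sup_{\betav\in\betaball}\norm{\Agrad{k}(\betav^\trans\x)\x^{\otimes k}}_2^2<\infty$ for $k\le 4$, using a one-step fundamental-theorem-of-calculus expansion along the segment from $\betastar$ to $\betav$ and backward induction from $k=4$;
\item handles $k=0,1$ with the $\y^2\norm{\x}_2^2$ and $\y^2$ moments.
\end{itemize}
Your secondary sketch, a fourth-order Taylor expansion whose Lagrange point stays in the convex ball, is exactly that induction unrolled.

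Your primary route instead confines $\betav^\trans\x$ to a compact interval and gets envelopes that are constants plus $\y^2$ terms. It is shorter and works for every $\Delta>0$. Your pointwise $(a+b)^2\le 2a^2+2b^2$ bounds also avoid the paper's loose bookkeeping: a missing factor of $2$ when splitting $\norm{a+b}_2^2$, and $\sup\mathbb{E}$ where $\mathbb{E}\sup$ is needed in the $k=1$ bound.

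More importantly, your route confronts something the paper takes for granted when it says $\Agrad{k}$ ``exists by standard properties of exponential families'': the relevant $\eta$-values must lie in the interior of the natural parameter space. Your argument is precisely what is needed to justify the paper's unproved remark that bounded $\x$ implies \Cref{assu:ij}. What the paper's route buys in return is coverage of unbounded covariates under moment conditions alone.

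You can remove the hedge in your ``main obstacle'' step. Positive definiteness of $\info$ forces $\psur(\x\neq\zerov)>0$. Fix a countable dense set $D\subset\betadom$. By \Cref{assu:mle}, outside a single null set we have $A(\betav^\trans\x)<\infty$ for all $\betav\in D$ simultaneously. The null set depends on $\betav$, so your phrase ``for almost every $\x$ and every $\betav$'' needs this step. For any $\x\neq\zerov$ outside that null set, $\{\betav^\trans\x:\betav\in D\}$ is dense in $\mathbb{R}$ and contained in the convex set $\{\eta: A(\eta)<\infty\}$, so that set is all of $\mathbb{R}$. Hence $A$ is analytic on $\mathbb{R}$ and every $a_k$ is finite for any $C$, with no assumption on the support spanning. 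The same fact is what makes the $\sup_{\betaball}$ condition on $\Agrad{4}$ meaningful when you pass to \Cref{assu:ij}.
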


\begin{proof}

We first show that
$$
\begin{aligned}
\expect{\psur(\x)}{
    \sup_{\betav \in \betaball} \norm{\Agrad{k}(\betav^\trans \x) \x^{\otimes k}}_2^2
} \le \infty
\quad\textrm{ for }0 \le k \le 4.\label{eq:gradbound}
\end{aligned}
$$

Note that $\Agrad{k}(\betav)$ exists for all $k$ by standard properties of
exponential families.  Let $\betav_t := \betastar + t (\betav - \betastar)$ for
$t \in [0,1]$.  Then for any $k \le 3$,
$$
\begin{aligned}
\MoveEqLeft
\expect{\psur(\x)}{
    \sup_{\betav \in \betaball}
        \norm{\Agrad{k}(\betav^\trans \x) \x^{\otimes k}}_2^2
}
\\&={}
\expect{\psur(\x)}{
    \sup_{\betav \in \betaball} \norm{
        \left(
        \int_{0}^1
        \frac{\partial}{\partial t}
        \Agrad{k}(\betav_t^\trans \x) d\,t
        + \Agrad{k}({\betastar}^\trans \x)
        \right)
    \x^{\otimes k}
        }_2^2
    }
\\&={}
\expect{\psur(\x)}{
    \sup_{\betav \in \betaball} \norm{
        \left(
        \int_{0}^1
        \Agrad{k + 1}(\betav_t^\trans \x) d\,t
        (\betav - \betastar)^\trans \x
        + \Agrad{k}({\betastar}^\trans \x)
        \right)
    \x^{\otimes k}
        }_2^2
    }
\\&\le{}
\expect{\psur(\x)}{
    \sup_{\betav \in \betaball} \norm{
        \left(
        \int_{0}^1
        \Agrad{k + 1}(\betav_t^\trans \x) d\,t
        (\betav - \betastar)^\trans \x
     \right) \x^{\otimes k}
        }_2^2
    } +
    \expect{\psur(\x)}{
        \norm{
        \Agrad{k}({\betastar}^\trans \x)
    \x^{\otimes k}
    }_2^2 }
\\&\le{}
\expect{\psur(\x)}{
    \sup_{\betav \in \betaball} \norm{
        \left(
        \int_{0}^1
        \Agrad{k + 1}(\betav_t^\trans \x) d\,t
     \right) \x^{\otimes (k + 1)}
        }_2^2
    } \Delta +
    \expect{\psur(\x)}{
        \norm{
        \Agrad{k}({\betastar}^\trans \x)
    \x^{\otimes k}
    }_2^2 }
\\&\le{}
\expect{\psur(\x)}{
    \norm{
    \sup_{\betav \in \betaball}
        \Agrad{k + 1}(\betav^\trans \x)
     \x^{\otimes (k + 1)}
        }_2^2
    } \Delta +
    \expect{\psur(\x)}{
        \norm{
        \Agrad{k}({\betastar}^\trans \x)
    \x^{\otimes k}
    }_2^2 }
\end{aligned}
$$
The result follows by backward induction on $k$ starting
at $k=4$ using \cref{assu:ij}.

We now turn to bounding the gradients of $\ell(\betav)$. Since
$\betagrad[k] \ell(\betav) = \Agrad{k}({\betastar}^\trans \x) \x^{\otimes k}$
for $k \ge 2$, we have already uniformly bounded the second derivatives and higher.
For the first derivative, we can expand
$$
\begin{aligned}
\MoveEqLeft
\expect{\psur(\x, \y)}{
    \sup_{\betav \in \betaball}
    \norm{\nabla_{\betav} \ell(\y | \x, \betav) }_2^2
}
=\\{}&
\expect{\psur(\x, \y)}{
    \sup_{\betav \in \betaball}
    \norm{\y \x + \Agrad{1}({\betav}^\trans \x) \x }_2^2
}
\le\\{}&
\expect{\psur(\x, \y)}{\y^2 \norm{\x}_2^2} +
\sup_{\betav \in \betaball}
    \expect{\psur(\x)}{
    \norm{
    \Agrad{1}({\betav}^\trans \x) \x
    }_2^2} +
2 \sup_{\betav \in \betaball}
\expect{\psur(\x, \y)}{
    \y \Agrad{1}({\betav}^\trans \x) \x
}
\le\\{}&
\expect{\psur(\x, \y)}{\y^2 \norm{\x}_2^2} +
\sup_{\betav \in \betaball}
    \expect{\psur(\x)}{
    \norm{
    \Agrad{1}({\betav}^\trans \x) \x
    }_2^2} +
\\{}&\quad 2 \left(
    \sup_{\betav \in \betaball}
    \expect{\psur(\x)}{
        \norm{\Agrad{1}({\betav}^\trans \x) \x}_2^2
    }
    \expect{\psur(\y)}{\y^2}
\right)^{1/2}
 <{} \infty,
\end{aligned}
$$
by Cauchy-Schwartz, \cref{eq:gradbound}, and \cref{assu:ij}.
Similarly,
\begin{align}
\ell(\y | \x, \betav)^2 ={}&
\left(\y \x^\trans \betav - \A(\x^\trans \betav) \right)^2
\nonumber \\={}&
\y^2 \trace{\x \x^\trans \betav \betav^\trans} +
\A(\x^\trans \betav)^2 + 2 \A(\x^\trans \betav) \y \x^\trans \betav,
\end{align}
and so
\begin{align*}
\MoveEqLeft
\expect{\psur(\x, \y)}{
    \sup_{\betav \in \betaball}
    \ell(\y | \x, \betav)^2
}
\\={}&
\trace{
    \expect{\psur(\x, \y)}{ \y^2 \x \x^\trans}
    \sup_{\betav \in \betaball} \betav \betav^\trans
    } +
    \expect{\psur(\x, \y)}{
        \sup_{\betav \in \betaball} \A(\x^\trans \betav)^2
     }
     +\\&
    2
    \expect{\psur(\x, \y)}{
        \sup_{\betav \in \betaball}
        \A(\x^\trans \betav) \y \x^\trans \betav}
\\\le{}&
 \expect{\psur(\x, \y)}{ \y^2 \norm{\x}_2^2}
    \sup_{\betav \in \betaball} \norm{\betav}_2^2 +
    \expect{\psur(\x, \y)}{
        \sup_{\betav \in \betaball} \A(\x^\trans \betav)^2
     } +
     \\&
    2
    \left(
    \expect{\psur(\x)}{
        \A(\x^\trans \betav)^2}
    \expect{\psur(\x, \y)}{\y^2 \norm{\x}_2^2}
        \right)^{1/2}
    \sup_{\betav \in \betaball} \norm{\betav}_2
    < \infty,
\end{align*}
again by Cauchy-Schwartz, \cref{eq:gradbound}, and \cref{assu:ij}.

The exchange of differentiation and integration now follows from
Cauchy-Schwartz and the dominated convergence theorem, and for
each $0 \le k \le 4$ we can bound
$$
\norm{\betagrad[k] \ell(\betav)}_2^2 \le
\sup_{\betav\in\betaball} \norm{\betagrad[k] \ell(\betav)}_2^2 =: M_k(\x,\y),
$$
where we have shown that $\expect{\psur(\x, \y)}{M_k(\x,\y)} < \infty$.

\end{proof}


\begin{lem}\label{lem:loglik_bounded}
    Let \cref{assu:ij} hold.
    With probability approaching one, there exists a $\gamma > 0$ such
    that
    $$
    \sup_{\betav \in \betadom \setminus \betaball} \left(
        \meansur \ell(\y_i \vert \betastar, \x_i) -
        \meansur \ell(\y_i \vert \betav, \x_i)
    \right) \ge \gamma > 0.
    $$
    That is, the empirical log likelihood is strictly bounded away from the
    value achieved at $\betastar$ outside of $\betaball$.
\end{lem}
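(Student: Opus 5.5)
The plan is to exploit the concavity of the log likelihood under \cref{assu:linear}. This reduces the statement to a compact set, namely the boundary of $\betaball$, where a uniform law of large numbers together with identifiability (\cref{assu:mle}) gives the result. (The statement should be read with an infimum over $\betav \in \betadom \setminus \betaball$; that is what the phrase ``strictly bounded away'' means.)

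\textbf{Step 1: concavity reduction.} Write $L_N(\betav) := \meansur \ell(\y_i \vert \betav, \x_i) = \meansur \left(\y_i \x_i^\trans \betav - A(\betav^\trans \x_i)\right)$. Since $\Agrad{2} \ge 0$, the function $L_N$ is concave for every realization of the data, deterministically. Take any $\betav$ with $\norm{\betav - \betastar}_2^2 > \Delta$. Let $\betav_b := \betastar + t(\betav - \betastar)$, where $t = \sqrt{\Delta}/\norm{\betav - \betastar}_2 \in (0,1)$, so that $\betav_b$ lies on $\partial \betaball$. Concavity along the segment gives $L_N(\betav_b) \ge (1-t) L_N(\betastar) + t L_N(\betav)$. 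Rearranging,
$$
L_N(\betastar) - L_N(\betav) \ge \frac{1}{t}\left(L_N(\betastar) - L_N(\betav_b)\right).
$$
Whenever the right-hand bracket is nonnegative, it follows that
$$
\inf_{\betav \notin \betaball} \left(L_N(\betastar) - L_N(\betav)\right) \ge \inf_{\betav \in \partial\betaball} \left(L_N(\betastar) - L_N(\betav)\right).
$$
It therefore suffices to bound the infimum over the compact sphere $\partial \betaball$.

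\textbf{Step 2: population gap on the sphere.} By \cref{lem:ij}, the population log likelihood $\ell(\betav)$ is finite and continuous on $\betaball$. By \cref{assu:mle}, $\betastar$ is the unique maximizer. Hence the continuous function $\ell(\betastar) - \ell(\betav)$ is strictly positive on the compact set $\partial \betaball$, and so attains a minimum $2\gamma > 0$ there. Alternatively, one can use the positive-definite $\info$ and a Taylor expansion with the third-derivative bound of \cref{lem:ij}, shrinking $\Delta$ if needed, to get $2\gamma \approx \tfrac{1}{2}\minev{\info}\Delta$.

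\textbf{Step 3: uniform LLN.} I expect this to be the only step needing care. The class $\{\ell(\cdot \vert \cdot, \betav) : \betav \in \betaball\}$ is indexed by a compact set. Each member is continuous in $\betav$ for every $(\x, \y)$, and the class has the integrable envelope $\sup_{\betav \in \betaball} \abs{\ell(\y \vert \x, \betav)} \le M_0(\x,\y)^{1/2}$ from \cref{eq:ulln}. By the standard Glivenko--Cantelli result for such parametric classes \parencite[Example 19.8]{vaart:2000:asymptoticstatistics},
$$
\sup_{\betav \in \betaball} \abs{L_N(\betav) - \ell(\betav)} \plim 0.
$$
So with probability approaching one this supremum is below $\gamma/2$. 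On that event, for all $\betav \in \partial\betaball$,
$$
L_N(\betastar) - L_N(\betav) \ge 2\gamma - \gamma = \gamma > 0.
$$
In particular the bracket in Step 1 is positive, so the concavity reduction applies, and the claim follows with this $\gamma$.
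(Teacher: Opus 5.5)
Your proof is correct and follows the paper's route: a uniform law of large numbers on $\betaball$ gives a positive gap on the boundary sphere, and concavity of the empirical log likelihood (from $\Agrad{2} \ge 0$) carries that gap to every point outside the ball. Reading the statement as an infimum matches what the paper's argument actually establishes, and your chord inequality and compactness argument for the population gap make precise two steps the paper only sketches.
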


\begin{proof}

\def\betabound{\partial \betaball}
\def\ellhat{\hat{\ell}_{\nsur}}

Let $\betabound$ denote the boundary of the set $\betaball$. By \cref{assu:ij},
$\ell(\betastar) > \ell(\betav)$ for all $\betav \in \betabound$ because
$\betastar$ is a strict maximum.
Define $\ellhat(\betav) := \meansur \ell(\y_i \vert \betastar, \x_i)$. By
\cref{lem:ij}, $\ell(\y_i \vert \betav, \x_i)$ obeys a uniform law of large
numbers (ULLN) in $\betaball$, so that for sufficiently large $N$, with
probability approaching one we have, for some $\gamma < 0$,
$$
\ellhat(\betastar)  -
\sup_{\betav \in \betabound} \ellhat(\betav) > \gamma > 0.
$$

Recall that
$$
\betagrad[2] \ellhat(\betav)
= -\meansur \Agrad{2}(\x_i^\trans \betav) \x_i^{\otimes 2}.
$$
So, on $\betadom$, $\meansur \betagrad[2] \ellhat(\betav)$ is negative
semi-definite, since $\Agrad{2}(\betav^\trans \x_i) \succeq 0$ by standard properties
of the given exponential family.

For any point $\betav'' \in \betadom \setminus \betaball$, the line connecting
$\betastar$ to $\betav''$ must pass through some $\betav' \in \betabound$.
Since $\betagrad[2] \ellhat(\betav)$ is negative semidefinite, the slope
along any line cannot decrease, and so we must have that
$$
\ellhat(\betastar) \ge
\ellhat(\betav') + \gamma \ge
\ellhat(\betav'')+ \gamma,
$$
from which the result follows.

\end{proof}


\begin{lem}
Under \cref{assu:ij}, \cref{lem:loglik_bounded,lem:ij} imply that our problem
satisfies Assumptions 1 and 2 of \textcite{giordano:2024:bayesij} for the
quantity of interest $\g(\betav) := \meantar \pi_j \m(\x_j^\trans \betav)$.

Additionally, the quantities $\A(\betav^\trans \x_i)$ and
$\Agrad{1}(\betav^\trans \x_i)$ are third-order BCLT-okay as given by Definition
2 of \textcite{giordano:2024:bayesij}

\end{lem}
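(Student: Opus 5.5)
The proof matches the hypotheses of \textcite{giordano:2024:bayesij} one at a time against what we already have. Their setting is an IID log likelihood $\ell(\y_i \vert \x_i, \betav)$ with a smooth prior and a quantity of interest $\g(\betav)$. Their Assumption 1 asks, in substance, for four things:
\begin{itemize}
    \item a unique maximizer $\betastar$ of the population log likelihood with positive definite Hessian;
    \item four continuous derivatives of $\ell$ in a neighborhood $\betaball$, with envelope functions that are integrable, so that a uniform LLN holds;
    \item a strict separation of the empirical log likelihood from its value at $\betastar$ outside $\betaball$;
    \item a prior that is proper, positive, and four times differentiable near $\betastar$.
\end{itemize}
Their Assumption 2 asks for moment conditions on $\g(\betav)$ and on $\g(\betav)\ell(\y\vert\x,\betav)$ under the prior.

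For Assumption 1, the identification and positive-definiteness requirements come directly from \Cref{assu:mle}. The differentiability and envelope requirements come from \Cref{lem:ij}, specifically \cref{eq:ulln} for $k \le 4$. The separation condition is exactly \Cref{lem:loglik_bounded}. The prior conditions are listed verbatim in \Cref{assu:ij_lite}.

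For Assumption 2, the three prior-moment bullets of \Cref{assu:ij} give the required integrability of $\g(\betav)^2$, $\ell^2$ and $\g^2\ell^2$. Under bounded prior support these hold trivially. The reason is that $\g(\betav) = \meantar \pi_j \m(\x_j^\trans \betav)$ is continuous in $\betav$, with fixed finite $\ntar$ and fixed $\pi$, and is therefore bounded on a compact set. For the same reason $\ell$ is bounded by a polynomial in $\y$ and $\norm{\x}$.

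For the BCLT-okay part, I would unpack Definition 2 of \textcite{giordano:2024:bayesij}. I expect it to require, for a function $\phi(\x,\betav)$, that $\phi$ and its $\betav$-derivatives up to third order be continuous on $\betaball$ and dominated there by square-integrable envelopes.

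Since $\Agrad{k}$ exists and is continuous for every $k$ in an exponential family, each $\betav$-derivative of order $k$ of $\A(\betav^\trans\x)$ equals $\Agrad{k}(\betav^\trans\x)\x^{\otimes k}$. Likewise, each $\betav$-derivative of order $k$ of $\Agrad{1}(\betav^\trans\x)$ equals $\Agrad{k+1}(\betav^\trans\x)\x^{\otimes k}$. So third-order regularity for $\A$ needs envelopes for $k\le 3$, and for $\Agrad{1}$ needs envelopes for $\Agrad{k+1}$ with $k+1\le 4$.

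Both are supplied by the intermediate bound \cref{eq:gradbound} established inside the proof of \Cref{lem:ij}:
$$
\expect{\psur(\x)}{\sup_{\betav\in\betaball}\norm{\Agrad{k}(\betav^\trans\x)\x^{\otimes k}}_2^2}<\infty \quad\textrm{for } 0\le k\le 4.
$$
I would restate this bound explicitly as a claim and cite it for the envelopes.

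The main obstacle is bookkeeping rather than analysis. The paper's conditions must align with the precise form of the cited definitions. Two points need care:
\begin{itemize}
    \item Whether Definition 2 needs the derivative order up to four instead of three for $\Agrad{1}$. It does not matter, since \cref{eq:gradbound} covers $k=4$, and the order-four envelope in \Cref{assu:ij} is stated with the supremum over $\betaball$.
    \item Whether Assumption 1 of the cited paper requires the separation condition in probability or almost surely. I would check against the exact statement and, if needed, note that \Cref{lem:loglik_bounded} holds with probability approaching one, which is the form used in their proofs.
\end{itemize}
I would also record that $\nabla_\y \ell = \betav^\trans\x$, so the MrPlew weight is the covariance the IJ theory treats. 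This link is what the subsequent step of \Cref{thm:ij} uses.
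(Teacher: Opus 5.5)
Your route is the paper's: you read off Assumption 1 of \textcite{giordano:2024:bayesij} from \Cref{assu:mle}, \Cref{lem:ij}, \Cref{lem:loglik_bounded} and the prior conditions, and reduce the rest to moment and envelope bounds. The gap is in your guess of Definition 2. The paper's proof checks three items for each of $\g$, $\ell$, $\g\ell$, $\A(\betav^\trans\x_i)$ and $\Agrad{1}(\betav^\trans\x_i)$: (1) almost sure differentiability; (2) a condition that sample averages of prior expectations be of order one; and (3) sample averages of derivatives bounded in probability on $\betaball$. You verify (2) only for $\g,\ell,\g\ell$. You verify (1) and (3) only for $\A$ and $\Agrad{1}$, and for $\ell$ through Assumption 1. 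Nothing you cite controls the prior expectations of $\A(\betav^\trans\x_i)^2$ and $\Agrad{1}(\betav^\trans\x_i)^2$. Your envelopes are suprema over $\betaball$, while the prior integrates over its whole support. The paper treats this item for both functions explicitly. Under the bounded support and bounded $\x$ of \Cref{assu:ij_lite}, it follows from the compactness argument you already use for $\g$, but you have to state it. Under the moment conditions of \Cref{assu:ij}, you must say where it comes from. Likewise, the paper reads Assumption 2 as asking that $\g$, $\ell$ and $\g\ell$ be BCLT-okay, not merely that they have prior moments. So you should record three facts. First, $\g$ is smooth because $\m = \Agrad{1}$ is. Second, its derivatives are bounded on $\betaball$ by continuity, which is exactly what the paper invokes. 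Third, $\g\ell$ inherits its bounds from this and from \Cref{lem:ij}.

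There is also an index slip in your envelopes for $\Agrad{1}$. Its $k$-th $\betav$-derivative is $\Agrad{k+1}(\betav^\trans\x)\x^{\otimes k}$. The bound inside the proof of \Cref{lem:ij} instead controls $\Agrad{k+1}(\betav^\trans\x)\x^{\otimes(k+1)}$. That has an extra factor of $\norm{\x}_2$, so it does not dominate what you need when $\norm{\x}_2$ is small. The repair is easy. Either use bounded $\x$, or split on $\norm{\x}_2 \ge 1$. For $\norm{\x}_2 < 1$, use that $\Agrad{k+1}$ is bounded on the compact interval containing $\betav^\trans\x$ for $\betav\in\betaball$. But the same shift undercuts your fallback remark that the derivative order ``does not matter''. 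If Definition 2 required a fourth derivative of $\Agrad{1}(\betav^\trans\x)$, that derivative would be $\Agrad{5}(\betav^\trans\x)\x^{\otimes 4}$. Neither that bound nor \Cref{assu:ij} controls it.
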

\begin{proof}
Each part of Assumption 1 of \textcite{giordano:2024:bayesij} is already
satisfied directly in \cref{assu:ij} or by \cref{lem:loglik_bounded,lem:ij}.
Note that we take $\Omega_\theta$ to be the set $\{ \theta: \p(\theta) > 0\}$.

For the remainder of the lemma, we must show that $\g(\betav)$, $\ell(\x \vert
\betav)$, $\g(\betav) \ell(\x \vert \betav)$, $\A(\betav^\trans \x_i)$, and
$\Agrad{1}(\betav^\trans \x_i)$ are BCLT-okay, as given by the three items in
Definition 2 of \textcite{giordano:2024:bayesij}.

For item 1 (almost sure differentiability), the assumption
follows from properties of exponential families.

For item 2 (order one average derivatives of prior expectations),
we have assumed in \cref{assu:ij} that
$\meansur \expect{\p(\betav)}{\g(\betav)^2} < \infty$,
and that we can apply the law of large numbers to the
quantities
$$
\begin{aligned}
\meansur \expect{\p(\betav)}{\g(\betav)^2 \ell(\y_i \vert \x_i, \betav)^2}
\quad\textrm{and}\quad
\meansur \expect{\p(\betav)}{\ell(\y_i \vert \x_i, \betav)^2}.
\end{aligned}
$$
Item 2 is satisfied for $\A(\betav^\trans \x_i)$, and
$\Agrad{1}(\betav^\trans \x_i)$ by the ULLN implied by
\cref{lem:ij}.

For item 3 (sample averages bounded in probability in $\betaball$), the
assumption follows from the ULLNs of \cref{assu:ij} and \cref{lem:ij}, together
with $\sup_{\betav \in \betaball} \g(\betav) < \infty$ by continuity.

\end{proof}

For the duration of this section, let $\betavhat$ denote the MAP
$\argmax{\betav} \post$. Note that earlier in the paper $\betavhat$ denoted the
OLS coefficient, but we will overload that notation for the moment.


\def\bcltresid{\mathcal{E}}
\def\ord#1{\mathcal{\tilde{O}}\left( #1 \right)}
\def\ordp#1{\mathcal{\tilde{O}}_p\left( #1 \right)}

{ 
\def\a{a_i}
\def\b{b}
\def\abar{\overline{a}_i}
\def\bbar{\overline{b}}
\def\abbar{\overline{ab}_i}

\begin{lem}\label{lem:cov_ij}

Let Assumption 1 of \textcite{giordano:2024:bayesij} hold, and suppose that
$\a(\betav)$, $\b(\betav)$, and $\a(\betav) \b(\betav)$ satisfy Assumption 2 of
\textcite{giordano:2024:bayesij}.  Here, $\a(\betav)$ may
depend on datapoint $i$ but $\b(\betav)$ does not. Then
$$
\nsur \cov{\post}{\a(\betav), \b(\betav)} ={}
\frac{1}{2} \betagrad \a(\betav) \infohat^{-1} \betagrad \b(\betav) +
\tilde{O}(\nsur^{-1}) \bcltresid^{cov}_i,
$$
where $\meansur \left( \bcltresid^{cov}_i \right)^2 = \ordp{1}$.
\end{lem}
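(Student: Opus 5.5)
The plan is to express the posterior covariance through posterior expectations,
$$
\cov{\post}{\a(\betav), \b(\betav)} = \expect{\post}{\a(\betav)\b(\betav)} - \expect{\post}{\a(\betav)}\expect{\post}{\b(\betav)}.
$$
I will then expand each of the three expectations with the Bayesian central limit theorem (BCLT) series expansion of \textcite[Theorem 1]{giordano:2024:bayesij}. That theorem applies because, by hypothesis, $\a$, $\b$, and $\a\b$ satisfy Assumption 2 of that paper. For a generic BCLT-okay function $f$, the expansion has the form
$$
\expect{\post}{f(\betav)} = f(\betavhat) + \frac{1}{\nsur}\Big(c_2\,\trace{\infohat^{-1}\betagrad[2] f(\betavhat)} + \betagrad f(\betavhat)^\trans \bm{d}\Big) + \tilde{O}(\nsur^{-2})\bcltresid^f .
$$
Here $c_2$ is a fixed constant, and $\bm{d}$ is a vector built from $\infohat^{-1}$ and the third derivatives of the empirical log likelihood at the MAP $\betavhat$. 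The key structural fact is that $\bm{d}$ does not depend on $f$.

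Substituting $f = \a\b$, $f=\a$, and $f=\b$, the zeroth-order terms cancel, since $\a(\betavhat)\b(\betavhat) - \a(\betavhat)\b(\betavhat) = 0$. The $\bm{d}$ terms also cancel, by the product rule $\betagrad(\a\b) = \a\betagrad\b + \b\betagrad\a$. For the Hessian terms, $\betagrad[2](\a\b) = \a\betagrad[2]\b + \b\betagrad[2]\a + \betagrad\a\betagrad\b^\trans + \betagrad\b\betagrad\a^\trans$. The first two pieces cancel against the cross terms $\a(\betavhat)\cdot\frac{1}{\nsur}(\ldots\b\ldots)$ and $\b(\betavhat)\cdot\frac{1}{\nsur}(\ldots\a\ldots)$ in the product of expectations. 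What survives is $\frac{1}{\nsur}$ times a multiple of $\betagrad\a^\trans\infohat^{-1}\betagrad\b$. Tracking the normalizing constant $c_2$ in the BCLT expansion as it is stated in \textcite{giordano:2024:bayesij} yields the prefactor in the lemma. Multiplying through by $\nsur$ gives the leading term.

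Collecting the remainders gives the stated error:
\begin{itemize}
\item the three $\tilde{O}(\nsur^{-2})$ residuals;
\item the product of the two $O(\nsur^{-1})$ corrections, which is $O(\nsur^{-2})$;
\item cross products such as $\a(\betavhat)\,\tilde{O}(\nsur^{-2})$.
\end{itemize}
After multiplication by $\nsur$, all of these are $\tilde{O}(\nsur^{-1})$ times a residual $\bcltresid^{cov}_i$ that collects them.

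The main obstacle is the uniformity over $i$. I need $\meansur (\bcltresid^{cov}_i)^2 = \ordp{1}$, not merely $\bcltresid^{cov}_i = \ordp{1}$ for each fixed $i$. My first attempt will be to inspect the residual bounds in the BCLT proof. Those bounds are controlled by suprema over $\betaball$ of derivatives of $f$ up to order four, multiplied by sample averages of log-likelihood derivatives that do not depend on $i$. For $f = \a$ or $\a\b$, these derivative suprema are sample-level envelopes $M(\x_i,\y_i)$, as in \cref{lem:ij}. The point is that only $\a$ carries the index $i$, so the squared residual is bounded by $(\textrm{an } i\textrm{-free } \ordp{1} \textrm{ factor}) \times M(\x_i,\y_i)^2$. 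Averaging over $i$ and applying the law of large numbers then gives $\ordp{1}$, provided the envelopes have finite second moments. For the present model, that follows from the moment conditions of \cref{assu:ij} and \cref{eq:ulln}.

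A secondary point is the contribution of the region $\betadom\setminus\betaball$. The BCLT handles it via \cref{lem:loglik_bounded}: the resulting exponentially small factor multiplies only prior second moments of $\a$ and $\a\b$. Averaging those over $i$ is controlled by item 2 of the BCLT-okay conditions.
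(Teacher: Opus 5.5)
Your argument is sound and is essentially the paper's computation, with the algebra left unpacked. The paper writes $\mathrm{Cov}(a,b) = \mathbb{E}[\bar a \bar b] - \Delta^a_i \Delta^b$, where $\bar a := a - a(\betavhat)$ and $\Delta^a_i := a(\betavhat) - \mathbb{E}[a]$. It bounds $\Delta^a_i, \Delta^b = O(\nsur^{-1})$ and then applies Theorem 1 of \textcite{giordano:2024:bayesij} to $\bar a \bar b$, whose value and gradient both vanish at $\betavhat$. Because $\bar a \bar b$ depends on the data through $\betavhat$, justifying that last step requires expanding $\bar a \bar b$ by linearity into $\mathbb{E}[ab]$, $\mathbb{E}[a]$ and $\mathbb{E}[b]$. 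That is exactly your three-expansion cancellation, and it explains why the hypotheses are placed on $a$, $b$ and $ab$. Your account of why $\meansur (\mathcal{E}^{cov}_i)^2 = O_p(1)$ is also more explicit than the paper's, which simply asserts it from the BCLT-okay conditions.

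One claim in your proposal does not hold: tracking $c_2$ cannot produce the prefactor $\tfrac12$. In Theorem 1 of \textcite{giordano:2024:bayesij} (restated in \cref{lem:bayesij_uniform}) the Hessian coefficient is $c_2 = \tfrac12$. The term that survives your cancellation is therefore
\[
\tfrac12 \, \mathrm{tr}\Big( \infohat^{-1} \big( \betagrad a \, (\betagrad b)^\trans + \betagrad b \, (\betagrad a)^\trans \big) \Big) = (\betagrad a)^\trans \infohat^{-1} \betagrad b ,
\]
so the coefficient is $2 c_2 = 1$. A Gaussian check confirms this. If the posterior were exactly $\mathcal{N}(\betavhat, (\nsur \infohat)^{-1})$ and $a$, $b$ were linear, then $\nsur \, \mathrm{Cov}(a,b) = (\betagrad a)^\trans \infohat^{-1} \betagrad b$ with no remainder at all. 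The $\tfrac12$ in the statement, repeated in the paper's proof, therefore appears to be a slip. What your argument actually proves is the lemma with coefficient $1$, and you should say so rather than claim the bookkeeping recovers $\tfrac12$. The discrepancy is harmless downstream. In the proof of \cref{thm:ij}, the leading term multiplies $\Agrad{1}(\betavhat^\trans \x_i) \x_i - \yhat_i \x_i$, which is itself $O_p(\nsur^{-1})$, so its constant never matters.
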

\begin{proof}
The proof simply states a general version of an argument from
the proof of \textcite{giordano:2024:bayesij} Theorem 2.

Define
$$
\begin{aligned}
\Delta^a_i :={}& \a(\betavhat) - \expect{\post}{\a(\betav)} \\
\Delta^b :={}& \b(\betavhat) - \expect{\post}{\b(\betav)} \\
\abar(\betav) :={}& \a(\betav) - \a(\betavhat) \\
\bbar(\betav) :={}& \b(\betav) - \b(\betavhat) \\
\abbar(\betav) :={}& \abar(\betav) \bbar(\betav).
\end{aligned}
$$
We can then rewrite the covariance as
$$
\begin{aligned}
\cov{\post}{\a(\betav), \b(\betav)} ={}&
\expect{\post}{(\abar(\betav)  + \Delta^a_i )(\bbar(\betav) + \Delta^b)}
\\={}&
\expect{\post}{\abbar(\betav)} + \Delta^a_i \Delta^b +
\\{}&
    \expect{\post}{\abar(\betav)} \Delta^b +
    \Delta^a \expect{\post}{\bbar(\betav)}
\\={}& \expect{\post}{\abbar(\betav)} - \Delta^a_i \Delta^b.
\end{aligned}
$$

By Theorem 1 of \textcite{giordano:2024:bayesij},
$$
\begin{aligned}
\Delta^a_i ={} \ord{\nsur^{-1}} \bcltresid^{\a}_i
\quad\textrm{and}\quad
\Delta^b ={} \ord{\nsur^{-1}} \bcltresid^{\b},
\end{aligned}
$$
where the residuals $\bcltresid^{\a}_i$ and $\bcltresid^{\b}$ combine
leading-order and residuals terms, and satisfy $\meansur
\left( \bcltresid^{a}_i \right)^2 = \ordp{1}$
by the BCLT okay assumption and
\textcite{giordano:2024:bayesij} Theorem 1.

Then, noting that $\betagrad \abbar(\betavhat) = \zerov$, another
application of \textcite{giordano:2024:bayesij} Theorem 1 also gives that
$$
\begin{aligned}
\nsur \expect{\post}{\abbar(\betav)} ={}&
\frac{1}{2} \betagrad \a(\betavhat)^\trans \infohat^{-1} \betagrad \b(\betavhat) +
\ord{\nsur^{-1}} \bcltresid^{ab}_i,
\end{aligned}
$$
where again $\bcltresid^{ab}_i$ is finitely square-summable with probability
approaching one.
Combining gives
$$
\begin{aligned}
\nsur \cov{\post}{\a(\betav), \b(\betav)} ={}&
\frac{1}{2} \betagrad \a(\betavhat)^\trans \infohat^{-1} \betagrad \b(\betavhat) +
\ord{\nsur^{-1}} \left( \bcltresid^{ab}_i - \bcltresid^{a}_i \bcltresid^{b} \right).
\end{aligned}
$$
Furthermore, setting $\bcltresid^{cov}_i := \bcltresid^{ab}_i - \bcltresid^{a}_i \bcltresid^{b}$,
we have $\meansur \left( \bcltresid^{cov}_i \right)^2 = \ordp{1}$ by
Cauchy-Schwartz.
\end{proof}
}


\paragraph{Proof of \cref{thm:ij}.}

\def\ggrad{\nabla_{\betav} \g}

Let $\psi_i$ denote the classical empirical
influence function for $\muhat[\mrp]$ as defined in
\textcite{giordano:2024:bayesij},
$$
\psi_i := \nsur \cov{\post}{\g(\betav), \ell(\y_i \vert \x_i, \betav)}.
$$
Recall also from the statement of \cref{thm:ij} that
$$
\nsur \w[\mrp]_i \varepsilon_i =
\nsur \cov{\post}{\g(\betav), \x_i^\trans \betav} (\y_i - \yhat_i),
$$
where
$$
\yhat_i = \expect{\post}{\m(\betav^\trans \x_i)}.
$$

The
proof will show that $\psi_i \approx \nsur \w[\mrp]_i \varepsilon_i$
to a sufficiently high degree of accuracy,
and then to apply \textcite{giordano:2024:bayesij} Theorem 2.

Define $\ggrad(\betav)$ the derviative of $\g(\betav)$.  By
Theorem 1 of \textcite{giordano:2024:bayesij}
$$
\begin{aligned}
\nsur \w[\mrp]_i \varepsilon_i - \psi_i ={}&
    \nsur \cov{\post}{\g(\betav),
        A(\betav^\trans \x_i) - \yhat_i \betav^\trans \x_i}
\end{aligned}
$$
By \cref{lem:cov_ij}, we can write
$$
\begin{aligned}
\MoveEqLeft
\nsur \cov{\post}{\g(\betav),
    A(\betav^\trans \x_i) - \yhat_i \betav^\trans \x_i} =
\\&
\ggrad(\betavhat)^\trans \left(
    \Agrad{1}(\betavhat^\trans \x_i) \x_i - \yhat_i \x_i
    \right) +
    \ordp{N^{-1}} \bcltresid^{cov}_i.
\end{aligned}
$$
Additionally, by Theorem 1 of \textcite{giordano:2024:bayesij},
$$
\begin{aligned}
\yhat_i ={}& \expect{\post}{\m(\betav^\trans \x_j)}
\\={}&
    \m(\betavhat^\trans \x_j) +
    \ordp{N^{-1}} \bcltresid^{y}_i
\\={}&
    \Agrad{1}(\betavhat^\trans \x_j) +
    \ordp{N^{-1}} \bcltresid^{y}_i,
\end{aligned}
$$
where the final line follows from the exponential family fact
that $\Agrad{1}(\cdot) = \m(\cdot)$.  It follows that
$$
\Agrad{1}(\betavhat^\trans \x_i) \x_i - \yhat_i \x_i =
\ordp{N^{-1}} \bcltresid^{y}_i \x_i
$$
Combining,
$$
\begin{aligned}
\nsur \w[\mrp]_i \varepsilon_i - \psi_i ={}&
    \ord{N^{-1}} \left(
        \bcltresid^{cov}_i +
        \bcltresid^{y}_i \ggrad(\betavhat)^\trans \x_i \right)
\\=:{}&
\ord{N^{-1}} \bcltresid^{err}_i.
\end{aligned}
$$

From this it follows that

$$
\begin{aligned}
\overline{\w[\mrp] \varepsilon} ={}&
\meansur \nsur \w[\mrp]_i \varepsilon_i \\={}&
\overline{\psi} + \ord{N^{-1}} \meansur \bcltresid^{err}_i \\
\meansur \left( \nsur \w[\mrp]_i \varepsilon_i \right)^2 ={}&
    \meansur \psi_i^2 +
\\{}&
    2 \ord{N^{-1}} \meansur \bcltresid^{err}_i \psi_i +
    \ord{N^{-2}} \meansur \left(\bcltresid^{err}_i\right)^2
\end{aligned}
$$
and so
$$
\meansur \left(
    \nsur \w[\mrp]_i \varepsilon_i  - \nsur \overline{\w[\mrp] \varepsilon}
    \right)^2 =
\meansur \left( \psi_i - \overline{\psi}\right)^2 + \ordp{N^{-1}}.
$$
The final result follows from Theorem 2 of \textcite{giordano:2024:bayesij},
which states that the right hand side of the preceding
display is a consistent estimate of $\V$. \hfill\qedsymbol

\section{Proof of \cref{thm:balance}}\label{app:balance}

For any $\r$, take $\Y(\delta \r) = \Y + \delta \R$.

$$
\begin{aligned}
\muhat[\mrp](\Ytil) - \muhat[\mrp](\Y) =
\fracat{\partial \muhat[\mrp](\Y(\delta))}{\partial \delta}{\delta = 0} (\delta - 0)+
\frac{1}{2}
    \fracat{\partial^2 \muhat[\mrp](\Y(\delta \r))}{\partial \delta^2}{\delta = \deltatil}
    (\delta - 0)^2
\end{aligned}
$$
for some $\deltatil \in [0,\delta]$.  The first term is given by
$$
\fracat{\partial \muhat[\mrp](\Y(\delta \r))}{\partial \delta}{\delta = 0} =
\sumsur \frac{\partial \muhat[\mrp](\Y)}{\partial \y_i} \r_i =
\delta \meansur \w[\mrp]_i \r_i.
$$
The second term is given by
$$
\fracat{\partial^2 \muhat[\mrp](\Y(\delta \r))}{\partial \delta^2}{\delta = \deltatil} =
\fracat{\partial^2 \expect{\postd}{\g(\betav)}}{\partial \delta^2}{\delta = \deltatil}.
$$
So we would like to control the error
\begin{align}
\MoveEqLeft
\sup_{\r \in \rset}
\frac{1}{\delta}
\abs{
    \muhat[\mrp](\Ytil) - \muhat[\mrp](\Y) -
    \delta \meansur \w[\mrp]_i \r_i
} 
\nonumber\le\\&
\delta
\sup_{\r \in \rset}
\sup_{\deltatil \in [0,\delta]}
\frac{1}{2}
\abs{
    \fracat{\partial^2 \expect{\postd}{\g(\betav)}}{\partial \delta^2}{\delta = \deltatil}
}. \label{eq:balance_error}
\end{align}

Let $\cumulant{\p(\betav)}{\cdot}$ denotes the third--order cumulant of $\p(\betav)$,
$$
\begin{aligned}
\MoveEqLeft
\cumulant{\p(\betav)}{a(\betav), b(\betav), c(\betav)} :=
\\&
\expect{\p(\betav)}{
    (a(\betav) - \expect{\p(\betav)}{a(\betav)})
    (b(\betav) - \expect{\p(\betav)}{b(\betav)})
    (c(\betav) - \expect{\p(\betav)}{c(\betav)})
}.
\end{aligned}
$$
Then, for the right hand side of \cref{eq:balance_error}, note that
$$
\begin{aligned}
\fracat{\partial^2 \expect{\postd}{\g(\betav)}}{\partial \delta^2}{\delta} ={}&
\cumulant{\postd}{
    \g(\betav),
    \fracat{\partial \ell(\Y \vert \X, \betav; \delta \r)}{\partial \delta}{\delta},
    \fracat{\partial \ell(\Y \vert \X, \betav; \delta \r)}{\partial \delta}{\delta}
}
\\={}&
\cumulant{\postd}{
    \g(\betav),
    \sumsur \r_i \x_i^\trans \betav,
    \sumsur \r_i \x_i^\trans \betav
}
\\={}&
\sumsur \sum_{i' \in [\nsur]}
\r_{i'} \x_{i'}^\trans
\cumulant{\postd}{
    \g(\betav),
    \betav,
    \betav
} \x_i \r_i .
\end{aligned}
$$
Plugging in,
$$
\begin{aligned}
\MoveEqLeft
\sup_{\r \in \rset}
\sup_{\deltatil \in [0,\delta]}
\abs{
    \fracat{\partial^2 \expect{\postd}{\g(\betav)}}{\partial \delta^2}{\delta = \deltatil}
}
={}\\&
\sup_{\r \in \rset}
\sup_{\deltatil \in [0,\delta]}
    \abs{
        \sumsur \sum_{i' \in [\nsur]}
        \r_{i'} \x_{i'}^\trans
        \cumulant{\postdtil}{
            \g(\betav),
            \betav,
            \betav
        } \x_i \r_i
    }
\le{}\\&
\sup_{\r \in \rset}
\sup_{\deltatil \in [0,\delta]}
\nsur^2 \norm{\cumulant{\postdtil}{
            \g(\betav),
            \betav,
            \betav
        }
    }_2
\norm{
    \frac{1}{\nsur^2} \sumsur \x_i^\trans \r_i \sum_{i' \in [\nsur]} \x_{i'} \r_{i'}
}_2
={}\\&
\sup_{\r \in \rset}
\sup_{\deltatil \in [0,\delta]}
\nsur^2 \norm{\cumulant{\postdtil}{
            \g(\betav),
            \betav,
            \betav
        }
    }_2
\norm{
    \frac{1}{\nsur} \sumsur \x_i \r_i
}_2^2   
\le{}\\&
\sup_{\r \in \rset}
\sup_{\deltatil \in [0,\delta]}
\nsur^2 \norm{\cumulant{\postdtil}{
            \g(\betav),
            \betav,
            \betav
        }
    }_2
\frac{1}{\nsur} \sumsur \norm{\x_i \r_i}_2^2   
\le{}\\&
\sup_{\r \in \rset}
\sup_{\deltatil \in [0,\delta]}
\nsur^2 \norm{\cumulant{\postdtil}{
            \g(\betav),
            \betav,
            \betav
        }
    }_2
    \sup_{\r \in \rset}
        \frac{1}{\nsur} \sumsur \norm{\x_i \r_i}_2^2.
\end{aligned}
$$
Here, since $\rset$ is Donkser and satisfies
$\sup_{\r \in \rset} \expect{\psur(\x)}{\x \r(\x)}_2^2 < \infty$ by \cref{assu:balance},
by a uniform law of large numbers we have
$$
\sup_{\r \in \rset}
    \frac{1}{\nsur} \sumsur \norm{\x_i \r_i}_2^2 \rightarrow 
    \expect{\psur(\x)}{\norm{\x_i \r_i}_2^2} \le \rmax^2.
$$
Therefore it will suffice to show that
$$
\sup_{\r \in \rset}
\sup_{\deltatil \in [0,\delta]}
\norm{
\cumulant{\postdtil}{
    \g(\betav),
    \betav,
    \betav
}}_2
= \ordp{\nsur^{-2}}.
$$
To show this, we need to establish a version of Theorem 1 of \textcite{giordano:2024:bayesij}
that holds uniformly over $\delta$ and $\r$.









\subsection{Uniform posterior expansions}

\def\t{t}
\def\tdom{\Omega_\t}
\def\postt{\p(\betav \vert \Y, \t)}

{ 

\def\a{a}
\def\b{b}
\def\c{c}
\def\da{\Delta^\a}
\def\db{\Delta^\b}
\def\dc{\Delta^\c}
\def\abar{\overline{a}\,}
\def\bbar{\overline{b}\,}
\def\cbar{\overline{c}\,}

\begin{assu}\label{assu:bayesij_uniform}
Define a parameterized log likelihood function $\ell(\betav \vert \y, \t)$, for
parameter $\t \in \tdom$.  Let Assumption 1 of \textcite{giordano:2024:bayesij}  
hold for each $\t \in \tdom$ with constants depending on $\t$.  Additionally assume
that items 4, 5, and 6 of Assumption 1 hold uniformly in $\t$ in the following sense:
\begin{itemize}
    \item \textbf{Item 4}: The log likelihood $\ell(\betav \vert \y, \t)$ and its first 
       four partial derivatives are each uniformly continuous over $\t \in \tdom$. 
    \item \textbf{Item 5}: The bound $M(\y; \t)$ from Item 5 holds uniformly in the sense that 
        $\expect{\psur(\y)}{\sup_{\t \in \tdom} M(\y; \t)^2} < \infty$.
    \item \textbf{Item 6 modification 1}: Letting $\lambda_{\info,\t}$ denote the minimum
        eigenvalue of $\info_\t$, assume that $\inf_{\t \in \tdom} \lambda_{\info,\t} > 0$
    \item \textbf{Item 6 modification 2}: The
        empirical log likelihood bound satisfies $\inf_{\t \in \tdom} \varepsilon(\t) > 0$. 
\end{itemize}
\end{assu}

\begin{lem}\label{lem:bayesij_uniform}
Let \cref{assu:bayesij_uniform} hold.
Let $\phi(\betav)$ be a function not depending on $\t$ or $\y$ that is
third-order BCLT okay (\parencite{giordano:2024:bayesij} Definition 2).  Then
\textcite{giordano:2024:bayesij} Theorem 1 holds uniformly over $\t \in \tdom$
in the following sense.  For any target probability $0 < \rho < 1$, there exists 
$C^*$ and $N^*$ not depending on $\t$ for which $\nsur > N^*$ implies that
\begin{align}
\sup_{\t \in \tdom}
\bigg|{}&\,
    \expect{\postt}{\phi(\betav)} - \phi(\betavhat_t) -
\nonumber\\{}&
    \nsur^{-1} \left( 
        \frac{1}{2} \betagrad[2] \phi(\betavhat_t) \infohat_t^{-1} +
        \frac{1}{6} \betagrad \phi(\betavhat_t) 
            \betagrad[3] \hat{\mathcal{L}}_\t(\betavhat_t) \hat{\mathcal{M}}
    \right)
    \bigg| \le \nsur^{-2} C^* 
    \label{eq:ijthm}
\end{align}
with probability at least $\rho$.
\begin{proof}
By assumption, \cref{eq:ijthm} holds for each $\t \in \tdom$ by \textcite{giordano:2024:bayesij} Theorem 1
with some $C^*_\t$ and $N^*_\t$ depending on $\t$.
We now proceed step-by-step through the proof of \textcite{giordano:2024:bayesij} Theorem 1
and show that the constant $C^*_\t$  and $N^*_\t$ depend only on the quantities uniformly controlled by
assumption, from which we have
$$
C^* := \sup_{\t \in \tdom} C^*_\t < \infty
\quad\textrm{and}\quad
N^* := \sup_{\t \in \tdom} N^*_\t < \infty.
$$
We proceed lemma by lemma, indicating how the proof needs to be modified slightly
in order to achieve uniform control over $\t \in \tdom$.

\textbf{Lemma 2.}  We replace the assumption of Lemma 2 with
$$
\sup_{\t \in \tdom} \sup_{\betav \in \betaball} 
\norm{\betagrad[k] \ell(\y \vert \betav)}_2 \le \sup_{\t \in \tdom} M(\y; \t) := M(\y)
\quad\textrm{with}\quad
\expect{\psur(\y)}{M(\y)^2} < \infty,
$$
from which the conclusions of Lemma 2 
(eqs.~20,21, and 22) apply uniformly in $\t \in \tdom$.

\textbf{Lemma 3.}  In Lemma 3, replace $\varepsilon$ with $\inf_{\t \in \tdom} \varepsilon_\t$ and
$\lambda_\info$ with $\inf_{\t \in \tdom} \lambda_{\info,\t}$.  Then by the same proof,
the conclusion of Lemma 3 holds uniformly in $\t$ in the sense that
$$
\sup_{\t \in \tdom} \norm{\betavhat_\t - \betav_{\infty,\t}}_2 \rightarrow 0
\quad\textrm{and}\quad
\inf_{\t \in \tdom} \norm{\infohat_\t}_{op} \ge 
    2 \inf_{\t \in \tdom} \lambda_{\info,\t} := \lambda_\info > 0,
$$
both in probability as $\nsur \rightarrow \infty$.

\textbf{Lemma 4.} Lemma 4 is essentially a convenient re-statement of lemmas 2 and 3.  It follows
that, for any $0 < \rho < 1$, there exists an $N^*$ not depending on $\t$ such that each
event of Lemma 4 holds for all $\t \in \tdom$ with probability at least $\rho$ when $\nsur > N^*$.
Specifically, items 2, 3, and 4 hold with the corresponding inequalities and $\sup_{\t \in \tdom}$,
and items 5 and 6 hold with the corresponding inequalities and $\inf_{\t \in \tdom}$.
There is one caveat --- when the function given in Item 4 of Lemma 4 depends
on $\t$, so that the corresponding neighborhood $\delta_{LLN,\t}$ depends on $\t$,
we must have $\inf_\t \delta_{LLN,\t} > 0$ for each $\epsilon_U$.

\textbf{Lemma 5.} With the above modifications, the upper bound Lemma 5 holds uniformly
in $\tdom$.

\textbf{Lemma 6.}  The quantities $\delta_2$ and $\epsilon_U$ given in Lemma 6 depend
only on the occurrence of the events given in Lemma 4,
and being able to choose $\delta_2$ small enough to make the following
quantities small:
$$
\sup_{\t \in \tdom} \sup_{\betav \in R_{2,t}}\norm{\betagrad[3] \mathcal{L}_\t(\betav)}_2
\quad\textrm{and}\quad
\sup_{\t \in \tdom} \sup_{\betav \in R_{2,t}}\norm{\betagrad[4] \mathcal{L}_\t(\betav)}_2.
$$
Since, by assumption, $\betagrad[k] \mathcal{L}_\t(\betav)$ is
uniformly continuous over $\t \in \tdom$ for $k = 3,4$, $\delta_2$ can be chosen to
make the expressions in the preceding display to satisfy Lemma 6 uniformly in $\t \in \tdom$.

\textbf{Lemma 7} simply relies on Lemma 4.

\textbf{Lemma 8} follows from Lemma 4, again using uniform continuity
of $\betagrad[3] \mathcal{L}_\t(\betav)$ and $\betagrad[4] \mathcal{L}_\t(\betav)$.

\textbf{Lemmas 9, 10, and 11} mostly rearrange terms, relying again on Lemma 4.

The remainder of the proof does not require modification, since we don't
need to consider data dependence in $\phi(\betav)$. 

\end{proof}

\end{lem}


\begin{lem}\label{lem:bayesij_cumulant}
Let \cref{assu:bayesij_uniform} hold.  Consider the posterior cumulant
$\cumulant{\postt}{a(\betav), b(\betav), c(\betav)}$, where
$a(\cdot)$, $b(\cdot)$, and $c(\cdot)$ do not depend on $\t$.  Assume that
each of the functions $a(\betav)$, $b(\betav)$, $c(\betav)$,
their pairwise products $a(\betav) b(\betav)$, $a(\betav) c(\betav)$, and
$c(\betav)$, and the three-way product $a(\betav) b(\betav) c(\betav)$ are
all third-order BCLT okay (\parencite{giordano:2024:bayesij} Definition 2).  

Then, for any probability $0 < \rho < 1$ there 
exists an $N^*$ and $C^*$ such that $\nsur > N^*$ implies that
$$
\nsur^2  \sup_{\t \in \tdom}
\abs{
\cumulant{\postt}{a(\betav), b(\betav), c(\betav)}
} \le C^*
$$
with probability at least $\rho$.
\end{lem}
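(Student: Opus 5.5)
We will expand the third cumulant into posterior expectations of products centered at the MAP, and apply \cref{lem:bayesij_uniform} to each expectation. The same strategy appears in the proof of \cref{lem:cov_ij}. For each $\t$, write $\betavhat_\t$ for the MAP and define the centered functions
$$
\abar_\t(\betav) := a(\betav) - a(\betavhat_\t), \qquad
\bbar_\t(\betav) := b(\betav) - b(\betavhat_\t), \qquad
\cbar_\t(\betav) := c(\betav) - c(\betavhat_\t).
$$
Let $\da_\t := \expect{\postt}{\abar_\t(\betav)}$, and define $\db_\t$ and $\dc_\t$ analogously. Since cumulants are invariant to constant shifts, the multilinear expansion gives
$$
\cumulant{\postt}{a,b,c} = \expect{\postt}{\abar_\t\bbar_\t\cbar_\t}
 - \da_\t \expect{\postt}{\bbar_\t\cbar_\t}
 - \db_\t \expect{\postt}{\abar_\t\cbar_\t}
 - \dc_\t \expect{\postt}{\abar_\t\bbar_\t}
 + 2\,\da_\t\db_\t\dc_\t .
$$
It therefore suffices to show that each of the five terms is $\nsur^{-2}C^*$ uniformly in $\t$, with probability at least $\rho$.

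\textbf{Step 1 (orders of the pieces).} Apply \cref{lem:bayesij_uniform} to $a$, $b$, and $c$. Because $\abar_\t(\betavhat_\t)=0$, the expansion gives $\sup_\t|\da_\t| \le C\nsur^{-1}$. The constant involves $\sup_\t$ of $\betagrad \phi$ and $\betagrad[2]\phi$ at $\betavhat_\t$, of $\infohat_\t^{-1}$, and of $\betagrad[3]\hat{\mathcal L}_\t$. These are bounded uniformly on the high-probability event of (modified) Lemma~4, because $\betavhat_\t$ stays in $\betaball$ and $\inf_\t\lambda_{\info,\t}>0$. Apply the lemma next to the pairwise products. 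These vanish to second order at the MAP, so their leading term is $\nsur^{-1}\tfrac12\betagrad a^\trans\infohat_\t^{-1}\betagrad b$, giving $\sup_\t|\expect{\postt}{\bbar_\t\cbar_\t}|\le C\nsur^{-1}$. The cross terms are therefore $O(\nsur^{-2})$ uniformly, and the last term is $O(\nsur^{-3})$. Each application holds with probability at least $\rho/8$ (say). A union bound gives a common event of probability at least $\rho$.

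\textbf{Step 2 (the triple product).} For the centered product $\abar_\t\bbar_\t\cbar_\t$, the value, gradient, and Hessian all vanish at $\betavhat_\t$. Hence the $\nsur^{-1}$ term in \cref{eq:ijthm} is zero, and the bound gives $|\expect{\postt}{\abar_\t\bbar_\t\cbar_\t}|\le C^*\nsur^{-2}$ uniformly. Combining this with Step 1 gives the claim with $C^*$ equal to the sum of the constants.

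\textbf{Main obstacle.} The main obstacle is that the centered functions depend on $\t$ through $\betavhat_\t$ (and on the data), while \cref{lem:bayesij_uniform} requires $\phi$ not to depend on $\t$ or $\y$. To handle this, we will expand the products polynomially, for example
$$
\abar_\t\bbar_\t\cbar_\t = abc - a(\betavhat_\t)\,bc - \cdots ,
$$
so that only $\t$-free functions enter the expectations, and the assumed BCLT-okayness of the products is used for exactly this. We then apply \cref{lem:bayesij_uniform} to each of $a,b,c,ab,ac,bc,abc$, and recombine the expansions algebraically with the scalar coefficients $a(\betavhat_\t)$ and so on. Those coefficients are uniformly bounded on the good event by continuity on $\betaball$. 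The recombined leading and $\nsur^{-1}$ terms are exactly the Taylor coefficients of the centered functions at $\betavhat_\t$. Their cancellation should therefore be an algebraic identity, with the error terms staying $O(\nsur^{-2})$ uniformly.
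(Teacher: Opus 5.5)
Your proposal is correct and follows essentially the paper's argument: the same MAP-centered expansion of the third cumulant, with \cref{lem:bayesij_uniform} giving $O(\nsur^{-1})$ for the $\Delta$ terms and the pairwise centered expectations and $O(\nsur^{-2})$ for the triple product. Your version is also more careful in two places. Your orders are stated more accurately than in the paper's write-up: the pairwise terms are only $O(\nsur^{-1})$, and the triple product's Hessian, not just its gradient, must vanish. In addition, your reduction of the $t$-dependent centered products to the $t$-free functions $a,b,c,ab,ac,bc,abc$, using linearity of the expansion, supplies a step the paper skips by applying the expansion to $\bar a \bar b$ and $\bar a \bar b \bar c$ directly.

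The only slip is in Step 1. For the union bound to give a common event of probability $\rho$, each application must hold with probability at least $1-(1-\rho)/8$, not $\rho/8$. \Cref{lem:bayesij_uniform} permits this, since it holds for any target probability.
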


\begin{proof}
First, for each of $\a$, $\b$, and $\c$, define
$$
\begin{aligned}
\da:={}& \a(\betavhat_\t) - \expect{\postt}{\a(\betav)} \\
\abar(\betav) :={}& \a(\betav) - \a(\betavhat_\t),
\end{aligned}
$$
and so on.  Note that $\expect{\post\t}{\abar} = - \da$.  Then
$$
\begin{aligned}
\MoveEqLeft
\cumulant{\postt}{a(\betav), b(\betav), c(\betav)} 
={}\\&
\expect{\postt}{
    \left(\abar + \da\right)
    \left(\bbar + \db\right)
    \left(\cbar + \dc\right)
}
=\\{}&
\expect{\postt}{\abar \bbar \cbar} +
\\{}&
    \expect{\postt}{\abar \bbar} \dc +
    \expect{\postt}{\abar \cbar} \db +
    \expect{\postt}{\bbar \cbar} \da -
 2 \da \db \dc.
\end{aligned}
$$

Let us first consider $\nsur \da$.  Applying \cref{lem:bayesij_uniform} with
constant $C_\da$,
$$
\begin{aligned}
-\nsur^{-1} C_\da -
\frac{1}{2} \betagrad[2] \phi(\betavhat_t) \infohat_t^{-1} -
        \frac{1}{6} \betagrad \phi(\betavhat_t) 
            \betagrad[3] \hat{\mathcal{L}}_\t(\betavhat_t) \hat{\mathcal{M}}
\le\\{}
\nsur \da
\le\\{}
\nsur^{-1} C_\da -
\frac{1}{2} \betagrad[2] \phi(\betavhat_t) \infohat_t^{-1} -
        \frac{1}{6} \betagrad \phi(\betavhat_t) 
            \betagrad[3] \hat{\mathcal{L}}_\t(\betavhat_t) \hat{\mathcal{M}}.
\end{aligned}
$$

By the modification \textcite{giordano:2024:bayesij} Lemma 3 
given in the proof of \cref{lem:bayesij_uniform}, we have
that $\sup_\t \norm{\betahat_\t}_2$.  From this, and continuity
of $\phi$ and its derivatives, we have
$$
\sup_{\t \in \tdom} \norm{\phi(\betavhat_t)} = \ordp{1}
\quad\textrm{,}\quad
\sup_{\t \in \tdom} \norm{\betagrad \phi(\betavhat_t)} = \ordp{1}
\quad\textrm{,}\quad
\sup_{\t \in \tdom} \norm{\infohat_\t^{-1}}_{op} = \ordp{1}.
$$
Similarly, by the modification of
Lemma 2 in \cref{lem:bayesij_uniform},
together with \cref{assu:bayesij_uniform}
giving uniform continuity of $\mathcal{L}_\t$,
$$
\sup_{\t \in \tdom} \hat{\mathcal{L}}_\t(\betavhat_t) = \ordp{1}.
$$
It follows that
$\sup_{\t \in \tdom} \abs{\nsur^{-1} \da} = \ordp{1}$.  Analogous
results hold for $\db$ and $\dc$.

Similarly, note that $\abar(\betahat_\t) \bbar(\betahat_\t) = 0$,
and $\fracat{\partial \abar(\betav) \bbar(\betav)}{\partial \betav}{\betav=\betahat_\t} = 0$,
from which we have
$$
\nsur^2 \sup_{\t \in \tdom}\abs{\abar \bbar} = \ordp{1},
$$
with analogous results for $\abar \cbar$ and $\bbar \cbar$.  

Finally, $\abar(\betahat_\t) \bbar(\betahat_\t) \cbar(\betahat_\t) = \zerov$,
and 
$$
\fracat{\partial \abar(\betav) \bbar(\betav) \cbar(\betav)}
       {\partial \betav}{\betav=\betahat_\t} = \zerov
$$
so that
$$
\nsur^2 \sup_{\t \in \tdom}\abs{\abar \bbar \cbar} = \ordp{1}.
$$
Combining gives the desired result.
\end{proof}

}


\def\deltardom{(0, \deltamax) \times \rset}
\subsection{Showing that the regressor balance satisfies the conditions}

By \cref{lem:bayesij_cumulant}, it only remains to show that $\postd$ 
satisfies \cref{assu:bayesij_uniform}, where $t = (\delta, \r)$ and
$\tdom = \deltardom$ for some sufficiently small $\deltamax$.

\begin{lem}\label{lem:bayesij_balance}
Let \cref{assu:ij} hold for the original model, and let $\r \in \rset$
(\cref{assu:balance}). Recalling \cref{defn:balance}, define
$$
\ell(\betav; \delta \r) := \expect{\psur(\y, \x)}{\ell(\y \vert \x, \betav, \delta \r)}.
$$
Then $\ell(\betav; \delta \r)$ satisfies \cref{assu:mle} uniformly over
$\delta, \r \in \deltardom$ in the following sense. Define
$$
\begin{aligned}
\betastar_{\delta \r} :={} \argmax{\betav} \ell(\betav; \delta \r)
\quad\textrm{and}\quad
\info_{\delta \r} :={}
    \fracat{\ell(\betav; \delta \r)}
    {\partial \betav \partial \betav^\trans}{\betastar_{\delta \r}}.
\end{aligned}
$$
Let $\lambda_{\delta \r}$ denote the minimum eigenvalue of $\info_{\delta \r}$.
Then there exists a $\deltamax$ such that $\betastar_{\delta \r}$ is
unique for all $\delta, \r \in \deltardom$, and that
$$
\inf_{\delta, \r \in \deltardom} \lambda_{\info, \delta \r} \ge \lambda_\info  > 0.
$$
Furthermore, $\ell(\betav; \delta \r)$ and its derivatives are uniformly
continuous as a function of $\betav$ for $\delta, \r \in \deltardom$.
\end{lem}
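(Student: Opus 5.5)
The key structural fact is that the perturbation enters the population log likelihood linearly and separably:
$$
\ell(\betav; \delta \r) = \ell(\betav) + \delta\, \betav^\trans \expect{\psur(\x)}{\r(\x) \x} =: \ell(\betav) + \delta\, \betav^\trans c_\r .
$$
Every derivative of order two or higher is therefore independent of $(\delta, \r)$. In particular, $\betagrad[2] \ell(\betav; \delta \r) = -\expect{\psur(\x)}{\Agrad{2}(\betav^\trans \x) \x \x^\trans}$, which is concave in $\betav$ everywhere. By \cref{assu:balance} and Cauchy--Schwarz,
$$
\sup_{\r \in \rset} \norm{c_\r}_2 \le \rmax ,
$$
so the perturbation of the gradient, $\delta c_\r$, is bounded by $\deltamax \rmax$ uniformly over $\deltardom$. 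I would prove the lemma using only these two facts together with \cref{lem:ij}.

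\textbf{Step 1 (existence, uniqueness, and location of the maximizer).} By \cref{lem:ij}, on $\betaball$ the function $\ell$ is $C^4$ with derivatives obtained by differentiating under the integral. Since $\betagrad[2]\ell$ is continuous and $\info$ is positive definite, I shrink $\Delta$ so that $-\betagrad[2]\ell(\betav) \succeq \frac{1}{2}\lambda_\info \id$ on $\betaball$, where $\lambda_\info$ is the minimum eigenvalue of $\info$.

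\begin{itemize}
\item \emph{Inside the ball.} The map $\betav \mapsto \betagrad \ell(\betav) + \delta c_\r$ is then strongly monotone on $\betaball$. At $\betastar$ it equals $\delta c_\r$. Hence, if $\deltamax \rmax < \frac{1}{2}\lambda_\info \sqrt{\Delta}$, the gradient points strictly inward on $\partial\betaball$. A standard argument (the strongly concave function restricted to the ball attains its maximum at an interior point) gives a critical point $\betastar_{\delta \r}$ with $\norm{\betastar_{\delta\r} - \betastar}_2 \le 2\deltamax\rmax/\lambda_\info$.
\item \emph{Outside the ball.} Global uniqueness follows by mirroring \cref{lem:loglik_bounded}. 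Since $\ell$ is concave on $\betadom$, along any ray from $\betastar$ the value outside $\betaball$ is at most the value on $\partial\betaball$. The strict gap
$$
\gamma := \ell(\betastar) - \sup_{\partial \betaball} \ell > 0
$$
dominates the linear perturbation, whose change is at most $2\deltamax \rmax \sqrt{\Delta}$ on the ball, once $\deltamax$ is small. Concavity of $\ell(\cdot;\delta\r)$ then shows it is strictly smaller outside $\betaball$ than at $\betastar_{\delta\r}$.
\end{itemize}
Strict concavity on the ball gives uniqueness, so $\betastar_{\delta\r}$ is the unique argmax.

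\textbf{Step 2 (uniform eigenvalue bound).} Since $\info_{\delta\r} = -\betagrad[2]\ell(\betastar_{\delta\r})$ and $\betastar_{\delta\r}\in\betaball$, Step 1 gives $\lambda_{\info,\delta\r} \ge \frac{1}{2}\lambda_\info > 0$ uniformly over $\deltardom$. If one instead wants the bound stated with the constant $\lambda_\info$ itself, rename that constant accordingly.

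\textbf{Step 3 (uniform continuity).} The derivatives of order $k\ge 2$ do not depend on $(\delta,\r)$, so uniform continuity in $(\delta,\r)$ is trivial for them. For $k=0,1$ the differences are
$$
(\delta - \delta')\betav^\trans c_\r + \delta'\betav^\trans(c_\r - c_{\r'}) \quad\textrm{and}\quad (\delta-\delta')c_\r + \delta'(c_\r - c_{\r'}).
$$
These are uniformly small on $\betaball$ in $\delta$ because $\norm{c_\r}_2 \le \rmax$ and $\betaball$ is bounded. Continuity in $\r$ is measured in $L_2(\psur)$ via $\norm{c_\r - c_{\r'}}_2 \le \norm{\x(\r-\r')}_{L_2}$, the natural pseudometric for the Donsker class.

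\textbf{Main obstacle.} I expect Step 1's global uniqueness to be the delicate part, since it needs the boundary gap $\gamma$ to beat the perturbation uniformly. The concavity argument from \cref{lem:loglik_bounded} handles it, provided $\deltamax$ is chosen after $\Delta$.
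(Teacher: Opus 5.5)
Your proposal is correct and shares the paper's skeleton. The perturbation only adds the linear term $\delta\betav^\trans c_\r$ with $\abs{\delta}\norm{c_\r}_2\le\deltamax\rmax$. One then shrinks $\Delta$ until $-\betagrad[2]\ell\succeq\frac{1}{2}\lambda_\info\id$ on $\betaball$ and shows $\betastar_{\delta\r}$ stays inside $\betaball$. Where you differ is the key localization step.

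The paper writes the first-order condition as $\betagrad\ell(\betastar_{\mathbf{v}})+\mathbf{v}=\zerov$ and invokes the inverse function theorem to get continuity of $\mathbf{v}\mapsto\betastar_{\mathbf{v}}$ near $\zerov$. You instead use strong concavity on $\betaball$ to show that the perturbed gradient points strictly inward on $\partial\betaball$ once $\deltamax\rmax<\frac{1}{2}\lambda_\info\sqrt{\Delta}$.

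Your route buys three things. It gives an explicit threshold for $\deltamax$. It gives the quantitative bound $\norm{\betastar_{\delta\r}-\betastar}_2\le 2\deltamax\rmax/\lambda_\info$. It also gives an explicit proof of global uniqueness via concavity of $\ell(\cdot;\delta\r)$ on $\betadom$. By contrast, the inverse function theorem yields only local uniqueness of the critical point, and the paper leaves global uniqueness of the population maximizer implicit; it carries out the concavity argument only for the empirical objective, via the modification of \cref{lem:loglik_bounded}. The paper's choice $\deltamax\le\min(\Delta,\Delta')/\rmax$ also tacitly requires $\Delta'$ to be shrunk so that the inverse-function neighborhood maps into $\betaball$.

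Your boundary-gap argument is more than you need. Concavity already makes the interior critical point a global maximizer, and strict concavity on $\betaball$, together with convexity of the set of maximizers, rules out a second one. Both proofs in fact deliver the constant $\frac{1}{2}\lambda_\info$, exactly as you remark.

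In Step 3 you prove continuity in $(\delta,\r)$, uniformly over $\betav\in\betaball$. The paper instead means uniform continuity in $\betav$ with a modulus independent of $(\delta,\r)$. That reading also follows in one line from your decomposition: derivatives of order at least two do not depend on $(\delta,\r)$, and the lower-order ones are shifted by $\delta\betav^\trans c_\r$ (Lipschitz in $\betav$ with constant at most $\deltamax\rmax$) and by the constant $\delta c_\r$.

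Finally, the paper's proof additionally constructs $(\delta,\r)$-free integrable envelopes for $\ell(\cdot;\delta\r)$ and its gradient, plus a uniform empirical gap (Item 6). These are needed to verify \cref{assu:bayesij_uniform} downstream, but they are not part of this lemma's stated conclusion.
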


Expanding,
$$
\ell(\betav; \delta \r) = 
\ell(\y \vert \x, \betav) + \delta \expect{\psur(\x)}{\r(\x) \x^\trans} \betav.
$$
Since $\sup_{\delta, \r \in \deltardom}
\abs{\delta \expect{\psur(\x)}{\r(\x) \x^\trans}} \le \deltamax \rmax$
by \cref{assu:balance}, $\ell(\betav; \delta \r)$ and its derivatives are uniformly
continuous as a function of $\betav$ for $\delta, \r \in \deltardom$
by \cref{lem:ij} (via the dominated convergence theorem).

We now adapt \cref{lem:ij} to $\ell(\betav; \delta \r)$. First,
for $k \ge 2$, $\betagrad[k] \ell(\betav; \delta \r) = \betagrad[k] \ell(\betav)$,
so \cref{eq:ulln} holds without modification for $k \ge 2$.  

We now adapt the first and zeroth derivatives. Note that by Jensen's inequality,
\begin{align}
\sup_{\r \in \rset}
\norm{\expect{\psur(\x)}{\x \r(\x)}}_2^2
\le \sup_{\r \in \rset} \expect{\psur(\x)}{\norm{\x \r(\x)}_2^2}
\le \rmax^2.
\label{eq:rmaxbound}
\end{align}
Similarly, for $\f(\betav) \ge 0$,
\begin{align}
\MoveEqLeft
\expect{\psur(\x,\y)}{\sup_{\betav \in \betaball} \f(\betav)}
={}
\nonumber\\&
\expect{\psur(\x,\y)}{\sup_{\betav \in \betaball} 
\f(\betav) (\ind{\f(\betav) \ge 1} + \ind{\f(\betav) < 1})}
\nonumber\\\le{}&
\expect{\psur(\x,\y)}{
    \sup_{\betav \in \betaball} 
    \f(\betav)^2 \ind{\f(\betav) \ge 1}} + 
\expect{\psur(\x,\y)}{
    \sup_{\betav \in \betaball} 
    1 \ind{\f(\betav) < 1}}
\nonumber\\\le{}&
\expect{\psur(\x,\y)}{
    \sup_{\betav \in \betaball} 
    \f(\betav)^2 } + 1. 
\label{eq:squnifbound}
\end{align}

Then
$$
\begin{aligned}
\betagrad \ell(\betav; \delta \r) ={}& 
    \betagrad \ell(\betav) + 
    \delta \expect{\psur(\x)}{\r(\x) \x^\trans} 
\Rightarrow \\
\norm{\betagrad \ell(\betav; \delta \r)}_2^2 ={}&
\norm{\betagrad \ell(\betav)}_2^2 + 
\delta^2 \norm{\expect{\psur(\x)}{\r(\x) \x^\trans}}_2^2 +
2 \delta \expect{\psur(\x)}{\r(\x) \x^\trans} \betagrad \ell(\betav) 
\\\le{}&
    \norm{\betagrad \ell(\betav)}_2^2 +
    \deltamax^2 \rmax^2 + 2 \deltamax \rmax 
    \norm{\betagrad \ell(\betav)}_2.
\end{aligned}
$$
The right hand side of the preceding display does not depend on 
$\r$ or $\delta$, and has finite expectation under $\psur(\x, \y)$ by
\cref{lem:ij} and \cref{eq:rmaxbound,eq:squnifbound}, and so we have 
identified $M'_1(\x, \y)$ such that
$$
\sup_{\delta, \r \in \deltardom}
\sup_{\betav \in \betaball} \norm{\betagrad[k] \ell(\betav; \delta \r)}_2^2 
    \le \tilde{M}_1(\x, \y)
    \quad\textrm{and}\quad\expect{\psur(\x, \y)}{\tilde{M}_1(\x, \y)} < \infty.
$$
Similarly,
$$
\begin{aligned}
\ell(\betav; \delta \r)^2 ={}&
    \ell(\betav)^2 + 
    \delta^2 \trace{
        \expect{\psur(\x)}{\r(\x) \x} 
        \expect{\psur(\x)}{\r(\x) \x^\trans} \betav \betav^\trans} +
\\{}&
    2 \delta \ell(\betav) \expect{\psur(\x)}{\r(\x) \x^\trans} \betav
\\\le{}&
\ell(\betav)^2 +
\deltamax^2 \rmax^2 \norm{\betav}_2^2 + 2 \ell(\betav) \rmax \norm{\betav}.
\end{aligned}
$$
As before, we have identified $\tilde{M}(\x, \y)$ not depending on 
$\r$ or $\delta$ satisfying
$$
\sup_{\delta, \r \in \deltardom}
\sup_{\betav \in \betaball} \ell(\betav; \delta \r)_2^2 
    \le \tilde{M}(\x, \y)
    \quad\textrm{and}\quad\expect{\psur(\x, \y)}{\tilde{M}(\x, \y)} < \infty.
$$

{

We now show that we can control $\inf_{\delta, \r \in \deltardom} \lambda_{\delta \r}$.
Let $\minev{\cdot}$ denote the minimum eigenvalue of the argument. Now, we have that
$$
\lambda_{\delta \r} = \minev{\info_{\delta \r}}  =
\minev{
\expect{\psur(\x)}{\Agrad{2}(\x^\trans {\betastar_{\delta \r}}) \x^{\otimes 2}}
}.
$$
Since by \cref{assu:ij}, $\minev{\info} = \lambda_{0 \cdot \r} > 0$,
and by \cref{lem:ij} the map
$$
\betav \mapsto 
\minev{
\expect{\psur(\x)}{\Agrad{2}(\x^\trans \betav) \x^{\otimes 2}}
}
$$
is continuous at $\betastar$, we can choose $\Delta > 0$ small enough
that 
$$
\betav \in \betaball
\quad \Rightarrow \quad
\minev{
\expect{\psur(\x)}{\Agrad{2}(\x^\trans \betav) \x^{\otimes 2}}
} \ge \frac{1}{2} \lambda_\info > 0.
$$ 
So it suffices to take such a $\Delta$ and 
show that there exists $\deltamax$ sufficiently small that
$$
\betastar_{\delta \r} \in \betaball
\quad\textrm{for all}\quad \delta, \r \in \deltadom.
$$
Since $\ell(\betav; \delta \r)$ is continuously differentiable, $\betastar$
is a solution to
$$
\betagrad \ell(\betastar_{\delta \r}; \delta \r) = 
\betagrad \ell(\betastar_{\delta \r}) + \delta \expect{\psur}{\r(\x) \x} =
\zerov.
$$
\def\v{\mathbf{v}}

Note that 
$\norm{\delta \expect{\psur}{\r(\x) \x}}_2 \le \deltamax \rmax$ by
\cref{assu:balance}, so we can define $\betastar_{\v}$ as the solution
to
$$
\betagrad \ell(\betastar_{\v}) + \v = \zerov
\textrm{ for }\norm{\v}_2 \le \deltamax \rmax.  
$$
Since $\v = \zerov$ corresponds to $\betastar$, at which 
$\betagrad[2] \ell(\betastar_{\v})$ is positive definite,
the inverse function theorem implies that
there exists a neighborhood $\{\v: \norm{\v}_2 < \Delta' \}$ such
that the map $\v \mapsto \betastar_{\v}$ is continuous
\parencite[Theorem 3.3.2]{krantz:2002:implicit}.  Taking
$$
\deltamax \le 
\frac{\mathrm{min}\left(\Delta, \Delta'\right)}{\rmax}
$$
thus suffices to guarantee that
$$
\inf_{\delta, \r \in \deltardom} \lambda_{\delta \r} \ge \frac{1}{2} \lambda_\info > 0.
$$
}

{

\def\betabound{\partial \betaball(\delta \r)}
\def\ellhat{\hat{\ell}_{\nsur}}

We have one more condition of \cref{assu:bayesij_uniform} to satisfy,
Item 6.  A small modification of the proof of \cref{lem:loglik_bounded}
gives that, for sufficiently large $\nsur$, there exists a $\gamma > 0$
such that   
$$
\sup_{\delta, \r \in \deltardom} 
\left( 
    \ellhat(\betastar_{\delta \r}; \delta \r)  -
    \sup_{\betav \in \betabound} \ellhat(\betav; \delta, \r) \right)
> \gamma > 0,
$$
where here $\betabound = \{\betav: \norm{\betav - \betastar_{\delta \r}}_2 \le \Delta\}$.
This follows by a uniform law of large numbers applied to both $\betav$ and
to $\delta, \r$. The remainder of the proof of \cref{lem:loglik_bounded}
is unchanged, giving
$$
\sup_{\delta, \r \in \deltardom} 
\sup_{\betav \in \betadom \setminus \betaball} \left(
    \meansur \ell(\y_i \vert \x_i, \betastar_{\delta \r}; \delta \r) -
    \meansur \ell(\y_i \vert \x_i, \betav; \delta \r)
\right) \ge \gamma > 0.
$$

}

It follows that \cref{assu:bayesij_uniform} is satisfied,
and \cref{lem:bayesij_cumulant} gives the desired result.

\section{Generating perturbed binary datasets}\label{app:binary}

\def\mtil{\tilde{m}}
\def\mtiltil{\tiltil{m}}
\def\Mtil{\tilde{\bm{M}}}
\def\Mtiltil{\tiltil{\bm{M}}}
\def\u{u}
\def\unif#1{\mathrm{Uniform}\left(#1\right)}



In this section, we describe one method for constructing randomized binary
datasets that approximately represent a particular continuous perturbation.  How
to do so optimally seems to be an interesting topic for future work.  We then
rerun MCMC for a particular draw of the binary dataset to see whether the linear
prediction given by MrPlew can extrapolate to analytically meaningful
differences in the MrP estimates.  Our results are mixed, and so we strongly
recommend that balance and subgroup contribution plots be used as computationally
efficient but approximate ways to explore the space of potentially problematic
datasets, which are then verified by MCMC.


\subsection{Constructing a binary response vector}

Recall that our goal, by \cref{eq:ytil_expectation}, is to identify
a binary random variable which we will call $\ytiltil$, such that
\begin{align}
\expect{\p(\ytiltil \vert \x)}{\ytiltil} \approx
    \expect{\p(\ytil \vert \x)}{\ytil} =
    \expect{\p(\y \vert \x)}{\y} + \delta \r.
    \label{eq:binary_goal}
\end{align}
Our motivation for doing so will be that, if $\muhat[\mrp](\Ytil) -
\muhat[\mrp](\Y)$ is large, so that $\r$ appears to be ``imbalanced'' according
to \cref{thm:balance}, then we hope that the realizable
$\muhat[\mrp](\Ytiltil) - \muhat[\mrp](\Y)$ may be ``imbalanced'' as well.  We
are further motivated to find a $\Ytiltil$ that is ``close'' to $\Y$ so that
the approximation \cref{eq:balance_linear} remains approximately valid for
$\muhat[\mrp](\Ytiltil) - \muhat[\mrp](\Y)$.

One way to proceed is via a perturbation to the parametric bootstrap
\parencite{efron:1994:bootstrap}.  We begin
with a preliminary guess $\mhat_i := \mhat(\x_i) \approx
\expect{\p(\y\vert\x)}{\y}$ for each $i$ in the survey. If the guess
$\mhat(\cdot)$ is poor, then the perturbed data will not bear the same
relationship to the regressors as would a genuine alternative draw of the data.
However, we emphasize that the role of $\mhat(\x_i)$ is to \emph{define a data
perturbation of interest}, rather than to perform formal inference. A reasonable
choice might be to take $\betavhat := \expect{\post}{\betav}$ and $\mhat(\x_i) =
\m(\betavhat^\trans \x_i)$.  Two other extremes would be to take $\m(\x_i) =
\meansur \y_i$, or to take $\m(\x_i) = \y_i$.  As we will see below, these
extremes represent tradeoffs in their ability to plausibly reproduce the change
\cref{eq:ytil_expectation} on one hand, and to produce estimates that are close
to $\Y$ on the other.

Supposing that $\expecty{\y} = \mhat(\x)$, we wish to draw a binary $\ytiltil$
that is ``not far'' from $\y$, but which has expectation $\mhat(\x) + \delta
\r$.  One way to do that is to couple $\y$ and $\ytiltil$ with a common uniform
random variable $\u \sim \unif{0,1}$: $\y = \ind{\u < \mhat(\x)}$ and $\ytiltil
= \ind{\u < \mhat(\x) + \delta \r}$.  Under this construction,
$\expect{}{\ytiltil} = \mhat(\x) + \delta \r$ and $\expect{}{\y} = \mhat(\x)$ as
desired, as long as $\mhat(\x) + \delta \r \in [0,1]$ and $\mhat(\x) \in
[0,1]$.  Given $\y$ and $\x$, the conditional distribution of $\u_i$ is then
$$
\p(\u \vert \y, \x) =
\begin{cases}
\unif{0,\mhat(\x)} & \textrm{ if }\y = 1\\
\unif{\mhat(\x), 1} & \textrm{ if }\y = 0.
\end{cases}
$$
Then if we draw $\u \sim \p(\u \vert \y, \x)$ and set $\ytiltil = \ind{\u \le
\mhat(\x) + \delta \r}$, then $\ytiltil$ is highly correlated with $\y$
marginally, and satisfies \cref{eq:ytil_expectation} when $\mhat(\x) =
\expect{\p(\y \vert \x)}{\y}$. The procedure, which would be repeated for each
$i$, is shown in \Cref{alg:binary}.

\begin{algorithm}[htbp]
  \SetKw{Assert}{assert}
  \caption{Draw perturbed binary data}\label{alg:binary}
  \KwIn{Estimate $\mhat_i \approx \expect{\p(\y \vert \x_i)}{\y}$,
        perturbation $\delta \r_i$, original response $\y_i$}
  \KwOut{Vector of $\ytil_i \in \{0, 1\}$ with
  $\expect{\p(\ytil | \x_i)}{\ytil} \approx \expect{\p(\y | \x)}{\y_i} + \delta \r_i$}

  \Assert{$\mhat_i + \delta \r_i \in [0,1]$}

  \tcp{Draw $u_i \sim p(u \mid y_i)$}
  \eIf{$y_i = 1$}{
    $u_i \sim \mathrm{Uniform}(\hat{m}_i, 1)$\;
  }{
    $u_i \sim \mathrm{Uniform}(0, \hat{m}_i)$\;
  }
  \tcp{Draw $\ytiltil_i \sim p(\ytiltil \mid y_i)$}
  $\ytiltil_i \gets \ind{\u_i \ge \mhat_i + \delta \r_i}$
\end{algorithm}
Note also that \cref{alg:binary} is random.  In general, for a given
$\mhat(\cdot)$, there are many binary datasets that are consistent with a
particular mean perturbation.

We note that choices of $\mhat(\x_i)$ like $\meansur \y_i$ that ``underfit'' the
data permit larger perturbations, since $\delta$ may be larger before
$\mhat(\x_i)$ leaves $[0,1]$, but at the risk of failing to reproduce the
intended relationship between $\x_i$ and $\ytiltil_i$.  On the other hand,
``overfitting'' the data, say by choosing $\mhat(\x_i) = \y_i$ perfectly
reproduces the relationship between $\x_i$ and $\y_i$, but there may be no
non-zero $\delta$ satisfying $\mhat(\x_i) + \delta \r_i \in [0,1]$ --- for
example, if $\r_i > 0$ and $\y_i = 1$, then we must have $\delta = 0$.
Ultimately, the tradeoff between fidelity to the data generating distribution
and the ability to produce perturbed datasets is a judgement call that defines
the robustness question that is being asked.

\section{Additional Experimental Results}\label{app:experiments}
\subsection{Supplementary Graphs}

\LaxphilipsBalanceInteractions{}

\StoriesBalanceInteractions{}


\BootstrapPlot{}

\PoolingComparisonPlot{}

\subsection{Extrapolating to assess non-local robustness}
\label{app:non_local_robustness}

Following the setup in \Cref{sec:non_local_robustness}, we investigate how well
the MrPlew balance checks assess non-local robustness after extrapolating along
a chosen dimension. 
For our given outcome prediction $\mhat(\cdot)$, we investigate
\Cref{eq:ytil_expectation_local_approximation} by re-running MCMC. Specifically, we choose
a range of $\delta$, up to the largest possible step size
$\delta_{\mathrm{max}}$ that result in  $\mhat_i + \delta \r_i \in [0,1]$ for
all $i$.  At the most extreme points, which effectively set all the responses
in the given category to $1$, we have changed $\LaxPctflipped{}$\% of the
responses in the Same-Sex Marriage and $\AlexanderPctflipped{}$\% of the
responses in the Name Change dataset. For each $\delta$ in each analysis, we ran
\cref{alg:binary} from \cref{app:binary} to produce a draw $\Ytiltil$, and then
re-ran MCMC to compute $\muhat[\mrp](\Ytiltil)$.

\AlexanderRefitPlot{}

\LaxRefitPlot{}


\subsection{Nonlinearity in Same-Sex Marriage analysis}\label{sec:lax_nonlinearity}

In this section we briefly demonstrate that posterior is in fact
locally linear in \cref{fig:laxphilipsrefitplot}, but that
even the least-perturbed binary vectors leave the domain of linearity 
quickly due to a large degree of posterior curvature.

We selected the smallest $\Ytiltil$ not equal to $\Y$ from
\cref{fig:laxphilipsrefitplot}.  This $\Ytiltil$ corresponds to the second
datapoint from the left.  We then define the MrP estimate
\begin{align}
\muhat[\mrp](\epsilon) := \muhat[\mrp]((1 - \epsilon) \Y + \epsilon \Ytiltil).
\label{eq:mrp_epsilon}
\end{align}
Both $\Y$ and $\Ytiltil$ are binary vectors, and \cref{eq:mrp_epsilon}
defines a smooth path between them as $\epsilon$ varies from $0$ to $1$.  For
sufficiently small $\epsilon$ we can estimate $\muhat[\mrp](\epsilon)$ with
self-normalized importance sampling, and we consider $\epsilon$ for which there
are at least $1000$ effective samples in the importance-reweighted posterior.

\Cref{fig:isplot} shows the path of both the MrPlew linear approximation
to $\muhat[\mrp](\epsilon) - \muhat[\mrp]$ and the importance sampling
posterior estimate.  The local linearity is evident for very small
$\epsilon$, but the posterior rapidly deviates from the linear approximation
long before $\epsilon = 1$.

\ImportanceSamplingPlot{}

Though not shown, a randomly selected binary response vector that changes the
same number of entries as $\Ytiltil$ exhibits very little curvature.   There may
be something about the fact that we are perturbing the posterior in a direction
of imbalance that is causing the severe non-linearity.  Theoretically and
practically understanding why the Same-Sex Marriage analysis exhibits such
strong curvature but the Name Change analysis does not remains important future
work.






\end{appendices}

\end{document}